\documentclass[sigconf, nonacm]{acmart}

\newcommand\vldbdoi{XX.XX/XXX.XX}
\newcommand\vldbpages{XXX-XXX}
\newcommand\vldbvolume{14}
\newcommand\vldbissue{1}
\newcommand\vldbyear{2020}
\newcommand\vldbauthors{\authors}
\newcommand\vldbtitle{\shorttitle} 
\newcommand\vldbavailabilityurl{URL_TO_YOUR_ARTIFACTS}
\newcommand\vldbpagestyle{plain} 
\usepackage{pgffor}
\usepackage{multirow}
\usepackage{booktabs}
\usepackage{adjustbox}
\usepackage{weiwAlgorithm}
\usepackage{graphicx}
\usepackage{algorithmic}
\usepackage{multirow}
\usepackage{subcaption}
\usepackage{tabularx}
\usepackage{diagbox}
\usepackage{url}
\usepackage{enumitem}
\usepackage{caption}
\usepackage{amsmath}
\usepackage{array}
\newcommand{\myparagraph}[1]{\vspace{1mm} \noindent \textbf{#1}.}

\newcommand{\ie}{{i.e.,}\xspace}
\newcommand{\eg}{{e.g.,}\xspace}
\newcommand{\etal}{et al.\xspace}

\newcommand{\fixme}[1]{{\color{red}#1}}
\newcommand{\myblue}[1]{{\color{black} #1}\xspace}
\newcommand{\nonl}{\renewcommand{\nl}{\let\nl\oldnl}}

\newcommand{\Cmt}[1]{
\vspace{0.6mm} \tcp*[h]{\underline{#1}}\\ \vspace{0.6mm}
}

\newcommand{\sand}{{SandIMIN}\xspace}
\newcommand{\sandl}{{SandIMIN-}\xspace}

\DeclareMathOperator*{\argmax}{arg\,max}

\sloppy
\textfloatsep 1mm plus 1mm \intextsep 1mm plus 1mm


\begin{document}
\title{Efficient Influence Minimization via Node Blocking}

\author{Jinghao Wang}
\affiliation{%
  \institution{University of Technology Sydney}
}
\email{jinghaow.au@gmail.com}

\author{Yanping Wu}
\affiliation{%
  \institution{University of Technology Sydney}
}
\email{yanping.wu@student.uts.edu.au}

\author{Xiaoyang Wang}
\affiliation{%
  \institution{The University of New South Wales}
}
\email{xiaoyang.wang1@unsw.edu.au}

\author{Ying Zhang, Lu Qin}
\affiliation{%
  \institution{University of Technology Sydney}
}
\email{{ying.zhang, lu.qin}@uts.edu.au}

\author{Wenjie Zhang}
\affiliation{%
  \institution{The University of New South Wales}
}
\email{wenjie.zhang@unsw.edu.au}

\author{Xuemin Lin}
\affiliation{%
  \institution{Shanghai Jiaotong University}
}
\email{xuemin.lin@sjtu.edu.cn}

\begin{abstract}
Given a graph $G$, a budget $k$ and a misinformation seed set $S$, \textit{Influence Minimization} (IMIN) via node blocking aims to find a set of $k$ nodes to be blocked such that the expected spread of $S$ is minimized. This problem finds important applications in suppressing the spread of misinformation and has been extensively studied in the literature. However, existing solutions for IMIN still incur significant computation overhead, especially when $k$ becomes large. In addition, there is still no approximation solution with non-trivial theoretical guarantee for IMIN via node blocking prior to our work. In this paper, we conduct the first attempt to propose algorithms that yield data-dependent approximation guarantees. Based on the Sandwich framework, we first develop submodular and monotonic lower and upper bounds for our non-submodular objective function and prove the computation of proposed bounds is \#P-hard. In addition, two advanced sampling methods are proposed to estimate the value of bounding functions. Moreover, we develop two novel martingale-based concentration bounds to reduce the sample complexity and design two non-trivial algorithms that provide $(1-1/e-\epsilon)$-approximate solutions to our bounding functions. Comprehensive experiments on 9 real-world datasets are conducted to validate the efficiency and effectiveness of the proposed techniques. Compared with the state-of-the-art methods, our solutions can achieve up to two orders of magnitude speedup and provide theoretical guarantees for the quality of returned results.

\end{abstract}

\maketitle

\pagestyle{\vldbpagestyle}
\begingroup\small\noindent\raggedright\textbf{PVLDB Reference Format:}\\
\vldbauthors. \vldbtitle. PVLDB, \vldbvolume(\vldbissue): \vldbpages, \vldbyear.\\
\href{https://doi.org/\vldbdoi}{doi:\vldbdoi}
\endgroup
\begingroup
\renewcommand\thefootnote{}\footnote{\noindent
This work is licensed under the Creative Commons BY-NC-ND 4.0 International License. Visit \url{https://creativecommons.org/licenses/by-nc-nd/4.0/} to view a copy of this license. For any use beyond those covered by this license, obtain permission by emailing \href{mailto:info@vldb.org}{info@vldb.org}. Copyright is held by the owner/author(s). Publication rights licensed to the VLDB Endowment. \\
\raggedright Proceedings of the VLDB Endowment, Vol. \vldbvolume, No. \vldbissue\ %
ISSN 2150-8097. \\
\href{https://doi.org/\vldbdoi}{doi:\vldbdoi} \\
}\addtocounter{footnote}{-1}\endgroup

\ifdefempty{\vldbavailabilityurl}{}{
\vspace{.3cm}
\begingroup\small\noindent\raggedright\textbf{PVLDB Artifact Availability:}\\
The source code, data, and/or other artifacts have been made available at \url{https://github.com/wjh0116/IMIN}.
\endgroup
}

\section{Introduction}
With the rapid development of the Internet, various online social networks (OSNs) have thrived, immensely satisfying and facilitating the need for individuals to share their perspectives and acquire information.
Leveraging the established connections between individuals, information and opinions can spread through word-of-mouth effects across OSNs ~\cite{DBLP:journals/tkde/WangZZL17,DBLP:journals/tkde/WangZZLC17,DBLP:conf/icde/WangZZL16}.
However, immense user bases and rapid sharing abilities also make OSNs effective channels for spreading misinformation, which leads to significant harm, such as economic damages and societal unrest~\cite{morozov2009swine,allcott2017social}.
Therefore, it is necessary to implement a series of strategies to minimize the spread of misinformation.
In the literature, strategies for addressing this problem can be categorized into three types: $i)$ positive information spreading~\cite{DBLP:conf/www/BudakAA11,DBLP:conf/infocom/TongWGLLLD17}, which selects a set of nodes to trigger the spread of positive information to fight against the spread of misinformation; $ii)$ edge blocking~\cite{DBLP:conf/pricai/KimuraSM08,DBLP:conf/kdd/KhalilDS14}, which removes a set of edges to decrease the spread of misinformation; $iii)$ node blocking~\cite{DBLP:conf/aaai/WangZCLZX13,DBLP:conf/icde/0002ZW0023}, which removes a set of critical nodes to limit the spread of misinformation.

In this paper, we consider the problem of \textit{Influence Minimization} (IMIN) via node blocking~\cite{DBLP:conf/aaai/WangZCLZX13,DBLP:conf/icde/0002ZW0023}.
Specifically, given a graph $G$, a seed set $S$ of misinformation and a budget $k$, IMIN aims to find a set of $k$ nodes to be removed such that the expected spread of $S$ is minimized. These removed nodes are called blockers. Note that, removing a node causes some nodes previously reachable by misinformation to become unreachable.
We call that these nodes are \textit{protected} by the blocker.
In such a condition, IMIN equals to identify a blocker set with at most $k$ nodes so that the expected number of protected nodes is maximized.

The IMIN problem is NP-hard and APX-hard unless P=NP~\cite{DBLP:conf/icde/0002ZW0023}.
Wang \etal~\cite{DBLP:conf/aaai/WangZCLZX13} first study IMIN via node blocking under the IC model. 
They use Monte-Carlo simulations to estimate the expected decreased spread of misinformation seed set and provide a greedy algorithm to select blockers, \ie iteratively select the node that leads to the largest decreased spread. 
However, the proposed solution is prohibitively expensive on large social networks due to the inefficiency of Monte-Carlo simulations. 
Recently, Xie \etal~\cite{DBLP:conf/icde/0002ZW0023} propose a novel approach based on the dominator tree (formal definition can be found in Section \ref{pre:revisit}) that can effectively and efficiently estimate the decreased spread of misinformation seed set. They also adopt the greedy algorithm, but the difference is that they use the newly proposed estimation method. They observe that the greedy method may miss selecting some critical nodes. 
Therefore, they propose a more effective modified greedy algorithm, which prioritizes the outgoing neighbors of misinformation seed nodes.

However, the solutions in \cite{DBLP:conf/icde/0002ZW0023} require re-estimating the value of decreased spread of misinformation seed set after selecting a blocker in each iteration, which results in prohibitive computation overhead, especially when the budget becomes large.
\myblue{
Besides, since the objective function of IMIN is non-submodular~\cite{DBLP:conf/icde/0002ZW0023}, directly adopting the greedy method cannot provide $(1-1/e)$-approximate solutions~\cite{DBLP:journals/mp/NemhauserWF78}.
Moreover, based on curvature and submodularity ratio~\cite{DBLP:conf/icml/BianB0T17}, we show that the result returned by greedy does not provide any non-trivial guarantees.
Therefore, it remains an open problem to devise efficient algorithms for IMIN with non-trivial theoretical guarantees.}

In this paper, we address this problem based on the Sandwich approximation strategy~\cite{DBLP:journals/pvldb/LuCL15, DBLP:journals/tkde/WangYPCC17}, which is a widely used framework for non-submodular maximization problems.
The general idea of 
Sandwich framework is to first develop monotone nondecreasing and submodular lower and upper bounds for the objective function studied, and then produce solutions with approximation guarantees (breads of the Sandwich) for \myblue{bounding functions maximization (i.e., maximize the lower and upper bounding functions).
The actual effectiveness of the Sandwich-based approach relies on how close the proposed bounds are to the objective function. 
In other words, a loose submodular bound w.r.t. the objective function can also be applied to solve our problem, but it cannot produce satisfactory results in terms of effectiveness, and would only yield trivial data-dependent approximation factor.
For example, a constant function can serve as a trivial submodular upper bound (e.g., an upper bound that equals the number of nodes in the graph). However, it is apparent that this bounding function may adversely affect our results since the solution to the constant function can be arbitrary.
Thus, to provide high-quality results, 
tight bounds with submodularity property for our objective function are required.}

\myblue{Following the Sandwich framework, we first propose appropriate lower and upper bounds for our functions, and prove that the computation of them is \#P-hard.}
\myblue{Additionally, the widely used technique for influence estimation, \textit{Reverse Influence Sampling} (RIS)~\cite{DBLP:conf/soda/BorgsBCL14}, cannot directly extend to the proposed bounding functions estimation.
This is because, different from the classic RIS, which treats all the nodes equally and samples the node uniformly, we need to focus on the nodes who are prone to be affected by the misinformation, since only those nodes can contribute to the bounding functions.
That is, if we estimate the bounding functions using a similar manner of RIS, we need to sample the node based on its probability of being activated by the misinformation.
However, this probability is $\#$P-hard to compute~\cite{DBLP:conf/kdd/ChenWW10}.
To overcome this issue, we propose two novel unbiased estimators based on two new proposed sample sets, \ie CP sequence and LRR set, to estimate the value of lower and upper bounding functions, respectively.

For maximizing the bounding functions with theoretical guarantees, a straightforward approach is to employ OPIM-C~\cite{DBLP:conf/sigmod/TangTXY18}, which is RIS-based and the state-of-the-art method for~\textit{Influence Maximization} (IM).
However, given that RIS cannot be applied to estimate the bounding functions, we cannot inherit the sample complexity from OPIM-C. To tackle this challenge, we first design two novel martingale-based concentration bounds tailored to the new proposed unbiased estimators.
By utilizing these, the sample complexity required to make an unbiased estimate of the bounding functions is significantly reduced, in comparison to the previous concentration bounds used in OPIM-C.
Moreover, to avoid the new derived sample complexity depending on the expected spread of the misinformation $\mathbb{E}[I_G(S)]$, whose computation is $\#$P-hard, we resort to the generalized stopping rule algorithm in~\cite{DBLP:journals/tkde/ZhuTTWL23}, to obtain the value of estimated $\mathbb{E}[I_G(S)]$. Based on the above analysis, we design two non-trivial algorithms, LSBM and GSBM, to maximize lower and upper bounding functions with a provable approximation guarantee of $(1-1/e-\epsilon)$ with high probability, respectively.} 
Finally, we propose a lightweight heuristic LHGA for IMIN, to serve as the filling of the Sandwich. 
By instantiating the Sandwich with LSBM, GSBM and LHGA, our proposed solution \sand can offer a strong theoretical guarantee for the IMIN problem \myblue{(details can be found in Section \ref{sec:appro})}. 
Experiments over 9 real-world graphs are conducted to verify the efficiency and effectiveness of proposed techniques compared with the state-of-the-art solutions~\cite{DBLP:conf/icde/0002ZW0023}.
The main contributions of the paper are summarized as follows.

\begin{itemize}[leftmargin=*,topsep=0pt]

\item 
In this paper, based on the Sandwich search framework, we propose a novel solution \sand for the influence minimization problem via node blocking. 
To the best of our knowledge, we are the first to propose algorithms that yield approximation guarantees for the problem. Submodular and monotonic lower and upper bounds are designed for the objective function, and we prove the computation of bounding functions is \#P-hard.

\item 
To estimate the bounds proposed, two novel sample sets and the corresponding sampling techniques are proposed. 
In addition, new martingale-based concentration bounds are developed to reduce the sample complexity and improve the overall performance. 
Furthermore, we propose two non-trivial algorithms to maximize lower and upper \myblue{bounding functions}, which provide $(1-1/e-\epsilon)$ approximation guarantee with high probability.

\item 
We conduct extensive experiments on 9 real-world graphs to verify the efficiency and effectiveness of proposed techniques. 
Compared with the state-of-the-art algorithms~\cite{DBLP:conf/icde/0002ZW0023}, our solutions show better scalability in terms of dataset size and parameters, and can achieve up to two orders of magnitude speedup.

\end{itemize}

\vspace{1mm}
\myblue{\textit{Note that, due to the limited space, all the proofs are omitted and can be found in our appendix.}}

\section{Preliminaries}\label{sec:pre}

In this section, we first formally define the \textit{Influence Minimization} (IMIN) problem, and then we present an overview of existing solutions for the IMIN problem.

\subsection{Problem Definition}

We consider a directed graph $G=(V,E)$ with a node set $V$ and a directed edge set $E$, where $|V|=n$ and $|E|=m$. 
Given an edge $ \langle u, v \rangle \in E $, we refer to $u$ as an incoming neighbor of $v$ and $v$ as an outgoing neighbor of $u$. 
Each edge $\langle u, v \rangle$ is associated with a propagation probability $p(u,v) \in [0,1]$, representing the probability that $u$ influences $v$. 
\myblue{Table \ref{tab:notation} summarizes the notations frequently used.}

\begin{table}
\small

\caption{\myblue{Frequently used notations}}
\label{tab:notation}
\centering
\begin{tabularx}{\linewidth}{|m{1.5cm}<{\centering}|m{6.245cm}|}
    \hline
    \centering \textbf{Notation} & \multicolumn{1}{c|}{\textbf{Description}} \\
    \hline
    \hline
    \centering $G=(V,E)$ & a social network with node set $V$ and edge set $E$\\
    \hline
    $S,B$ &  the seed set of misinformation and blocker set\\
    \hline
    $\mathbb{E}[I_G(S)]$ &  the expected spread of seed set $S$\\
    \hline
    $G[V^{\prime}]$ &  the subgraph in $G$ induced by node set $V^{\prime}$\\
    \hline
    $\phi,\Omega$ & a realization and the set of all possible realizations \\
    \hline
    $D_S(B)$ &  the expected decreased spread of seed set $S$ after blocking nodes in $B$\\
    \hline
    $D_S^L(\cdot),D_S^U(\cdot)$ &  the submodular and monotonic lower bound and upper bound of $D_S(\cdot)$\\
    \hline
    $B_L^o,B_U^o,B^o$ &  the optimal solution to the lower bounding function, upper bounding function and objective function\\
    \hline
    $C^s,\mathbb{C}^s$ &  a CP sequence and the set of CP sequences\\
    \hline
    $L(v),\mathbb{L}$ &  a LRR set of $v$ and the set of LRR sets\\
    \hline
\end{tabularx}
\end{table}

\myparagraph{Diffusion model} In this paper, we focus on the \textit{independent cascade (IC)} model, which is widely used to simulate the information diffusion in the literature~\cite{DBLP:conf/kdd/KempeKT03,DBLP:conf/soda/BorgsBCL14,DBLP:journals/tkde/WangZZL17,DBLP:journals/tkde/WangZZLC17,DBLP:conf/sigmod/TangTXY18,wu2024targeted}. 
Given a seed set $S \subseteq V$, the diffusion process of $S$ under the IC model unfolds in discrete timestamps, whose details are shown in the following.

\begin{itemize}[leftmargin=*]
\item At timestamp 0, the nodes in the seed set $S$ are activated, while all other nodes are inactive. Each activated node will remain active in the subsequent timestamps.

\item If a node $u$ is activated at timestamp $t$, for each of its inactive outgoing neighbor $v$, $u$ has a single chance to activate $v$ with probability $p(u,v)$ at timestamp $t+1$.
\item The propagation process stops when no more nodes can be activated in the graph $G$.
\end{itemize}

\noindent Given a seed set $S\subseteq V$, let $I_G(S)$ be the number of active nodes in $G$ when the propagation process stops. 
Alternatively, the diffusion process can also be characterized as the \textit{live edge} procedure~\cite{DBLP:conf/kdd/KempeKT03}. 
Specifically, by removing each edge $\langle u, v \rangle \in E$ with $1-p(u,v)$ probability, the remaining graph is referred to as a \textit{realization}, denoted as $\phi$.
Let $I_{\phi}(S)$ denote the number of nodes that are reachable from $S$ in $\phi$.
For any seed set $S$, its expected spread $\mathbb{E}[I_G(S)]$ can be defined as follows.
\begin{align}
    \mathbb{E}[I_G(S)]=\mathbb{E}_{\Phi \sim \Omega}[I_{\Phi}(S)]=\sum_{\phi\in \Omega}I_{\phi}(S)\cdot p(\phi),
\end{align}
where $\Omega$ is the set of all possible realizations of $G$, $\Phi \sim \Omega$ denotes that $\Phi$ is a random realization sampled from $\Omega$ and $p(\phi)$ is the probability for realization $\phi$ to occur.

In this paper, we study the problem of minimizing the spread of misinformation.
One strategy for influence minimization problem is to \textit{block critical nodes} on social networks \cite{DBLP:conf/aaai/WangZCLZX13,DBLP:conf/icde/0002ZW0023}.
When a node $u$ is blocked, we set the probability of all edges pointing to $u$ as 0 and refer to $u$ as a blocker.
We can obtain that the activation probability of a blocker is 0. 
Additionally, we assume that a blocker cannot be a seed node for propagating misinformation. 
Note that, after blocking a node $v$, the status of some nodes changes from active to inactive.
We call these nodes are \textit{protected} by $v$.
Given a seed set $S\subseteq V$ and a blocker set $B\subseteq (V\backslash S)$, we denote $D_S(B)=\mathbb{E}[I_G(S)]-\mathbb{E}[I_{G[V\backslash B]}(S)]$ as the expected decreased spread of seed set $S$ after blocking nodes in $B$, where $G[V\backslash B]$ denotes the subgraph in $G$ induced by node set $V\backslash B$. 

\myparagraph{Problem statement} Given a directed social network $G=(V,E)$, a seed set $S$ for propagating misinformation and a budget $k$, \textit{Influence Minimization} (IMIN) via node blocking is to find a blocker set $B^*$ with at most $k$ nodes such that the influence (\ie expected spread) of seed set $S$ is minimized after blocking nodes in $B^*$. In other words, IMIN aims to identify a blocker set $B^*$ with at most $k$ nodes such that the expected number of protected nodes is maximized, \ie
\begin{align*}
    B^*=\argmax_{B \subseteq (V \backslash S),|B|\leq k} D_S(B).
\end{align*}


As shown in~\cite{DBLP:conf/icde/0002ZW0023}, IMIN is proved to be NP-hard and APX-hard unless P=NP. In addition, given a seed set $S$, $D_S(\cdot)$ is monotonic, but not submodular.
Due to the non-submodularity property of the IMIN objective, the direct use of the greedy framework cannot return a result with an approximation ratio of $(1-1/e)$~\cite{DBLP:journals/mp/NemhauserWF78}.

\subsection{Existing Solutions Revisited}\label{pre:revisit}
Here we first abstract the greedy framework from recent studies \cite{DBLP:conf/aaai/WangZCLZX13,DBLP:conf/icde/0002ZW0023}, and then we introduce the state-of-the-art approaches for addressing the IMIN problem and show their limitations.

\myparagraph{Greedy framework for IMIN}
Suppose $D_S(u|B)=D_S(B\cup \{u\})-D_S(B)$ as the marginal gain of adding $u$ to the set $B$.
In a nutshell, the greedy framework~\cite{DBLP:conf/aaai/WangZCLZX13} starts from an empty blocker set $B=\emptyset$.
The subsequent part of the algorithm consists of $k$ iterations.
At each iteration, it iteratively selects the node $v$ from $V\backslash S$ that leads to the largest ${D}_S(v|B)$ and adds it into $B$.
After selecting one blocker, the probability of all edges pointing to it is set as 0. 

Due to $D_S(B)=\mathbb{E}[I_G(S)]-\mathbb{E}[I_{G[V\backslash B]}(S)]$, one feasible way for calculating $D_S(B)$ is to compute $\mathbb{E}[I_G(S)]$.
However, the computation of $\mathbb{E}[I_G(S)]$ is proved as $\#$P-hard \cite{DBLP:conf/kdd/ChenWW10}, which means that the computation of $D_S(B)$ is also $\#$P-hard.
In \cite{DBLP:conf/aaai/WangZCLZX13}, Wang \etal use Monte-Carlo simulations to estimate the influence spread $\mathbb{E}[I_G(S)]$ and adopt the greedy framework. It can solve the IMIN problem effectively but due to the inefficiency of Monte-Carlo simulations, it incurs significant computation overhead.

\myparagraph{The state-of-the-art approach}
Compared to the Monte-Carlo based estimation method under the greedy framework, the state-of-the-art approach \cite{DBLP:conf/icde/0002ZW0023} optimizes the estimation for $D_S(\cdot)$.
Instead of estimating the expected spread (\ie $\mathbb{E}[I_G(S)]$), Xie \etal~\cite{DBLP:conf/icde/0002ZW0023} directly estimate the expected decreased spread (\ie $D_S(\cdot)$) based on the \textit{dominator tree} (DT)~\cite{DBLP:books/lib/AhoU72,DBLP:journals/cacm/LowryM69}.
To explain how this estimation algorithm works, we first introduce three concepts as follows.

\begin{definition}[Dominator]
    Given a realization $\phi$ and a source $s$, a node $u$ is called a dominator of a node $v$ if and only if all paths from $s$ to $v$ pass through $u$.
\end{definition}

\begin{definition}[Immediate Dominator]
    Given a realization $\phi$ and a source $s$, a node $u$ is said to be an immediate dominator of a node $v$, denoted as $idom(v)=u$, if and only if $u$ dominates $v$ and every other dominator of $v$ dominates $u$.
\end{definition}

\begin{definition}[Dominator Tree (DT)]
\label{def:DT}
    Given a realization $\phi$ and a source $s$, the dominator tree of $\phi$ is induced by the edge set $\{ \langle idom(u),u\rangle:u \in V\backslash \{s\}\}$ with root $s$.
\end{definition}

According to the concept of DT, each DT has one root. 
Xie \etal~\cite{DBLP:conf/icde/0002ZW0023} first propose to create a unified seed node $s$ to replace the given seed set $S$ under the IC model. 
For each node $u \in (V \backslash S)$, if there are $h$ distinct seed nodes pointing to node $u$, and each edge has a probability $p_i$ ($1\leq i \leq h$), they will remove all edges from the seed nodes to node $u$, and add an edge from node $s$ to node $u$ with the probability $(1-\prod_{i=1}^h(1-p_i))$. 
Correspondingly, $D_S(\cdot)$ is replaced by $D_s(\cdot)$.
In addition, Xie \etal~\cite{DBLP:conf/icde/0002ZW0023} prove that for each node $u\in (V \backslash S)$, the expected number of protected nodes by $u$ (\ie $D_s(u)$) equals the expected size of the subtree rooted at $u$ in the DT. 
Based on these, the estimation algorithm of \cite{DBLP:conf/icde/0002ZW0023} runs in the following steps, where $\hat{D}_{s}(\cdot)$ is the estimated value of $D_s(\cdot)$.

\begin{itemize}[leftmargin=*]
    \item Generate a certain number of random realizations $\mathcal{G}$ from $G$. 
    \item For each generated realization $\phi \in \mathcal{G}$, apply Lengauer-Tarjan algorithm~\cite{DBLP:journals/toplas/LengauerT79} to construct the DT of $\phi$, which roots at $s$.
    For each node $v \in (V\backslash S)$, measure the size of subtree with the root $v$ in DT, which is denoted as $c_{\phi}(v)$.
    \item For each node $v \in (V\backslash S)$, the estimated value of $D_{s}(v)$ is the average value of $c_{\phi}(v)$ in all realizations that are generated, \ie $\hat{D}_{s}(v)=(\sum_{\phi \in \mathcal{G}}c_{\phi}(v))/|\mathcal{G}|$.
\end{itemize}


\begin{figure}[t]
\centering
    \begin{minipage}{0.14\textwidth}
        \centering
        \includegraphics[width=\textwidth]{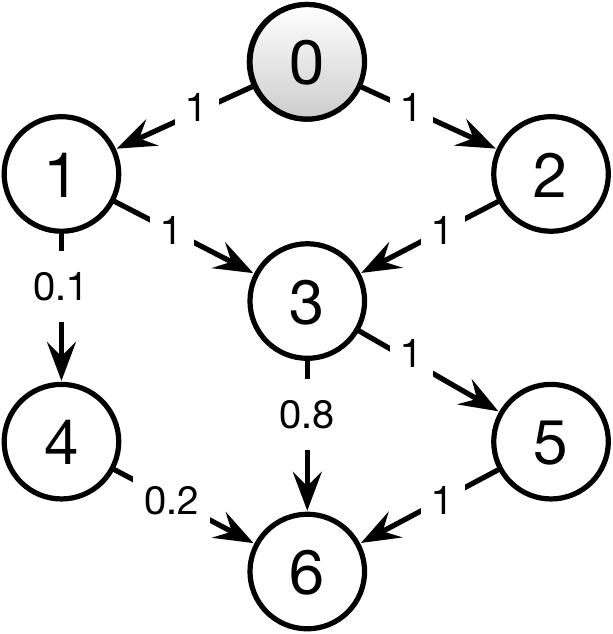}
        \subcaption{\small Social network $G$}
    \end{minipage} 
    \hspace{0.02\textwidth}
    \begin{minipage}{0.14\textwidth}
        \centering
        \includegraphics[width=\textwidth]{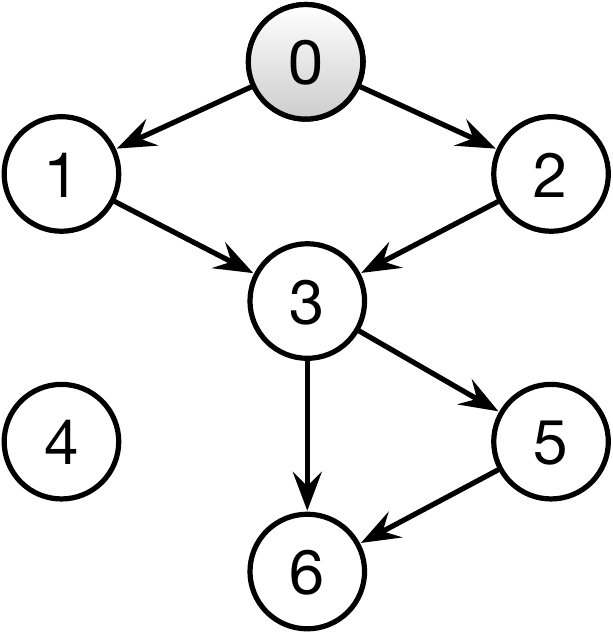}
        \subcaption{A realization $\phi$}
    \end{minipage} 
    \hspace{0.02\textwidth}
    \begin{minipage}{0.14\textwidth}
        \centering
        \begin{subfigure}{\textwidth}
            \includegraphics[width=\textwidth]{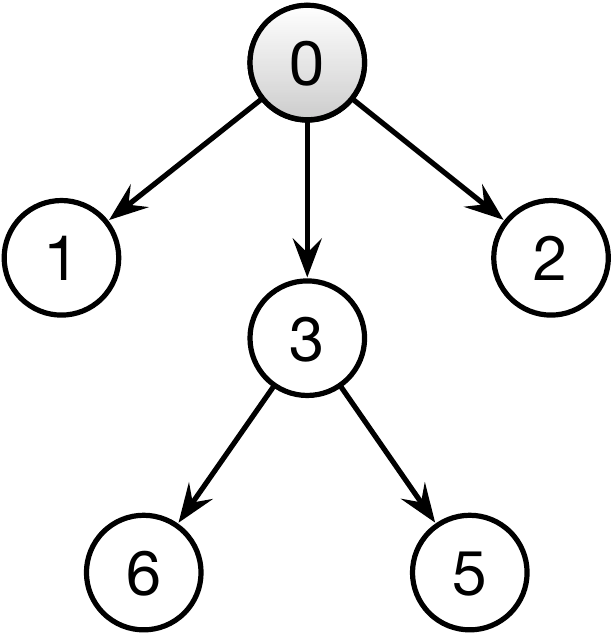}
            \subcaption{DT of $\phi$}
        \end{subfigure}
    \end{minipage}
	\caption{Example of estimating $D_{s}(\cdot)$}
	\label{example_DT}
\end{figure}

\begin{example}\label{exam:dt}
Here we illustrate an example of the above estimation procedure with one realization.
Figure \ref{example_DT}(a) shows a social network $G=(V,E)$, where $v_0$ is the misinformation seed node.
The number associated with each edge is its corresponding propagation probability.
Figure \ref{example_DT}(b) shows a realization $\phi$ obtained from $G$.
In Figure \ref{example_DT}(c), a DT rooted at $s$ of $\phi$ is constructed by Lengauer-Tarjan algorithm~\cite{DBLP:journals/toplas/LengauerT79}. 
Then, we can get the estimated value of $D_{s}(v)$ for each node $v \in (V\backslash \{v_0\})$ by measuring the size of subtree with root $v$ in DT, \ie $\hat{D}_{s}(v_1)=\hat{D}_{s}(v_2)=\hat{D}_{s}(v_5)=\hat{D}_{s}(v_6)=1$, $\hat{D}_{s}(v_3)=3$, and $\hat{D}_{s}(v_4)=0$ since $v_4$ cannot be activated by $v_0$ in $\phi$. 
\end{example}


By utilizing the greedy framework but with one difference, Xie \etal~\cite{DBLP:conf/icde/0002ZW0023} propose AdvancedGreedy (AG).
That is, they use the above estimation method to obtain the estimated value of $D_s(\cdot|B)$.
However, some critical nodes may be missed by AG, \eg some outgoing neighbors of the seed nodes. 
Reconsider Figure \ref{example_DT}(a), suppose $k=2$. $v_0$ has two outgoing neighbors and if we directly select these two nodes as blockers, $v_0$ cannot activate any node. 
However, if we block the blockers returned by AG (\ie $v_3$ and $v_1$), $v_2$ cannot be protected.
To address this issue, Xie \etal~\cite{DBLP:conf/icde/0002ZW0023} further propose GreedyReplace (GR), which consists of two stages. 
In the first stage, the outgoing neighbors of seed nodes are stored in the candidate set.
They iteratively select the node $v$ from the candidate set with the largest $\hat{D}_s(v|B)$ and add it into $B$, until $|B|=\min\{d_{s}^{out},k\}$, where $d_{s}^{out}$ denotes the number of nodes in the candidate set. 
In the second stage, they consider processing the blockers in $B$ according to the reverse order of their insertion order. 
For each blocker in $B$, they first remove it from $B$, called the replaced node.
Then they select the node $v$ from $V\backslash S$ with the largest $\hat{D}_s(v|B)$ and add it into $B$, called the current best blocker.
If the replaced node is the current best blocker, they return $B$ directly. Otherwise, continue the replacement process. 
Compared with AG, GR can achieve a better result quality.
However, in most cases, AG is more efficient since GR requires two stages to select nodes, and often requires multiple rounds of replacement process in the second stage before it terminates, resulting in additional time cost for GR. 

\myparagraph{Limitations}
Despite the efficiency of dominator tree based estimation method, AG and GR still incur significant computation overhead in practice.
This is because when AG/GR selects a node as a blocker, it needs to remove that node from the graph. 
Therefore, AG/GR cannot reuse the realizations generated in the last round.
Correspondingly, Xie \etal need to regenerate realizations and construct the corresponding DTs, which incurs significant time cost for large values of $k$.
\myblue{Moreover, there is no theoretical analysis provided for AG and GR, which are both based on the greedy framework.
    Based on the curvature and submodularity ratio~\cite{DBLP:conf/icml/BianB0T17}, in this paper, we analyze the approximation guarantee for the greedy strategy on our non-submodular objective.
    The submodularity ratio serves as a metric to assess how closely the objective approximates being submodular.
    Formally, for all $A, B \subset V$, the submodularity ratio of $D_s(\cdot)$ is the largest scalar $\psi$ such that,
\begin{align}\label{eq:sub_ratio}
    \scalebox{1}{$\sum_{\omega \in A \backslash B}[D_s(B\cup \{\omega\})-D_s(B)]\geq \psi[D_s(B\cup A)-D_s(B)].$}
\end{align}
Considering an example with the graph $G=(V,E)$, where $V=\{v_0,v_1,\ldots,v_{n-1}\}$ and $E=\{\langle v_0, v_1 \rangle,\langle v_0, v_2 \rangle,\langle v_1, v_3 \rangle,\langle v_2, v_3 \rangle,\langle v_3, v_4 \rangle\\, \langle v_3, v_5 \rangle,\ldots,\langle v_3, v_{n-1} \rangle\}$.
The probability on each edge is set to 1 and $v_0$ is the misinformation seed node.
When $B=\emptyset$ and $A=\{v_1,v_2\}$, the left side of Eq. (\ref{eq:sub_ratio}) is equal to 2, and the right side of Eq. (\ref{eq:sub_ratio}) is equal to $n-1$, \ie $2 \geq \psi \cdot(n-1)$.
As $n$ gradually becomes larger, the submodularity ratio $\psi$ approaches 0 infinitely, consequently 
using the greedy strategy yields approximation guarantee that also approaches 0 infinitely~\cite{DBLP:conf/icml/BianB0T17}.
Thus, the result returned by the state-of-the-art algorithms for IMIN does not provide any non-trivial guarantees.}
To fill these gaps, in this paper, we design efficient approximation algorithms with theoretical guarantees for IMIN.

\section{Sandwich Approximation Strategy}
\label{sec:bounds}
To solve the IMIN problem, we propose efficient approximation algorithms based on the Sandwich framework~\cite{DBLP:journals/pvldb/LuCL15, DBLP:journals/tkde/WangYPCC17}, which is widely used for non-submodular maximization.
In Section \ref{sandwich:framework}, we first give the general framework of our algorithm.
Then we propose the lower and upper bounds in Section \ref{bound:lower} and \ref{bound:upper}, respectively.

\subsection{Overview of \sand}
\label{sandwich:framework}
Generally, our \sand algorithm first finds the $\alpha_1$-approximate solution and $\alpha_2$-approximate solution (breads of Sandwich) to the lower bound and the upper bound of the objective function, respectively.
Then, it finds a solution (filling of Sandwich) to the original problem with a heuristic method.
Finally, it returns the best solution among these three results.
The pseudocode of the above process is shown in Algorithm \ref{alg:Sandwich} and has the following result,
\begin{algorithm}[t]
{
    \SetVline
    \footnotesize
    \caption{\sand}
    \label{alg:Sandwich}
    \Input{The graph $G=(V,E)$, the seed set $S$, the unified seed node $s$, the budget $k$ and the error parameters $\alpha_1$, $\alpha_2$, $\delta$, $\gamma$.}
    \Output{The blocker set $B$.}
    \Cmt{LSBM Algorithm in Section \ref{boundsmax:lower}}
    \State{$B_L\leftarrow $ the $\alpha_1$-approximate solution for \myblue{lower bounding function maximization}}
    \Cmt{GSBM Algorithm in Section \ref{boundsmax:upper}}
    \State{$B_U\leftarrow $ the $\alpha_2$-approximate solution for \myblue{upper bounding function maximization}}
    \Cmt{LHGA Algorithm in Section \ref{boundsmax:original}}
    \State{$B_R\leftarrow $ the heuristic solution for original problem}
    \State{$\hat{I}_{G[V\backslash \cdot]}(s)\leftarrow $ the $(\gamma,\delta)$-estimate of $\mathbb{E}[I_{G[V\backslash \cdot]}(s)]$}
    \State{$B \leftarrow \arg \min_{B^* \in \{B_L,B_U,B_R\}}\hat{I}_{G[V\backslash B^*]}(s)$}

    \State{\textbf{return} $B$}
}
\end{algorithm}


\begin{align}\label{eq:sandwich}
    D_s(B) \geq \max \left\{\frac{D_s^L(B_L^o)}{D_s(B^o)}\alpha_1,\frac{D_s(B_U)}{D_s^U(B_U)}\alpha_2\right\}\frac{1-\gamma}{1+\gamma}D_s(B^o)\text{,}
\end{align}
where $D_s^L$, $D_s^U$ are the non-negative, monotonic and submodular set functions defined on $V$, \ie $D_s^L: 2^V \rightarrow \mathbb{R}_{\geq 0}$ and $D_s^U: 2^V \rightarrow \mathbb{R}_{\geq 0}$, such that $\forall B \subseteq (V\backslash S)$, $D_s^L(B) \leq D_s(B) \leq D_s^U(B)$.
$B_L^o$, $B_U^o$ and $B^o$ are the optimal solutions to maximize the \myblue{lower bounding function, upper bounding function} and objective function, respectively. 
In addition, we call $\hat{\mu}$ is the $(\gamma,\delta)$-estimate of $\mu$ if $\hat{\mu}$ satisfies:
\begin{align}
    \Pr[(1-\gamma)\mu\leq\hat{\mu}\leq(1+\gamma)\mu]\geq 1-\delta.
\end{align}

As observed, the key to \sand is to find the lower and upper bounds of the objective function, which are both monotonic and submodular. 
Before presenting our bounds, we first introduce the technique of how to transfer multiple seeds to one seed for presentation simplicity.
We create a unified seed node $s$ and then introduce the edges with the propagation probability of 1 from $s$ to every seed node $v \in S$. 
Note that, $s$ is the virtual node and it does not belong to $V$.
Under such a setting, we can guarantee that $s$ and $S$ have the same spread under the IC model and there is no need to pre-compute $1-\prod_{i=1}^h(1-p_i)$ for each node $u \in (V \backslash S)$ as stated in Section \ref{pre:revisit}. 
In the following, we use $D_s(\cdot)$ to denote $D_S(\cdot)$.

\myparagraph{Roadmap of \sand} In Section \ref{section:LB} and Section \ref{section:UB}, we propose submodular and monotonic lower and upper bounds of the objective function, respectively. 
We then design sampling methods to estimate the value of lower and upper bounds in Section \ref{section:est_LB} and Section \ref{section:est_UB}, respectively. In Section \ref{section:LSBM} and \ref{section:GSBM}, we devise two approximation algorithms that provide  $(1-1/e-\epsilon)$-approximate solutions for lower and upper \myblue{bounding functions} maximization, respectively (Lines 1-2 of Algorithm \ref{alg:Sandwich}). In Section \ref{section:LHGA}, we devise a heuristic method for IMIN (Line 3 of Algorithm \ref{alg:Sandwich}) and show that \sand yields a data-dependent approximation guarantee.

\subsection{Lower Bound}\label{section:LB}
\label{bound:lower}
A  function $f:2^V \rightarrow \mathbb{R}_{\geq 0}$ is submodular if for any $S \subseteq T \subseteq V$ and any $x \in (V \backslash T)$, $f(\cdot)$ satisfies  $f(S \cup \{x\})-f(S)\geq f(T \cup \{x\})-f(T)$.
The reason for the non-submodularity of function $D_s(\cdot)$ is due to the \textit{combination effect} of nodes in the blocker set. 
That is, to prevent a node from being activated by $s$, we need to simultaneously block two or more nodes.
For example, reconsidering Figure \ref{example_DT}, to protect $v_3$, we need to simultaneously block $v_1$ and $v_2$. 
Motivated by this, we disregard the combination effect to obtain a lower bound that is submodular. 
Specifically, we only consider nodes that can be protected by blocking only one node in $G$.
Accordingly, given a blocker set $B$, the lower bound of $D_s(B)$ can be defined as:
\begin{align}
    D_s^L(B)=\mathbb{E}_{\Phi \sim \Omega}[|\cup_{v \in B}{N}_{\Phi}(v)|]=\sum_{\phi\in \Omega}p(\phi)\cdot |\cup_{v \in B}N_{\phi}(v)|\text{,}
\end{align}
where $N_{\phi}(v)$ denotes the set of nodes whose status changes from active to inactive under $\phi$ after $v$ is blocked. 
\begin{lemma}
\label{lemma:lower_submodular}
    Given a seed set $S$ and its unified seed node $s$,
    $D_s^L(\cdot)$ is monotone nondecreasing and submodular under the IC model.
\end{lemma}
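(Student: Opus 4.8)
The plan is to prove monotonicity and submodularity separately, working first at the level of a fixed realization $\phi$ and then taking expectations, since both properties are preserved under non-negative linear combinations (and $D_s^L(B) = \sum_{\phi} p(\phi)\, g_\phi(B)$ with $g_\phi(B) = |\cup_{v\in B} N_\phi(v)|$).

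For a fixed realization $\phi$, the key observation is that $N_\phi(v)$ --- the set of nodes that become unreachable from $s$ when only $v$ is blocked --- admits a clean combinatorial description in terms of the dominator tree of $\phi$ rooted at $s$: by the result of Xie \etal recalled in Section~\ref{pre:revisit}, $N_\phi(v)$ is exactly the set of nodes in the subtree rooted at $v$ in the DT of $\phi$ (when $v$ is reachable from $s$; otherwise $N_\phi(v) = \emptyset$). The crucial structural fact I would use is that for any two nodes $u, v$, the subtrees rooted at $u$ and $v$ in a tree are either disjoint or one contains the other (nested). Consequently the sets $\{N_\phi(v)\}_{v \in V}$ form a \emph{laminar family}. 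First I would establish monotonicity of $g_\phi$, which is immediate: $B \subseteq B'$ implies $\cup_{v\in B} N_\phi(v) \subseteq \cup_{v\in B'} N_\phi(v)$, hence $g_\phi(B) \le g_\phi(B')$; averaging over $\phi$ gives monotonicity of $D_s^L$.

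For submodularity, I would show that $g_\phi$ is submodular for every $\phi$. The cleanest route: the coverage-type function $B \mapsto |\cup_{v \in B} A_v|$ is submodular for \emph{any} collection of sets $\{A_v\}$ (this is the standard ``set cover / coverage function is submodular'' fact --- the marginal gain of adding $x$ to $B$ is $|A_x \setminus \cup_{v\in B} A_v|$, which is non-increasing in $B$ by monotonicity of the union). So in fact I do not even need the laminar structure for submodularity --- it follows directly from $g_\phi$ being a coverage function with ground sets $N_\phi(v)$. I would spell out the marginal-gain inequality: for $B \subseteq T$ and $x \notin T$,
\begin{align*}
g_\phi(B \cup \{x\}) - g_\phi(B) = \Big|N_\phi(x) \setminus \textstyle\bigcup_{v\in B} N_\phi(v)\Big| \geq \Big|N_\phi(x) \setminus \textstyle\bigcup_{v\in T} N_\phi(v)\Big| = g_\phi(T \cup \{x\}) - g_\phi(T),
\end{align*}
where the inequality holds because $\cup_{v\in B} N_\phi(v) \subseteq \cup_{v\in T} N_\phi(v)$. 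Then, since $D_s^L(B) = \sum_\phi p(\phi)\, g_\phi(B)$ is a non-negative combination of submodular functions, $D_s^L$ is submodular; and non-negativity is obvious since each $g_\phi \ge 0$.

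The main subtlety --- and the step I would be most careful about --- is not the submodularity of coverage functions (routine) but justifying the set-level description of $N_\phi(v)$: namely that ``the nodes that switch from active to inactive in $\phi$ when $v$ alone is blocked'' is well-defined and equals a subtree in the DT. One must note that $D_s^L$ as written does not invoke the DT at all --- it is defined purely via $N_\phi(v)$ --- so strictly speaking the proof of Lemma~\ref{lemma:lower_submodular} needs only the coverage-function argument above and never requires the laminar/DT picture; the DT description is a convenient sanity check and will matter later for estimation, but is not logically needed here. I would therefore keep the proof self-contained: define $g_\phi$, cite monotonicity of unions, invoke the standard submodularity of coverage functions with an explicit marginal-gain computation, and close by taking the $p(\phi)$-weighted average. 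If the referee wants the DT connection made explicit I would add a remark, but the core argument is the one-paragraph coverage-function fact.
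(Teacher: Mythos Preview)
Your proposal is correct and follows essentially the same approach as the paper: fix a realization $\phi$, observe that $g_\phi(B)=|\cup_{v\in B}N_\phi(v)|$ is a coverage function, show the marginal-gain inequality $|N_\phi(x)\setminus\cup_{v\in B}N_\phi(v)|\ge|N_\phi(x)\setminus\cup_{v\in T}N_\phi(v)|$ for $B\subseteq T$, and then pass to $D_s^L$ as a non-negative linear combination. Your treatment of monotonicity is in fact more complete than the paper's (which only checks the gain over $B=\emptyset$), and you are right that the dominator-tree/laminar observation is unnecessary for this lemma.
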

\myblue{Due to the space limitation, the detailed proof for Lemma \ref{lemma:lower_submodular} and other omitted proofs can be found in our appendix.}
In addition, since $D_s^L(\emptyset)=\mathbb{E}[I_G(s)]$ and computing $\mathbb{E}[I_G(s)]$ is $\#$P-hard~\cite{DBLP:conf/kdd/ChenWW10}, the computation of $D_s^L(\cdot)$ is $\#$P-hard.

\subsection{Upper Bound}\label{section:UB}
\label{bound:upper}
For each node $v$ activated by $s$, $v$ can be protected if all the paths from $s$ to $v$ are blocked. 
Intuitively, we can obtain the upper bound of the objective function by relaxing the condition for nodes to be protected.
We call that $v$ can be \textit{alternative-protected} if there exists one path from $s$ to $v$ is blocked. 
Given a realization $\phi=\left(V({\phi}),E({\phi})\right)$, let ${\phi}^s=\left(V({\phi}^s),E({\phi}^s)\right)$ be the graph of misinformation receivers under $\phi$.
Specifically, $V({\phi}^s)=R_{\phi}(s) \backslash S$, where $R_{\phi}(s)$ is the set of nodes reachable from $s$ in $\phi$ and 
$E({\phi}^s)=\{\langle u, v \rangle \in E({\phi}):u \in V({\phi}^s) , v \in V({\phi}^s) \}$. 
In such a setting, for any node $v \in V({\phi}^s)$, $v$ can alternative-protect those nodes that are reachable from $v$ in~${\phi}^s$.
Given a blocker set $B$, the upper bound of $D_s(B)$ can be defined as:
\begin{align}
    D_s^U(B)=E_{\Phi \sim \Omega}[|M_{\phi}(B)|]=\sum_{\phi\in \Omega}p(\phi)\cdot |M_{\phi}(B)|\text{,}
\end{align}
where $M_{\phi}(B)$ is the set of nodes reachable from $B$ in ${\phi}^s$. 
Upper \myblue{bounding function maximization} is essentially the influence maximization problem~\cite{DBLP:conf/kdd/KempeKT03}, thus it also possesses monotonicity, submodularity and NP-hardness, and the computation of $D_S^U(\cdot)$ is $\#$P-hard.

    

    

\begin{figure}[t]
	\centering
\includegraphics[width=0.7\linewidth]{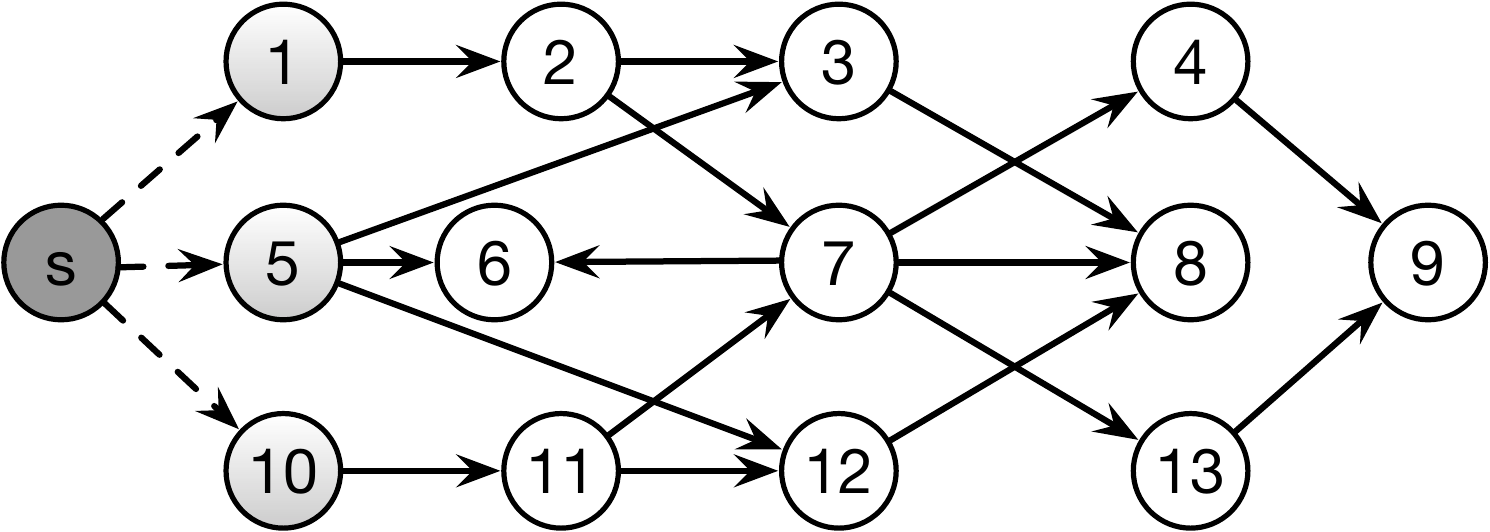}
	\caption{Example of the bounds}
	\label{example:bounds}
\end{figure}

\begin{example}
Here we first illustrate how to transfer multiple seeds to one seed.
As shown in Figure \ref{example:bounds}, the misinformation seed set $S=\{v_1,v_5,v_{10}\}$. We create a unified seed node $s$ and three edges, $\langle s, v_1 \rangle,\langle s, v_5 \rangle,\langle s, v_{10} \rangle$, with the propagation probability of 1.

Then we illustrate an example of our proposed bounds.
Suppose the blocker set is $B=\{v_3,v_7,v_{12} \}$.
For the objective function of IMIN, $D_s(B)=|\{v_3,v_4,v_7,v_8,v_9,v_{12},v_{13}\}|=7$.
For the lower bound, $D_s^L(B)=|\{v_3,v_4,v_7,v_9,v_{12},v_{13}\}|=6$.
For the upper bound, $D_s^U(B)=|\{v_3,v_4,v_6,v_7,v_8,v_9,v_{12},v_{13}\}|=8$.
\end{example}

\section{Bounding Functions Estimation}
\label{sec:boundesti}
To achieve Lines 1-2 in Algorithm \ref{alg:Sandwich}, we first need to compute $D_s^L(\cdot)$ and $D_s^U(\cdot)$.
However, the computation of them is $\#$P-hard.
\myblue{Additionally, the state-of-the-art technique for influence estimation (\ie RIS~\cite{DBLP:conf/soda/BorgsBCL14}) cannot be applied to estimate our bounding functions.
Specifically, we only need to consider the nodes that can be reached by misinformation, since only these nodes need to be protected and can contribute to the bounding functions.
Besides, the number of these nodes is unknown due to the randomness of propagation.}
To address this issue, in Section \ref{section:est_LB} and Section \ref{section:est_UB}, 
we devise two sampling methods named Local Sampling and Global Sampling to estimate the lower bound and upper bound, respectively.

\subsection{\myblue{Lower Bounding Function Estimation}}\label{section:est_LB}

In the following, we first clarify some concepts mentioned in the sampling technique.
\begin{definition} [Common Path (CP) Set \& Common Path (CP) Sequence]
Given a graph $G=(V, E)$, a seed set $S\subseteq V$, a unified source node $s$ and a realization $\phi$ obtained from $G$, 
for any node $v\in (V\backslash S)$, a Common Path (CP) set of $v$, denoted by $C_{\phi}(s,v)$, is the set of common nodes on all paths from $s$ to $v$ in $\phi$ (exclude $S$), 
\ie $C_{\phi}(s,v)=\{u\in (V\backslash S) :u \in \bigcap_{i=1}^j P_i(s,v)$\}, where $P_i(s,v)$ ($1 \leq i \leq j$) denotes the set of nodes on a path from $s$ to $v$ and $j$ is the number of paths from $s$ to $v$. 
Let $R_{\phi}(s)$ be the set of nodes reachable from $s$ in $\phi$.
A Common Path (CP) sequence in $\phi$, denoted by $C_{\phi}^s$, is the set of CP sets of the nodes
in $R_{\phi}(s)$
(exclude $S$), 
\ie $C^s_{\phi}=\{C_{\phi}(s,v):v\in (R_{\phi}(s)\backslash S)\}$.
\end{definition}

\begin{example}
Reconsider the social network $G$ in Figure \ref{example_DT}.
Given the realization $\phi$ of $G$ in Figure \ref{example_DT}(b) and the misinformation seed node $v_0$, we first illustrate the CP set of node $v_6$ as follows.
We can find that there are four paths between $v_0$ and $v_6$, \ie $P_1(v_0,v_6)=\{v_1,v_3,v_6\}$, $P_2(v_0,v_6)=\{v_2,v_3,v_6\}$, $P_3(v_0,v_6)=\{v_1,v_3,v_5,v_6\}$ and $P_4(v_0,v_6)=\{v_2,v_3,v_5,v_6\}$.
Note that, $P_i(v_0,v_6) (1\leq i\leq 4)$ does not include $v_0$.
The CP set of $v_6$ in $\phi$ is the common nodes on four paths, \ie $C_{\phi}(v_0,v_6)=\{v_3,v_6\}$.
Due to $R_{\phi}(v_0)=\{v_1,v_2,v_3,v_5,v_6\}$, a CP sequence in $\phi$ is the set of CP sets of nodes in $R_{\phi}(v_0)$.
\end{example}

\textbf{In this paper, $\phi$ can be dropped when it is clear from the context. }
Given a blocker set $B \subseteq (V \backslash S)$ and a 
set of CP sequences $\mathbb{C}^s$, we use $Cov_{\mathbb{C}^s}(B)$ denote the coverage of $B$ in $\mathbb{C}^s$, \ie $Cov_{\mathbb{C}^s}(B)=\sum_{C^s \in \mathbb{C}^s}\sum_{C(s,v) \in C^s}\min\{|B \cap C(s,v)|,1\}$.
We can estimate $D_s^L(B)$ by generating a certain number of CP sequences. 
Lemma \ref{lem:lowerest} shows that $\frac{Cov_{\mathbb{C}^s}(B)}{|\mathbb{C}^s|}$ is an unbiased estimator of $D_s^L(B)$.

\begin{lemma}
\label{lem:lowerest}
Given a misinformation seed set $S\subseteq V$, a unified seed node $s$ and the set of CP sequences $\mathbb{C}^s$, for any blocker set $B \subseteq (V \backslash S)$,
    \begin{align}
        D_s^L(B)=\mathbb{E}[\frac{Cov_{\mathbb{C}^s}(B)}{|\mathbb{C}^s|}]\text{,}
    \end{align}
    where the expectation is taken over the random choices of $\mathbb{C}^s$.
\end{lemma}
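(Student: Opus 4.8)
The plan is to show that for a fixed blocker set $B$, the random variable $\frac{Cov_{\mathbb{C}^s}(B)}{|\mathbb{C}^s|}$ has expectation $D_s^L(B)$. Since the CP sequences in $\mathbb{C}^s$ are i.i.d., by linearity of expectation it suffices to prove the single-sample identity: for one CP sequence $C^s_{\Phi}$ generated from a random realization $\Phi \sim \Omega$, we have $\mathbb{E}[Cov_{\{C^s_{\Phi}\}}(B)] = D_s^L(B)$. Then averaging over $|\mathbb{C}^s|$ independent copies preserves the expectation.

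First I would unfold both sides over the random realization $\Phi$. By definition $D_s^L(B) = \sum_{\phi \in \Omega} p(\phi)\cdot |\cup_{v\in B} N_\phi(v)|$, so it is enough to establish, for each \emph{fixed} realization $\phi$, the combinatorial identity
\begin{align}
    |\cup_{v\in B} N_\phi(v)| = \sum_{C(s,w)\in C^s_\phi} \min\{|B\cap C(s,w)|,1\}.
\end{align}
The right-hand side counts the number of misinformation receivers $w \in R_\phi(s)\backslash S$ whose CP set $C_\phi(s,w)$ intersects $B$. So the core claim reduces to: a node $w$ lies in $\cup_{v\in B} N_\phi(v)$ (i.e., $w$ becomes inactive after blocking $B$) if and only if $B \cap C_\phi(s,w) \neq \emptyset$, i.e., $B$ contains some node lying on \emph{every} $s$-to-$w$ path in $\phi$.

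The key step — and the main obstacle — is this last equivalence, which is really a statement about cut vertices / dominators. For the forward direction: if $w$ becomes unreachable from $s$ after deleting $B$ from $\phi$, then every $s$-to-$w$ path must use a node of $B$; since this holds for all such paths, and $C_\phi(s,w)$ is precisely the set of nodes common to all these paths, a counting/pigeonhole-free argument is needed — one shows that if no \emph{single} node of $B$ were on all paths, one could route around $B$. Concretely, I would argue the contrapositive: if $B\cap C_\phi(s,w)=\emptyset$, then for each $u\in B$ there is an $s$-$w$ path avoiding $u$; but this does not immediately give one path avoiding \emph{all} of $B$, so the clean way is to note that $\cup_{v\in B}N_\phi(v)$ only counts nodes protected by blocking a \emph{single} node $v\in B$ (this is exactly how $D_s^L$ was defined — the combination effect is deliberately discarded), i.e., $w\in\cup_{v\in B}N_\phi(v)$ iff $\exists v\in B$ with $v$ on every $s$-$w$ path, which is iff $\exists v\in B\cap C_\phi(s,w)$, iff $B\cap C_\phi(s,w)\neq\emptyset$. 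The backward direction is then immediate: if $v\in B\cap C_\phi(s,w)$, then $v$ dominates $w$, so $w\in N_\phi(v)\subseteq \cup_{v'\in B}N_{\phi}(v')$. I expect the delicate point to be stating precisely what $N_\phi(v)$ means (nodes whose only route from $s$ passes through $v$, equivalently nodes in the subtree of $v$ in the dominator tree of $\phi$) and confirming that $D_s^L$'s definition via $\cup_{v\in B}N_\phi(v)$ indeed matches the single-node-domination semantics rather than genuine reachability after deleting all of $B$. Once that identity is in hand, taking $p(\phi)$-weighted sums over $\Omega$ gives $\mathbb{E}[Cov_{\{C^s_\Phi\}}(B)]=D_s^L(B)$, and averaging over the i.i.d. sequences in $\mathbb{C}^s$ completes the proof.
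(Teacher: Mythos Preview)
Your proposal is correct and follows essentially the same route as the paper: reduce to a single random realization, then establish the per-$\phi$ identity that $w\in\bigcup_{v\in B}N_\phi(v)$ iff $B\cap C_\phi(s,w)\neq\emptyset$, and finish by averaging. The paper's write-up is just a terser chain of equalities (it decomposes $D_s^L(B)$ node-by-node as $\sum_{v}\sum_{\phi\in\Phi(v,B)}p(\phi)$ with $\Phi(v,B)=\{\phi:\exists u\in B,\ u\in\bigcap_i P_i(s,v)\}$, then swaps the order of summation), but the underlying combinatorial content is identical.

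One remark: your initial parenthetical ``(i.e., $w$ becomes inactive after blocking $B$)'' and the subsequent detour about routing around all of $B$ are misleading --- membership in $\bigcup_{v\in B}N_\phi(v)$ is \emph{not} equivalent to $w$ being disconnected after deleting all of $B$ (that would be $D_s$, not $D_s^L$). You correctly catch and fix this a few lines later, so the final argument is fine; in a clean write-up you can go straight to ``$w\in\bigcup_{v\in B}N_\phi(v)\iff\exists v\in B$ dominating $w$ in $\phi\iff B\cap C_\phi(s,w)\neq\emptyset$'' and skip the false start.
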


\begin{algorithm}[t]
{
    \SetVline
    \footnotesize
    \caption{Local Sampling}
    \label{alg:LS}
    \Input{The graph $G=(V,E)$, the seed set $S$ and the unified seed node $s$.}
    \Output{The CP sequence $C_{\phi}^s$.}
    \State{generate a realization $\phi$ from $G$}
    \State{obtain all reachable nodes of $s$ in $\phi$ by DFS and store them into $R_{\phi}(s)$}
    \State{record the immediate dominator of each node $u \in R_{\phi}(s)$ as $\textit{idom}[u]$ and construct the DT roots at $s$ of $\phi$}
    \State{$C_{\phi}^s\leftarrow \emptyset$, $M[\cdot]\leftarrow 0$}
    \State{\textbf{for each} $v \in S$ \textbf{do} $M[v]\leftarrow 1$}
    \ForEach{$u \in R_{\phi}(s)$ with the order of DFS traversal from $s$}
    {
        \If{$M[u]=0$}
        {
            \If{$M[\textit{idom}[u]]=0$}
            {
                \State{$C_{\phi}(s,u)\leftarrow C_{\phi}(s,u) \cup C_{\phi}(s,\textit{idom}[u]) \cup \{u\}$}
            }
            \Else
            {
                \State{$C_{\phi}(s,u)\leftarrow C_{\phi}(s,u) \cup \{u\}$}
            }
            \State{$C_{\phi}^s \leftarrow C_{\phi}^s \cup C_{\phi}(s,u)$}
        }
        
    }
    \State{\textbf{return} $C_{\phi}^s$}
}
\end{algorithm}

Based on the above analysis, to accurately estimate $D^L_s(B)$, we need to generate sufficient CP sequences, which consist of numerous CP sets.
A straightforward method to generate a CP set is first to find all paths between two nodes and then identify the common nodes on these paths. 
However, finding all paths between two nodes is time-consuming. 
To address this issue, we devise a scalable implementation to generate a CP set in polynomial time based on the DT (Definition \ref{def:DT} in Section \ref{pre:revisit}). 
Given the source node $s$ and a node $v \in (V\backslash S)$, the construction of the CP set of $v$ is as follows. 

\begin{itemize}[leftmargin=*]
    \item Generate a realization $\phi$ from $G$. 
    \item Construct the DT of $\phi$ by Lengauer-Tarjan algorithm~\cite{DBLP:journals/toplas/LengauerT79}, obtain all the nodes on the path from $s$ to $v$ (exclude $s$) in DT and store them into $C_{\phi}(s,v)$.
\end{itemize}

We can observe that the time complexity of generating a CP set is the same as that of the Lengauer-Tarjan algorithm~\cite{DBLP:journals/toplas/LengauerT79}, which is $\mathcal{O}(m \cdot\alpha(m,n))$, $\alpha$ is the inverse function of Ackerman’s function~\cite{ackermann1928hilbertschen}.
Since DT is a tree, each node in DT has only one incoming neighbor. 
Based on this property, we further propose an efficient algorithm for constructing a CP sequence, whose details are shown in Algorithm \ref{alg:LS}.
We first generate a random realization $\phi$ from $G$ (Line 1).
Then we obtain all reachable nodes of $s$ in $\phi$ by applying DFS, and store them into $R_{\phi}(s)$ (Line 2).
By applying Lengauer-Tarjan algorithm~\cite{DBLP:journals/toplas/LengauerT79}, the immediate dominator of each node $u \in R_{\phi}(s)$ is recorded as \textit{idom}$[u]$ and we construct the corresponding DT roots at $s$ of $\phi$ (Line 3). 
In Line 4, we initialize $C_{\phi}^s$ as $\emptyset$ to store the CP sequence and $M[\cdot]$ as 0. 
If $v\in S$, we set $M[v]=1$ (Line 5) and we only construct the CP set for the nodes in $R_{\phi}(s)\backslash S$ (Line 7).
Note that, we prioritize constructing the CP set for nodes with the order of DFS traversal from $s$ (Line 6). 
In such a setting, for each node $u \in (R_{\phi}(s)\backslash S)$, the CP set of $\textit{idom}[u]$ will be generated earlier than that of $u$ since $\textit{idom}[u]$ will be traversed earlier.
When constructing the CP set for node $u$, if $M[\textit{idom}[u]]=0$ (\ie $\textit{idom}[u] \notin S$), $u$ can inherit the CP set of $\textit{idom}[u]$ (Lines 8-9).
Otherwise, the CP set for $u$ will only consist of $u$ itself (Lines 10-11). 
It is clear that in Lines 6-12, we only need to process a DFS traversal starting from $s$.
Therefore, the time complexity of Algorithm \ref{alg:LS} is $\mathcal{O}(m \cdot \alpha(m,n))$.

\begin{example}
Here is an example to illustrate the process of Local Sampling. 
As shown in Figure \ref{example_DT}(a), $v_0$ is the misinformation seed node. 
A realization obtained from Figure \ref{example_DT}(a) and the corresponding DT are shown in Figure \ref{example_DT}(b) and Figure \ref{example_DT}(c), respectively. 
Suppose the DFS order is $(v_1,v_3,v_6,v_5,v_2)$. 
Since the immediate dominator of $v_1$ and $v_3$ is both $v_0$, \ie $\textit{idom}[v_1]=\textit{idom}[v_3]=v_0$, the CP set of them only consists themselves, \ie $C_{\phi}(v_0,v_1)=\{v_1\}$ and $C_{\phi}(v_0,v_3)=~\{v_3\}$. 
Since $\textit{idom}[v_6] = \textit{idom}[v_5] = v_3$, we can obtain that $C_{\phi}(v_0,v_6)=\{v_3,v_6\}$ and~$C_{\phi}(v_0,v_5)=\{v_3,v_5\}$. 
Similarly, $C_{\phi}(v_0,v_2)=\{v_2\}$.
\end{example}

\subsection{\myblue{Upper Bounding Function Estimation}}\label{section:est_UB}

\begin{algorithm}[t]
{
    \SetVline
    \footnotesize
    \caption{Global Sampling}
    \label{alg:GS}
    \Input{The graph $G=(V,E)$, the seed set $S$ and the unified seed node $s$.}
    \Output{The random LRR set $L_{\phi}(v)$.}
    \State{generate a realization $\phi$ obtained from $G$ and a node $v$ is randomly selected from $V_s^{\prime}$ with the probability of $\frac{1}{|V_s^{\prime}|}$}
    \If{$v \in R_{\phi}(s)$}
    {
        \State{$L_{\phi}(v) \leftarrow$ the set of nodes that can reach $v$ in ${\phi}^s$}
    }
    \State{\textbf{else} $L_{\phi}(v) \leftarrow \emptyset$}
    
    \State{\textbf{return} $L_{\phi}(v)$}
}
\end{algorithm}

\myblue{To estimate the upper bounding function, in the following, we first propose the concept of LRR set.}
\begin{definition} [Local Reverse Reachable (LRR) Set]
Given a graph $G=(V, E)$, a seed set $S\subseteq V$, a unified seed node $s$, a node $v\in (V\backslash S)$ and a realization $\phi=(V({\phi}),E({\phi}))$, a Local Reverse Reachable (LRR) set of $v$, denoted by $L_{\phi}(v)$, is the set of nodes that can reach $v$ in ${\phi}^s$.
\end{definition}

By generating a certain number of random LRR sets $\mathbb{L}$, we can obtain that $|V'_s|\cdot \frac{Cov_{\mathbb{L}}(B)}{|\mathbb{L}|}$ is an unbiased estimate of $D_s^U(B)$, where~$V'_s$ denotes the set of nodes that can be reached by $s$ in $G$ (exclude $S$).
\begin{lemma}\label{lem:upper esti}
    Given a seed set $S\subseteq V$, a unified seed node $s$ and the set of random LRR sets $\mathbb{L}$, for any blocker set $B \subseteq (V \backslash S)$,
    \begin{align}
        D_s^U(B)=|V_s^{\prime}|\cdot \mathbb{E}[\frac{Cov_{\mathbb{L}}(B)}{|\mathbb{L}|}]\text{,}
    \end{align}
    where the expectation is taken over the random choices of $\mathbb{L}$, $Cov_{\mathbb{L}}(B)$ is the coverage of $B$ in $\mathbb{L}$,\ie $Cov_{\mathbb{L}}(B)=\sum_{L(v) \in \mathbb{L}}\min\{|B \cap L(v)|,1\}$.
\end{lemma}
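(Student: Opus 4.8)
Here is my plan for proving Lemma~\ref{lem:upper esti}.

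\medskip

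\noindent\textbf{Proof proposal.} The plan is to show that for a single random LRR set $L_\phi(v)$ generated by Algorithm~\ref{alg:GS}, the quantity $|V'_s|\cdot\mathbb{E}[\min\{|B\cap L_\phi(v)|,1\}]$ already equals $D_s^U(B)$; the general statement for $\mathbb{L}$ then follows by linearity of expectation and the fact that the LRR sets in $\mathbb{L}$ are i.i.d. First I would unwind the two independent sources of randomness in Algorithm~\ref{alg:GS}: the realization $\Phi\sim\Omega$ with probability $p(\phi)$, and the node $v$ drawn uniformly from $V'_s$ with probability $1/|V'_s|$. Conditioning on $\Phi=\phi$, I write
\begin{align*}
\mathbb{E}\big[\min\{|B\cap L_\phi(v)|,1\}\ \big|\ \Phi=\phi\big]
=\frac{1}{|V'_s|}\sum_{v\in V'_s}\min\{|B\cap L_\phi(v)|,1\}.
\end{align*}

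\noindent The key combinatorial step is the identity
\begin{align*}
\min\{|B\cap L_\phi(v)|,1\}=\mathbf{1}\big[\,v\in M_\phi(B)\,\big]\qquad\text{for } v\in V'_s,
\end{align*}
which I would argue as follows. By definition $L_\phi(v)$ is the set of nodes that can reach $v$ in $\phi^s$ (and $L_\phi(v)=\emptyset$ if $v\notin R_\phi(s)$, though every $v\in V'_s$ is reachable from $s$ in $G$, so reachability in $\phi$ is exactly what decides membership). Hence $|B\cap L_\phi(v)|\geq 1$ iff some $b\in B$ can reach $v$ in $\phi^s$, which by definition of $M_\phi(B)$ (the set of nodes reachable from $B$ in $\phi^s$) is precisely the statement $v\in M_\phi(B)$. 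Summing this indicator over $v\in V'_s$ gives $|M_\phi(B)|$, since $M_\phi(B)\subseteq V(\phi^s)\subseteq V'_s$. Therefore the conditional expectation above equals $|M_\phi(B)|/|V'_s|$, and multiplying by $|V'_s|$ and taking expectation over $\Phi$ yields $\sum_{\phi\in\Omega}p(\phi)\,|M_\phi(B)| = D_s^U(B)$ by the definition of the upper bound.

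\medskip

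\noindent For the statement with $|\mathbb{L}|$ samples, I would note $Cov_{\mathbb{L}}(B)=\sum_{L(v)\in\mathbb{L}}\min\{|B\cap L(v)|,1\}$ is a sum of $|\mathbb{L}|$ i.i.d.\ copies of the single-sample random variable analyzed above, so $\mathbb{E}[Cov_{\mathbb{L}}(B)/|\mathbb{L}|]$ equals the single-sample expectation, and multiplying by $|V'_s|$ gives $D_s^U(B)$. The main obstacle — really the only nontrivial point — is the boundary bookkeeping in the indicator identity: being careful that $S$ is excluded from $V(\phi^s)$ and from $V'_s$, that a node $v\in V'_s$ with $v\notin R_\phi(s)$ correctly contributes $0$ (its LRR set is empty, so the indicator is $0$, matching $v\notin M_\phi(B)$), and that $M_\phi(B)$ never escapes $V'_s$ so the sum of indicators is exactly $|M_\phi(B)|$ with no over- or under-counting. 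Everything else is linearity of expectation and the law of total expectation.
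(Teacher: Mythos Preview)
Your proposal is correct and follows essentially the same approach as the paper: both arguments hinge on the identity $\min\{|B\cap L_\phi(v)|,1\}=\mathbf{1}[v\in M_\phi(B)]$ (equivalently, $B\cap L_\Phi(v)\neq\emptyset$ iff $v\in M_\Phi(B)$) and then average over a uniform $v\in V_s'$ and a random realization $\Phi$. The paper's proof is just a more compressed four-line version of your argument, starting directly from $D_s^U(B)=\sum_{v\in V_s'}\Pr_{\Phi}[B\cap L_\Phi(v)\neq\emptyset]$; your extra care with the boundary cases ($v\notin R_\phi(s)$, $M_\phi(B)\subseteq V_s'$) is sound and simply makes explicit what the paper leaves implicit.
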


\begin{table}[t]
  \caption{\myblue{Time cost of generating 10K LRR sequences on DBLP}}
  \label{tab:LRRsequence}
\begin{tabular}{|c|c|c|c|c|c|c|}
  \hline
  \textbf{$|S|$} & 10 & 20 & 30 & 40 & 50 \\ 
  \hline
  \textsf{time (s)} & 26438.5 & 48907.3 & 67038.2 & 83011.5 & 99088.0 \\   
  \hline
\end{tabular}
\end{table}
Based on the above lemma, we propose Global Sampling to enable an accurate estimation of $D_s^U(\cdot)$.
As shown in Algorithm \ref{alg:GS}, we first generate a realization $\phi$ and randomly select a 
node $v$ from ~$V'_s$ (Line 1).
If the selected node $v$ cannot be influenced by $s$ in $\phi$, which implies that there are no nodes that can alternative-protect $v$ within this realization, we set the LRR set to an empty set (Line 4).

\myblue{\myparagraph{Discussion}
Actually, the upper bounding function can be estimated in a similar manner to Local Sampling based on the concept of Local Reverse Reachable (LRR) sequence for $\phi$, denoted by $L_{\phi}^s$. $L_{\phi}^s$ is the sequence of the LRR sets of the nodes that can be reached by $s$ in $\phi$ (exclude $S$), \ie $L_{\phi}^s=\{L_{\phi}(v):v\in R_{\phi}(s)\backslash S \}$.
By generating a certain number of LRR sequences $\mathbb{L}^s$, $D_s^U(B)$ can be unbiasedly estimated via $\frac{Cov_{\mathbb{L}^s}(B)}{|\mathbb{L}^s|}$, where $Cov_{\mathbb{L}^s}(B)$ denotes the coverage of $B$ in $\mathbb{L}^s$, \ie $Cov_{\mathbb{L}^s}(B)=\sum_{L^s \in \mathbb{L}^s}\sum_{L(v) \in L^s}\min\{|B \cap L(v)|,1\}$.

However, since ${\phi}^s$ is not necessarily a tree structure, we cannot generate an LRR sequence with only one single DFS traversal, as we do when generating a CP sequence. 
Generally, we need to conduct $|R_{\phi}(s)\backslash S|$ DFS traversals for each LRR sequence generation, which is rather time-consuming, especially when a substantial number of users can receive the misinformation.
Thus, we propose the Global Sampling method to tackle the problem. 
In Table \ref{tab:LRRsequence}, we present the time cost of generating 10K LRR sequences on the DBLP network, which has more than one million edges (the dataset details can be seen in Section \ref{sec:exp}).
As observed, with the increase of the number of misinformation seed nodes, the time cost of LRR sequence generation grows.
In particular, when $|S|=50$, the time required to produce the samples even exceeds one day, which is practically infeasible.
In contrast, based on Global Sampling, our solution for maximizing the upper bounding function only needs 7693.55 seconds when $|S|=50$ on DBLP, and the sample size is larger than 10K to ensure the desired approximation guarantee.
}

\section{Bounding Functions Maximization}
\label{sec:boundmaxi}

Based on the proposed estimation methods in Section \ref{sec:boundesti}, we first design two algorithms for \myblue{bounding functions maximization} in Section \ref{section:LSBM} and Section \ref{section:GSBM}, \myblue{which aim to find a set of blockers to maximize the bounding functions with approximation guarantees.
We then propose a heuristic algorithm to solve IMIN in Section \ref{section:LHGA}.}

\begin{table}[t]
  \caption{The number of CP sequences generated by OPBM and LSBM on DBLP ($k=100$, $|S|=10$, $\beta=0.1$, $\delta=1/n$)}
  \label{tab:nocps}
\begin{tabular}{|c|c|c|c|c|c|c|}
  \hline
  \textbf{$\epsilon$} & 0.1 & 0.2 & 0.3 & 0.4 & 0.5 \\ 
  \hline
  \textsf{OPBM} & 2,871,296 & 1,435,648 & 717,824 & 358,912 & 179,456 \\   
  \hline
  \textsf{LSBM} & 22,544 & 11,272 & 5,636 & 2,818 & 2,818 \\
  \hline
\end{tabular}
\vspace{1.5mm}
\end{table}

\subsection{\myblue{Lower Bounding Function Maximization}}\label{section:LSBM}
\label{boundsmax:lower}

In general, to get a solution for lower \myblue{bounding function maximization}, we need to obtain the value of $D_s^L(\cdot)$.
As we stated in Section \ref{section:est_LB}, $Cov_{\mathbb{C}^s}(\cdot)/|\mathbb{C}^s|$ is an unbiased estimator of $D_s^L(\cdot)$.
That is, for estimating $D_s^L(\cdot)$ accurately, we need to generate a certain number of samples (\ie CP sequences).
However, the generation of numerous samples may lead to significant computation overhead for the algorithm.
Therefore, it is imperative for us to determine an appropriate sample size, so as to strike a balance between efficiency and accuracy.
A straightforward approach is to employ the state-of-the-art algorithm for influence maximization (\ie OPIM-C~\cite{DBLP:conf/sigmod/TangTXY18}). Specifically, OPIM-C first generates two independent collections of samples $\mathcal{R}_1$ and $\mathcal{R}_2$ and then generates a solution using the greedy algorithm on $\mathcal{R}_1$. Afterward, OPIM-C assesses whether the solution can meet the stopping criterion. 
If the criterion is met, the solution is returned; otherwise, the above-mentioned steps are repeated until the algorithm terminates. In particular, the stopping criterion of OPIM-C is $\mathcal{M}(S) \geq1-1/e-\epsilon$, where $S$ is the solution in the current round and $\mathcal{M}(\cdot)$ is the function determined by two martingale-based concentration bounds~\cite{DBLP:conf/sigmod/TangSX15} on $\mathcal{R}_1$ and $\mathcal{R}_2$.

\myblue{However, OPIM-C cannot be directly applied for our problem. 
The main reason is that OPIM-C relies on RIS, which is not suitable for the estimation of bounding function as stated before.
Besides, since only the nodes eligible to be reached by misinformation can contribute to the bounding function, the concentration bounds leveraged by OPIM-C are also not tailored to the unbiased estimator proposed in Section \ref{section:est_LB}, which results in the stopping criterion of OPIM-C being infeasible for our problem.}
This motivates us to develop an efficient algorithm for lower \myblue{bounding function maximization}. 
In this section, we first propose two martingale-based concentration bounds, based on which, we then derive a novel stopping criterion.
Combining this with Local Sampling, we propose \myblue{Local Sampling based Bounding Function Maximization (LSBM).}
In Table \ref{tab:nocps}, \myblue{we report the number of CP sequences generated by LSBM and the algorithm, which employs OPIM-C for our problem directly but with the proposed unbiased estimator of $D_s^L(\cdot)$ (termed as OPBM) 
on DBLP.}
As shown, in most cases, the number of samples generated by OPBM is more than 100x larger than that of LSBM.
In addition, on the larger datasets, OPBM cannot even finish due to the memory overflow. 
In the following, we first propose two novel concentration bounds based on the concept of \textit{martingale}~\cite{DBLP:journals/im/ChungL06}.
\begin{lemma}[Concentration Bounds]\label{new bound}
    Given a blocker set $B$, a seed node $s$ and a set of $\theta$ random CP sequences $\mathbb{C}^s$. For any $\lambda >0$,
    \begin{align}
        &\Pr[\frac{Cov_{\mathbb{C}^s}(B)}{\mathbb{E}[I_G(s)]}-\frac{D^L_s(B)\cdot \theta}{\mathbb{E}[I_G(s)]} \geq \lambda] \leq \exp({-\frac{\lambda^2}{\frac{2D_s^L(B)}{\mathbb{E}[I_G(s)]}\cdot \theta+\frac{2}{3}\lambda}}),\label{new bound1}\\
        &\Pr[\frac{Cov_{\mathbb{C}^s}(B)}{\mathbb{E}[I_G(s)]}-\frac{D^L_s(B)\cdot \theta}{\mathbb{E}[I_G(s)]} \leq -\lambda] \leq \exp({-\frac{\lambda^2}{\frac{2D^L_s(B)}{\mathbb{E}[I_G(s)]}\cdot \theta}}).\label{new bound2}
    \end{align}
\end{lemma}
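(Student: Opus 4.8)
The plan is to express $Cov_{\mathbb{C}^s}(B)$ as a sum over the $\theta$ independent CP sequences and apply a martingale concentration inequality of Chernoff--Hoeffding type. First I would fix the blocker set $B$ and, for each $i \in \{1,\dots,\theta\}$, define the random variable $X_i = \sum_{C(s,v) \in C_i^s}\min\{|B \cap C(s,v)|,1\}$, i.e. the contribution of the $i$-th CP sequence to the coverage, so that $Cov_{\mathbb{C}^s}(B) = \sum_{i=1}^{\theta} X_i$. By Lemma~\ref{lem:lowerest}, $\mathbb{E}[X_i] = D_s^L(B)$. The key observation is that $X_i$ is not bounded by $1$ (a single CP sequence can cover many nodes), but $X_i$ is at most the number of nodes reachable from $s$ in the $i$-th realization, $I_{\phi_i}(s)$, and hence $\mathbb{E}[X_i] = D_s^L(B) \le \mathbb{E}[I_G(s)]$. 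This suggests normalizing by $\mathbb{E}[I_G(s)]$: set $Y_i = X_i / \mathbb{E}[I_G(s)]$, and form the martingale difference sequence $Z_i = Y_i - \mathbb{E}[Y_i]$ with respect to the natural filtration generated by $\phi_1,\dots,\phi_i$.

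Next I would verify the hypotheses needed for a Bernstein/Freedman-type martingale bound: (i) the $Z_i$ have mean zero conditioned on the past (immediate, since the sequences are i.i.d.); (ii) an almost-sure upper bound on $Z_i$ --- here I would argue $Y_i \le I_{\phi_i}(s)/\mathbb{E}[I_G(s)]$, and more usefully that the "one-sided increments" are controlled, which is where the $\tfrac{2}{3}\lambda$ term in \eqref{new bound1} will come from; and (iii) a bound on the predictable quadratic variation $\sum_i \mathbb{E}[Z_i^2 \mid \mathcal{F}_{i-1}]$. For (iii) the crucial inequality is $\mathbb{E}[X_i^2] \le \mathbb{E}[I_G(s)] \cdot \mathbb{E}[X_i] = \mathbb{E}[I_G(s)] \cdot D_s^L(B)$, which follows because $X_i \le I_{\phi_i}(s)$ pointwise so $X_i^2 \le I_{\phi_i}(s) \cdot X_i$ and then taking expectations (one must be slightly careful that $X_i$ and $I_{\phi_i}(s)$ live on the same realization, so this is a genuine pointwise domination, not an independence argument). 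Dividing by $\mathbb{E}[I_G(s)]^2$ gives $\sum_i \mathbb{E}[Z_i^2] \le \theta D_s^L(B)/\mathbb{E}[I_G(s)]$, which is exactly the variance proxy appearing in both \eqref{new bound1} and \eqref{new bound2}. For the upper-tail bound \eqref{new bound1} I would invoke the standard martingale Bernstein inequality (e.g. the form in Chung--Lu~\cite{DBLP:journals/im/ChungL06}) with variance parameter $\theta D_s^L(B)/\mathbb{E}[I_G(s)]$ and range parameter arising from the one-sided increment bound, yielding the denominator $2\theta D_s^L(B)/\mathbb{E}[I_G(s)] + \tfrac{2}{3}\lambda$. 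For the lower-tail bound \eqref{new bound2}, the sharper denominator $2\theta D_s^L(B)/\mathbb{E}[I_G(s)]$ (with no $\lambda$ term) comes from the fact that lower deviations of a sum of nonnegative variables admit a cleaner bound --- I would reduce to the one-sided inequality where only the variance term is needed, analogous to the Chernoff lower bound $\exp(-\lambda^2/2\mu)$ for Poisson-like sums.

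The main obstacle I anticipate is handling the unboundedness of the per-sequence variable $X_i$ rigorously: the classical Chernoff argument (and the version used in IMM/OPIM for RIS) relies on each sample contributing a $\{0,1\}$ indicator, whereas here a CP sequence contributes a potentially large integer whose size is itself random and correlated with $X_i$. The resolution is precisely the pointwise domination $X_i \le I_{\phi_i}(s)$ together with $\mathbb{E}[I_{\phi_i}(s)] = \mathbb{E}[I_G(s)]$, which lets one treat $\mathbb{E}[I_G(s)]$ as an effective normalization constant playing the role that "$1$" plays in the RIS analysis; the technical care is in feeding this into the moment generating function estimate $\mathbb{E}[e^{t Z_i}\mid \mathcal{F}_{i-1}] \le \exp(\text{variance term})$ so that the MGF bound holds with the claimed constants, rather than merely up to some universal constant. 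A secondary point to get right is that the bounds must hold for a \emph{fixed} $B$ only (no union bound over all $\binom{n}{k}$ blocker sets is claimed in the lemma statement), so the martingale argument is applied to a single predetermined $B$ and the filtration is generated purely by the i.i.d. realizations, which keeps step (i) trivial.
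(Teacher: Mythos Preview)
Your overall skeleton matches the paper's: set $x_i = Cov_{C_i^s}(B)/\mathbb{E}[I_G(s)]$ and $p = D_s^L(B)/\mathbb{E}[I_G(s)]$, form the centered partial sums $M_i = \sum_{j\le i}(x_j - p)$, verify the martingale property from independence of the realizations, and invoke the Chung--Lu inequality~\cite{DBLP:journals/im/ChungL06}. The paper's argument is much shorter than what you outline, though: it simply asserts $x_i \in [0,1]$, from which $|M_j - M_{j-1}| \le 1$ and $\text{Var}[x_i] \le \mathbb{E}[x_i^2] \le \mathbb{E}[x_i] = p$ follow immediately, yielding the parameters $a = 1$ and $b \le \theta p$ in Chung--Lu. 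The paper never touches the pointwise domination $X_i \le I_{\phi_i}(s)$ or any Bernstein/Freedman variance calculation based on it.

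Your proposed workaround for the unboundedness obstacle has a real gap. From $X_i^2 \le I_{\phi_i}(s)\cdot X_i$ pointwise you obtain only $\mathbb{E}[X_i^2] \le \mathbb{E}[I_{\phi_i}(s)\, X_i]$; reaching $\mathbb{E}[I_G(s)]\cdot D_s^L(B)$ would require $\mathbb{E}[I_{\phi_i}(s)\, X_i] \le \mathbb{E}[I_{\phi_i}(s)]\cdot \mathbb{E}[X_i]$, i.e.\ nonpositive covariance between $I_{\phi_i}(s)$ and $X_i$. Both are increasing in the same random realization (more live edges means both a larger reachable set and more CP sets hit by $B$), so the covariance is typically positive and the inequality fails. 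Likewise your step~(ii) gives only the random bound $Y_i \le I_{\phi_i}(s)/\mathbb{E}[I_G(s)]$, not a deterministic constant $a$, so the hypothesis $|M_j - M_{j-1}| \le a$ of the Chung--Lu lemma is not verified as stated.

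In fairness, the paper's assertion $x_i \in [0,1]$ is itself not argued: $Cov_{C_i^s}(B)$ is bounded above only by $|R_{\phi_i}(s)\setminus S|$, which on a single realization can exceed the \emph{expected} spread $\mathbb{E}[I_G(s)]$. So the obstacle you flag is genuine; the paper sidesteps it by assertion rather than by your domination route. To reproduce the paper's proof you should take $x_i \in [0,1]$ as given and apply Chung--Lu directly; a fully rigorous treatment of the unbounded case is supplied by neither your proposal nor the paper.
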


\begin{algorithm}[t]
{
    \SetVline
    \footnotesize
    \caption{\myblue{LSBM}}
    \label{alg:LSBM}
    \Input{The graph $G=(V,E)$, the seed set $S$, the unified seed node $s$, the budget $k$ and parameter $\beta,\epsilon,\delta$.}
    \Output{The blocker set $B_L$ with $(1-1/e-\epsilon)$-approximation.}
    \State{$\textit{ON} \leftarrow $ the set of nodes that can be directly activated by $S$}
    \If{$|\textit{ON}|\leq k$}
    {
        \State{\textbf{return} $B_L=\textit{ON}$}
    }
    \State{$\hat{I}_G(s)\leftarrow$ the estimated value of $\mathbb{E}[I_G(s)]$ with $(\beta,\frac{\delta}{6})$-approximation}
    \State{$\text{OPT}^L\leftarrow$ the lower bound of $D_s^L(B^o_L)$}
    \State{$\theta_{\max}\leftarrow \frac{2\hat{I}_G(s)\left( (1-1/e)\sqrt{\ln\frac{12}{\delta}}+\sqrt{(1-1/e)(\ln\binom{n-|S|}{k}+\ln\frac{12}{\delta})} \right)^2}{(1-\beta)\epsilon^2 \text{OPT}^L}$}
    \State{$\theta_0\leftarrow \theta_{\max}\cdot(1-\beta)\epsilon^2 \text{OPT}^L/\hat{I}_G(s)$}
    \State{$i_{\max}\leftarrow \lceil \log_2\frac{\theta_{\max}}{\theta_0} \rceil$}
    \State{generate two sets $\mathbb{C}_1^s$, $\mathbb{C}_2^s$ of $\theta_0$ random CP sequences, respectively}
    \State{$a_1\leftarrow \ln\frac{3i_{max}}{\delta},a_2\leftarrow \ln\frac{3i_{max}}{\delta}$}
    \For{$i\leftarrow1$ to $i_{\max}$}
    {

        \State{$B_L\leftarrow \text{Max-Coverage}(\mathbb{C}_1^s,k)$}
        \State{$\sigma^L(B_L) \leftarrow 0$}
        \If{$Cov_{\mathbb{C}^s_2}(B_L)\cdot(1-\beta)/\hat{I}_G(s)\geq 5a_1/18$}
        {
            \State{$\sigma^L(B_L)\leftarrow\left((\sqrt{\frac{Cov_{\mathbb{C}^s_2}(B_L)\cdot(1-\beta)}{\hat{I}_G(s)}+\frac{2a_1}{9}}-\sqrt{\frac{a_1}{2}})^2-\frac{a_1}{18}\right)\cdot\frac{1}{|\mathbb{C}_2^s|}$}
        }
        \ElseIf{$Cov_{\mathbb{C}^s_2}(B_L)\cdot(1+\beta)/\hat{I}_G(s)\leq 5a_1/18$}
        {
            \State{$\sigma^L(B_L)\leftarrow\left((\sqrt{\frac{Cov_{\mathbb{C}^s_2}(B_L)\cdot(1+\beta)}{\hat{I}_G(s)}+\frac{2a_1}{9}}-\sqrt{\frac{a_1}{2}})^2-\frac{a_1}{18}\right)\cdot\frac{1}{|\mathbb{C}_2^s|}$}
        }
        \State{$Cov_{\mathbb{C}^s_1}^u(B_L^o) \leftarrow \underset{0\leq i \leq k}{\min}\left(Cov_{\mathbb{C}^s_1}(B_i) + \underset{v \in maxMC(B_i,k)}{\sum}Cov_{\mathbb{C}^s_1}(v~|~B_i)\right)$}
        \State{$\sigma^U(B_L^o)\leftarrow\left(\sqrt{\frac{Cov_{\mathbb{C}^s_1}^u(B_L^o)\cdot(1+\beta)}{\hat{I}_G(s)}+\frac{a_2}{2}}+\sqrt{\frac{a_2}{2}}\right)^2\cdot\frac{1}{|\mathbb{C}_1^s|}$}
        \If{$\sigma^L(B_L)/ \sigma^U(B_L^o)\geq 1-1/e-\epsilon$~\textbf{or}~$i=i_{\max}$}
        {
            \State{\textbf{return} $B_L$}
        }
        \State{double the sizes of $\mathbb{C}_1^s$ and $\mathbb{C}_2^s$ with new CP sequences}
    }
}

\footnotesize
\vspace{2mm}
		\State{\textbf{Procedure} 
  {Max-Coverage}$(\mathbb{C}^s,k)$}
            \State{$B\leftarrow \emptyset$}
    \For{$i\leftarrow1$ to $k$}
    {
        \State{$u \leftarrow \arg\max_{v \in (V\backslash S)}(Cov_{\mathbb{C}^s}(B\cup\{v\})-Cov_{\mathbb{C}^s}(B))$}
        \State{$B\leftarrow B \cup \{u\}$}
    }
    \State{\textbf{return} $B$}
\end{algorithm}


\myblue{\myparagraph{LSBM algorithm}}
\myblue{Based on these concentration bounds, we devise a scalable implementation called LSBM for lower bounding function maximization}.
The pseudocode of LSBM is shown in Algorithm~\ref{alg:LSBM}.
Let \textit{ON} be the set of outgoing neighbors of $S$ (Line 1).
When the budget $k$ is no less than the number of nodes in \textit{ON}, LSBM directly returns \textit{ON} as the blocker set (Lines 2-3). 
Then we calculate $\hat{I}_G(s)$ with $(\beta,\frac{\delta}{6})$-approximation \myblue{by employing the generalized stopping rule algorithm introduced in~\cite{DBLP:journals/tkde/ZhuTTWL23}}. In addition, we derive the lower bound of $D_s^L(B^o_L)$ and define the constants $\theta_{\max}$ and $\theta_{0}$ (Lines 4-7). Afterwards, we generate two sets of CP sequences $\mathbb{C}_1^s$ and $\mathbb{C}_2^s$, each of size $\theta_0$ (Line 9). 
The subsequent part of the algorithm consists of at most $i_{max}$ iterations. 
In each iteration, we first invoke Procedure Max-Coverage to get a blocker set $B_L$, \ie finding a set of $k$ nodes such that $B_L$ intersects with as many CP sequences as possible in $\mathbb{C}_1^s$ (Line 12).
Then we derive $\sigma^L(B_L)$ and $\sigma^U(B_L^o)$ from $\mathbb{C}_2^s$ and $\mathbb{C}_1^s$, respectively (Lines 14-19). 
Specifically, $\sigma^L(B_L)$ is the lower bound of $D_s^L(B_L)/\mathbb{E}[I_G(s)]$ and $\sigma^U(B^o_L)$ is the upper bound of $D_s^L(B_L^o)/\mathbb{E}[I_G(s)]$.
If $\sigma^L(B_L)/ \sigma^U(B_L^o)\geq 1-1/e-\epsilon$ or $i=i_{\max}$, LSBM returns $B_L$ and terminates. 
Otherwise, the quantities of CP sequences in $\mathbb{C}_1^s$ and $\mathbb{C}_2^s$ will be doubled, and LSBM will proceed to the next iteration (Lines 20-22). Next, we will explain in detail how LSBM can return a solution with an approximation guarantee.
\myblue{Note that, when deriving $\theta_{\max}$, $\sigma^L(B_L)$ and $\sigma^U(B_L^o)$, it is imperative to carefully consider the error associated with estimating $\mathbb{E}[I_G(s)]$, to ensure the desired approximation guarantee.}

\noindent\underline{\textit{Deriving $\theta_{\max}$}}. Based on previous concentration bounds~\cite{DBLP:conf/sigmod/TangSX15}, Tang \etal derive an upper bound on the sample size required to ensure $(1-1/e-\epsilon)$-approximation holds with probability at least $1-\delta$ for IM. Similarly, we derive the corresponding upper bound on the sample size for the IMIN problem based on our proposed novel concentration bounds. 
The following lemma provides the setting of $\theta_{max}$, ensuring the correctness of LSBM when $i=i_{\max}$.
\begin{lemma}\label{lemma:imax}
    Let $\mathbb{C}^s$ be a set of random CP sequences, $B_L$ be a size-$k$ blocker set generated by applying Max-Coverage on $\mathbb{C}^s$, $B^o_L$ be the optimal solution with size-$k$, $OPT^L$ be the lower bound of $D_s^L(B^o_L)$ and $\hat{I}_G(s)$ be the estimated value of $\mathbb{E}[I_G(s)]$ with $(\beta,\frac{\delta}{6})$-approximation. For fixed $\beta,\epsilon$ and $\delta$, let 
    \begin{align*}
        \theta_{max}= \frac{2\hat{I}_G(s)\left( (1-1/e)\sqrt{\ln\frac{12}{\delta}}+\sqrt{(1-1/e)(\ln\binom{n-|S|}{k}+\ln\frac{12}{\delta})} \right)^2}{(1-\beta)\epsilon^2 OPT^L}\text{,}
    \end{align*}
    if $|\mathbb{C}^s|=\theta \geq \theta_{\max}$, then $B_L$ is $(1-1/e-\epsilon)$-approximate solution with at least $1-\delta/3$ probability.
\end{lemma}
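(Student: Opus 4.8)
\textbf{Proof proposal for Lemma~\ref{lemma:imax}.}
The plan is to mirror the classical OPIM-C sample-complexity argument of Tang \etal, but replacing their Chernoff-type bounds with the new martingale bounds of Lemma~\ref{new bound} and carefully propagating the multiplicative error $\beta$ incurred by using $\hat I_G(s)$ in place of $\mathbb{E}[I_G(s)]$. First I would fix notation: write $\mu=\mathbb{E}[I_G(s)]$, and for any fixed size-$k$ set $A$ let $\rho(A)=D_s^L(A)/\mu$, so that $Cov_{\mathbb{C}^s}(A)/\mu$ is a sum of $\theta$ i.i.d.\ contributions with mean $\rho(A)\theta$. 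The target is to show that, with probability at least $1-\delta/3$, the greedy (Max-Coverage) solution $B_L$ on $\theta\ge\theta_{\max}$ CP sequences satisfies $D_s^L(B_L)\ge(1-1/e-\epsilon)D_s^L(B_L^o)$.

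The key steps, in order, are: (1) \emph{Upper-tail control on the optimum.} Apply the upper-tail bound \eqref{new bound1} to the (fixed) optimal set $B_L^o$ to show that, except with probability $\delta/6$, its empirical coverage does not overshoot: $Cov_{\mathbb{C}^s}(B_L^o)/\mu \le \rho(B_L^o)\theta + \lambda_u$ for a suitable $\lambda_u$ chosen so that $\exp(-\lambda_u^2/(2\rho(B_L^o)\theta+\tfrac23\lambda_u))=\delta/6$; since the greedy coverage is at least $(1-1/e)$ of the best coverage on $\mathbb{C}^s$, this lower-bounds the empirical coverage $Cov_{\mathbb{C}^s}(B_L)$ in terms of $\rho(B_L^o)\theta$. (2) \emph{Lower-tail control on the returned set.} Apply the lower-tail bound \eqref{new bound2} with a union bound over the $\binom{n-|S|}{k}$ candidate size-$k$ sets to show that, except with probability $\delta/6$, \emph{every} size-$k$ set $A$ — in particular $B_L$ — has $D_s^L(A)/\mu \ge Cov_{\mathbb{C}^s}(A)/(\mu\theta) - \lambda_\ell/\theta$ with $\lambda_\ell$ solving $\exp(-\lambda_\ell^2/(2\rho(A)\theta)) = \delta/(6\binom{n-|S|}{k})$; this is where the $\ln\binom{n-|S|}{k}$ term in $\theta_{\max}$ is born. (3) \emph{Chaining and clearing the error.} Combine the two inequalities to get $D_s^L(B_L)\ge (1-1/e)D_s^L(B_L^o) - (\text{error terms in }\lambda_u,\lambda_\ell)$, substitute $D_s^L(B_L^o)\ge OPT^L$ (the known lower bound), and verify that $\theta\ge\theta_{\max}$ makes the aggregate error at most $\epsilon\cdot OPT^L \le \epsilon\cdot D_s^L(B_L^o)$, yielding the $(1-1/e-\epsilon)$ guarantee. (4) \emph{Folding in the $\mathbb{E}[I_G(s)]$ estimation error.} Since $\mu$ appears in $\theta_{\max}$ only through $\hat I_G(s)$, and $\hat I_G(s)\in[(1-\beta)\mu,(1+\beta)\mu]$ with probability $\ge 1-\delta/6$, replace $\mu$ by $(1-\beta)\hat I_G(s)/(1-\beta)$-type bounds wherever it is used to \emph{lower}-bound a sample count; this accounts for the $(1-\beta)$ factor in the denominator of $\theta_{\max}$. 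A final union bound over the three bad events ($\delta/6$ each, plus $\delta/6$ for the $\hat I_G(s)$ estimate, absorbed into the $\delta/3$ budget) closes the argument.

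The main obstacle I anticipate is Step~(3): the two tail bounds are stated in the awkward ``Bernstein-with-$\lambda$'' form rather than as clean relative-error statements, so solving for the $\lambda$'s and then showing algebraically that the specific closed form of $\theta_{\max}$ — with its $\big((1-1/e)\sqrt{\ln(12/\delta)}+\sqrt{(1-1/e)(\ln\binom{n-|S|}{k}+\ln(12/\delta))}\big)^2$ structure — exactly suffices is delicate bookkeeping. The $(1-1/e)$ weighting inside the square root reflects that the optimum's coverage is scaled by $(1-1/e)$ before the greedy set inherits it, so the ``variance proxy'' $\rho(B_L^o)\theta$ in the upper-tail bound and the $\rho(B_L)\theta\ge(1-1/e)\rho(B_L^o)\theta$ in the lower-tail bound must be tracked with matching constants; getting the cross-term in the expanded square to line up with the $\sqrt{a}\sqrt{b}$ cross-term from $(\sqrt a+\sqrt b)^2$ is the crux. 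I would handle this by first proving a clean auxiliary claim of the form ``if $\theta\ge c_1\hat I_G(s)/(\epsilon^2 OPT^L)$ then the upper-tail slack is $\le \epsilon_1 OPT^L\theta/\hat I_G(s)$'' and similarly for the lower tail, then choosing $\epsilon_1+\epsilon_2$ to telescope into $\epsilon$, which reduces the final verification to matching the constant $c_1$ against the bracketed expression in $\theta_{\max}$. The rest (monotonicity of $D_s^L$, nonnegativity, the trivial case $|ON|\le k$ handled separately in Algorithm~\ref{alg:LSBM}) is routine.
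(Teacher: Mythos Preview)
Your overall architecture matches the paper's, but you have the two tail directions swapped, and as written Step~(1) does not do what you claim. You say you apply the \emph{upper}-tail bound to $B_L^o$ to get $Cov_{\mathbb{C}^s}(B_L^o)/\mu \le \rho(B_L^o)\theta + \lambda_u$, and that ``this lower-bounds the empirical coverage $Cov_{\mathbb{C}^s}(B_L)$ in terms of $\rho(B_L^o)\theta$.'' It does not: from an \emph{upper} bound on $Cov_{\mathbb{C}^s}(B_L^o)$ and the greedy inequality $Cov_{\mathbb{C}^s}(B_L)\ge(1-1/e)\,Cov_{\mathbb{C}^s}(B_L^o)$ you cannot extract any lower bound on $Cov_{\mathbb{C}^s}(B_L)$ in terms of $\rho(B_L^o)\theta$. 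What is needed here is the \emph{lower}-tail bound \eqref{new bound2} on the single fixed set $B_L^o$, giving $Cov_{\mathbb{C}^s}(B_L^o)/\theta \ge (1-\epsilon_1)D_s^L(B_L^o)$ except with probability $\delta/12$; then greedy gives $Cov_{\mathbb{C}^s}(B_L)/\theta \ge (1-1/e)(1-\epsilon_1)D_s^L(B_L^o)$. Symmetrically, in Step~(2) the conclusion you want, $D_s^L(A)\ge Cov_{\mathbb{C}^s}(A)/\theta-\text{error}$, is an \emph{upper}-tail statement (it controls $\Pr[Cov_{\mathbb{C}^s}(A)/\theta>D_s^L(A)+\text{error}]$), so you must invoke \eqref{new bound1}, not \eqref{new bound2}; the exponent you wrote, $\exp(-\lambda_\ell^2/(2\rho(A)\theta))$, is the lower-tail form and would certify the wrong inequality. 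The paper handles this step by contradiction: it restricts attention to ``bad'' sets with $D_s^L(A)<(1-1/e-\epsilon)D_s^L(B_L^o)$, which is what lets you replace the $2D_s^L(A)/\mu$ in the Bernstein denominator by $(2-2/e)D_s^L(B_L^o)/\mu$ and get a uniform $\lambda$---your write-up leaves $\lambda_\ell$ depending on $\rho(A)$, which cannot be made uniform over all size-$k$ sets without this restriction.

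Two smaller bookkeeping points: the $\ln(12/\delta)$ in $\theta_{\max}$ corresponds to failure probability $\delta/12$ per tail event (not $\delta/6$), so the budget is $\delta/12+\delta/12=\delta/6$ for the concentration part plus $\delta/6$ for the $(\beta,\delta/6)$-estimate of $\mathbb{E}[I_G(s)]$, totaling $\delta/3$; your ``$\delta/6$ each'' does not add up. And in Step~(4), the role of $\hat I_G(s)$ is simply that $\hat I_G(s)/(1-\beta)\ge \mathbb{E}[I_G(s)]$ with probability $\ge 1-\delta/6$, so $\theta_{\max}$ (which has $\hat I_G(s)/(1-\beta)$ in the numerator) dominates the true-$\mu$ thresholds $\theta_1,\theta_2$; this is a one-line replacement, not a propagation through the tail bounds themselves. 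Once you swap the tails and fix the budget, your proof is exactly the paper's.
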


\noindent\underline{\textit{Deriving $OPT^{L}$}}. Due to the different properties of the objective functions of IMIN and IM, we cannot directly replace $\text{OPT}^L$ with $k$ as OPIM-C~\cite{DBLP:conf/sigmod/TangTXY18}. To address this issue, we set $\text{OPT}^L=\sum_{v \in B^*}\Pr[s\rightarrow v]$, where $B^*$ denote the set of $k$ nodes of $\textit{ON}$ with the $k$ largest probability of being activated by $s$ and $\Pr[s\rightarrow v]$ is the probability that $s$ can activate $v$. 
Since when we select a node $v$ as a blocker, $D_s^L(\{v\}\cup B)- D_s^L(B)\geq \Pr[s\rightarrow v]$. 
The reason why we set $\textit{ON}$ as the candidate set is that the value of $\Pr[s\rightarrow v]$ of each node $v \in \textit{ON}$ can be computed efficiently, otherwise it will become the bottleneck of the whole algorithm.  

\noindent\underline{\textit{Deriving $\sigma^L(B_L)$ and $\sigma^U(B_L^o)$}}. 
Next, we derive the lower bound $\sigma^L(B_L)$ of $\frac{D_s^L(B_L)}{\mathbb{E}[I_G(s)]}$ and the upper bound $\sigma^U(B^o_L)$ of $\frac{D_s^L(B_L^o)}{\mathbb{E}[I_G(s)]}$ such that the approximation ratio $\frac{D_s^L(B_L)}{D_s^L(B_L^o)}\geq \frac{\sigma^L(B_L)}{\sigma^U(B_L^o)}$.
\begin{lemma}\label{lemma:bounds}
For any $0 \leq \beta,\epsilon,\delta \leq 1$, we have
\begin{align*}
    &\Pr[\sigma^L(B_L)\leq \frac{D_s^L(B_L)}{\mathbb{E}[I_G(s)]}]\geq 1-\frac{\delta}{3i_{\max}}\text{,}\\
    &\Pr[\sigma^U(B^o_L)\geq \frac{D_s^L(B_L^o)}{\mathbb{E}[I_G(s)]}]\geq 1-\frac{\delta}{3i_{\max}}.
\end{align*}
\end{lemma}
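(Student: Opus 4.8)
The plan is to establish the two tail bounds in Lemma~\ref{lemma:bounds} separately, each by invoking the martingale concentration inequalities of Lemma~\ref{new bound} and then correcting for the error in estimating $\mathbb{E}[I_G(s)]$ by $\hat{I}_G(s)$. First I would handle the lower bound $\sigma^L(B_L)$ on $\frac{D_s^L(B_L)}{\mathbb{E}[I_G(s)]}$. Fix the blocker set $B_L$ returned by Max-Coverage on $\mathbb{C}_1^s$; since $\mathbb{C}_2^s$ is independent of $\mathbb{C}_1^s$, the quantity $Cov_{\mathbb{C}_2^s}(B_L)$ is a sum over $|\mathbb{C}_2^s|$ i.i.d.\ coverage indicators whose (conditional) mean per sample is $D_s^L(B_L)/\mathbb{E}[I_G(s)]$, so Lemma~\ref{new bound} applies with $B=B_L$ and $\theta=|\mathbb{C}_2^s|$. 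Using inequality~(\ref{new bound1}) (the upper-tail bound on $Cov$, which controls the lower confidence bound on $D_s^L(B_L)$), I would set the failure probability to $\delta/(3i_{\max})$, i.e.\ $a_1=\ln\frac{3i_{\max}}{\delta}$, solve the resulting quadratic in $\sqrt{D_s^L(B_L)/\mathbb{E}[I_G(s)]}$, and read off the lower estimate $\left(\sqrt{x+2a_1/9}-\sqrt{a_1/2}\right)^2-a_1/18$ with $x=Cov_{\mathbb{C}_2^s}(B_L)/\mathbb{E}[I_G(s)]$ — exactly the functional form appearing in Lines~15 and~17 of Algorithm~\ref{alg:LSBM}. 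The two cases in the algorithm (multiply $Cov$ by $1-\beta$ vs.\ $1+\beta$) arise because $\mathbb{E}[I_G(s)]$ is unknown and replaced by $\hat{I}_G(s)$: on the event that $\hat{I}_G(s)$ is a good $(\beta,\delta/6)$-estimate, we have $\hat{I}_G(s)/(1+\beta)\le \mathbb{E}[I_G(s)]\le \hat{I}_G(s)/(1-\beta)$, and substituting the appropriate endpoint preserves the direction of the inequality; I would union-bound over this estimation-failure event later in the global proof of correctness rather than here, or absorb it by noting the statement is conditional on the good event (the proof text would need to make the bookkeeping explicit).

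Next I would handle the upper bound $\sigma^U(B_L^o)$ on $\frac{D_s^L(B_L^o)}{\mathbb{E}[I_G(s)]}$. The obstacle here is that $B_L^o$ is unknown, so one cannot directly apply a concentration bound to $Cov_{\mathbb{C}_1^s}(B_L^o)$. The standard OPIM-style trick, which I would adapt, is to upper bound $D_s^L(B_L^o)$ by an observable quantity: for every $i\in\{0,\dots,k\}$, submodularity of $Cov_{\mathbb{C}_1^s}(\cdot)$ gives $Cov_{\mathbb{C}_1^s}(B_L^o)\le Cov_{\mathbb{C}_1^s}(B_i)+\sum_{v\in maxMC(B_i,k)}Cov_{\mathbb{C}_1^s}(v\mid B_i)$, where $B_i$ is the $i$-th partial greedy solution and $maxMC(B_i,k)$ the top-$k$ marginal-gain nodes; taking the minimum over $i$ yields $Cov_{\mathbb{C}_1^s}^u(B_L^o)$ as in Line~18. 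Since this is an upper bound that holds deterministically on the samples and $D_s^L(B_L^o)\cdot|\mathbb{C}_1^s|/\mathbb{E}[I_G(s)]$ is the mean of $Cov_{\mathbb{C}_1^s}(B_L^o)$, I would apply the lower-tail inequality~(\ref{new bound2}) of Lemma~\ref{new bound} — which has the cleaner exponent $-\lambda^2/(2\frac{D_s^L(B)}{\mathbb{E}[I_G(s)]}\theta)$, hence yields the simpler one-sided quadratic $\left(\sqrt{y+a_2/2}+\sqrt{a_2/2}\right)^2$ with $y=Cov_{\mathbb{C}_1^s}^u(B_L^o)/\mathbb{E}[I_G(s)]$ — set the failure probability to $\delta/(3i_{\max})$ via $a_2=\ln\frac{3i_{\max}}{\delta}$, and again replace $\mathbb{E}[I_G(s)]$ by $\hat{I}_G(s)/(1-\beta)$ (using the $1+\beta$ multiplier on $Cov$, as in Line~19) to stay on the correct side on the good estimation event.

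The main obstacle I anticipate is the careful interleaving of three sources of randomness and the union bound over iterations: (i) the martingale deviation of $Cov_{\mathbb{C}_2^s}(B_L)$, (ii) the martingale deviation of $Cov_{\mathbb{C}_1^s}$ evaluated at the partial solutions $B_i$ (note $B_L$ and all $B_i$ are themselves random, measurable w.r.t.\ $\mathbb{C}_1^s$, so one must argue the bound holds simultaneously for the realized $B_L$ — this is where the martingale formulation, rather than a plain Chernoff bound on a fixed set, is essential, together with a union bound over the at most $\binom{n-|S|}{k}$ possible size-$k$ sets absorbed into $\theta_{\max}$ in Lemma~\ref{lemma:imax}), and (iii) the $(\beta,\delta/6)$ error of $\hat{I}_G(s)$. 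Allocating the budget as $\delta/(3i_{\max})$ per iteration per side, so that summing over $i_{\max}$ iterations costs $\delta/3$ for each of $\sigma^L$ and $\sigma^U$ and leaves room for the $\hat{I}_G(s)$ estimate and the $\theta_{\max}$ correctness event, is the bookkeeping that makes the final $(1-1/e-\epsilon)$ guarantee in Eq.~(\ref{eq:sandwich}) go through; verifying that the quadratics invert to exactly the closed forms in Algorithm~\ref{alg:LSBM} is routine algebra once the inequalities of Lemma~\ref{new bound} are in hand.
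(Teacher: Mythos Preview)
Your proposal is correct and follows essentially the same route as the paper: invert the upper-tail bound~(\ref{new bound1}) on $\mathbb{C}_2^s$ to get the $\sigma^L$ formula, use the OPIM-style submodularity bound $Cov_{\mathbb{C}_1^s}^u(B_L^o)\ge Cov_{\mathbb{C}_1^s}(B_L^o)$ followed by the lower-tail bound~(\ref{new bound2}) to get $\sigma^U$, and then replace $\mathbb{E}[I_G(s)]$ by $\hat{I}_G(s)/(1\pm\beta)$ via a monotonicity argument on the resulting functions (the paper makes explicit that the $\sigma^L$ expression is increasing for $x\ge 5a_1/18$ and decreasing otherwise, which is precisely why Algorithm~\ref{alg:LSBM} splits into the two cases you mention). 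Your anticipated ``main obstacle'' of union-bounding over $\binom{n-|S|}{k}$ sets is unnecessary here: $B_L^o$ is a single fixed (if unknown) set so~(\ref{new bound2}) applies directly, and $B_L$ is measurable w.r.t.\ $\mathbb{C}_1^s$ hence fixed relative to the independent sample $\mathbb{C}_2^s$, so no union bound over candidate sets enters this lemma.
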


\myparagraph{Putting together} The reason that LSBM ensures $(1-1/e-\epsilon)$-approximation with at least 
$1-\delta$ probability can be explained as follows. First, the algorithm has at most $i_{\max}$ iterations. In each of the first $i_{\max}-1$ iterations, a blocker set $B_L$ is generated and we derive an approximation guarantee $\sigma^L(B_L)/ \sigma^U(B_L^o)$ that is incorrect with at most $2\delta / (3i_{\max})$ probability (Lemma \ref{lemma:bounds}). By the union bound, LSBM has at most $2\delta / 3$ to return an incorrect solution in the first $i_{\max}-1$ iterations. Meanwhile, in the last iteration, a blocker set $B_L$ obtained by applying Procedure Max-Coverage on $\mathbb{C}^s_1$, with $|\mathbb{C}^s_1|\geq \theta_{\max}$. This ensures that $B_L$ is an $(1-1/e-\epsilon)$-approximation with at least $1-\delta/3$ probability when $i=i_{\max}$ (Lemma \ref{lemma:imax}). Therefore, the probability that LSBM returns an incorrect solution in any iteration is at most $\delta$, leading to the following theorem.

\begin{theorem}\label{tg1}
    Given $0 \leq \beta,\epsilon,\delta \leq 1$, $B^o_L$ is the optimal solution of lower \myblue{bounding function} maximization, LSBM returns $B_L$ satisfies:
    \begin{align}
        \Pr[D_s^L(B_L) \geq (1-1/e-\epsilon)D_s^L(B^o_L)]\geq 1-\delta\label{ratio1}.
    \end{align}
\end{theorem}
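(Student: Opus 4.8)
The plan is to assemble Theorem~\ref{tg1} from the three lemmas already established (Lemmas~\ref{lemma:imax}, \ref{lemma:bounds}, and the concentration bounds of Lemma~\ref{new bound}) via a union-bound argument over the at-most-$i_{\max}$ iterations, exactly as sketched in the ``Putting together'' paragraph but made rigorous. First I would fix the high-probability events. For each iteration $i \in \{1,\dots,i_{\max}\}$, let $E_i^L$ be the event that $\sigma^L(B_L) \leq D_s^L(B_L)/\mathbb{E}[I_G(s)]$ and $E_i^U$ the event that $\sigma^U(B_L^o) \geq D_s^L(B_L^o)/\mathbb{E}[I_G(s)]$; by Lemma~\ref{lemma:bounds} each fails with probability at most $\delta/(3 i_{\max})$. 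I would also invoke the $(\beta,\delta/6)$-approximation guarantee for $\hat{I}_G(s)$ from the generalized stopping-rule algorithm of~\cite{DBLP:journals/tkde/ZhuTTWL23}, letting $E^I$ be the event that $(1-\beta)\mathbb{E}[I_G(s)] \leq \hat{I}_G(s) \leq (1+\beta)\mathbb{E}[I_G(s)]$, which fails with probability at most $\delta/6$; this event is what legitimizes the $(1-\beta)$ and $(1+\beta)$ factors appearing throughout the $\sigma^L,\sigma^U,\theta_{\max}$ formulas.

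Next I would split on how LSBM terminates. Case 1: LSBM returns in some iteration $i < i_{\max}$ because $\sigma^L(B_L)/\sigma^U(B_L^o) \geq 1-1/e-\epsilon$. Conditioned on $E_i^L \cap E_i^U \cap E^I$, we get
\[
\frac{D_s^L(B_L)}{D_s^L(B_L^o)} = \frac{D_s^L(B_L)/\mathbb{E}[I_G(s)]}{D_s^L(B_L^o)/\mathbb{E}[I_G(s)]} \geq \frac{\sigma^L(B_L)}{\sigma^U(B_L^o)} \geq 1-1/e-\epsilon,
\]
so the returned solution is valid on this event. Case 2: LSBM reaches $i = i_{\max}$. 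Here I would verify that the set $\mathbb{C}_1^s$ has size at least $\theta_{\max}$: starting from $\theta_0$ and doubling $i_{\max}-1 = \lceil \log_2(\theta_{\max}/\theta_0)\rceil - 1$ times gives $|\mathbb{C}_1^s| = \theta_0 \cdot 2^{i_{\max}-1} \geq \theta_{\max}$ (one should double-check the ceiling arithmetic; since $i_{\max} = \lceil \log_2(\theta_{\max}/\theta_0)\rceil$, after the iteration-$i_{\max}$ doubling step is skipped because we return first, the size entering the last Max-Coverage call is $\theta_0 2^{i_{\max}-1}$, and $i_{\max}-1 \geq \log_2(\theta_{\max}/\theta_0)-1$... here I would be slightly careful: it is cleaner to note the size is $\theta_0 2^{i_{\max}-1}$ and $i_{\max} \geq \log_2(\theta_{\max}/\theta_0)$ so $\theta_0 2^{i_{\max}-1} \geq \theta_{\max}/2$; if a factor-2 slack is not already absorbed into $\theta_{\max}$, I would reconcile with the statement of Lemma~\ref{lemma:imax}, which may already be stated with the needed buffer, or the doubling may run up to $i_{\max}$ inclusive). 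Then Lemma~\ref{lemma:imax} gives directly that $B_L$ is $(1-1/e-\epsilon)$-approximate with probability at least $1-\delta/3$ conditioned on $E^I$.

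Finally I would collect the failure probabilities via the union bound. The algorithm returns an incorrect solution only if: it returns early in some iteration $i<i_{\max}$ while $E_i^L$ or $E_i^U$ fails — total probability at most $\sum_{i=1}^{i_{\max}-1} 2 \cdot \delta/(3 i_{\max}) \leq 2\delta/3$; or it reaches $i_{\max}$ and the Lemma~\ref{lemma:imax} guarantee fails — probability at most $\delta/3$; or the estimate $\hat{I}_G(s)$ is out of range — probability at most $\delta/6$ (which I would fold into the above by noting the Lemma~\ref{lemma:bounds} and Lemma~\ref{lemma:imax} statements are already conditioned so that their stated failure probabilities either include or are disjoint from $E^I$; if not, simply add $\delta/6$ and observe $2\delta/3 + \delta/3$ already leaves room, or adjust constants). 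Summing gives total failure probability at most $\delta$, hence $\Pr[D_s^L(B_L) \geq (1-1/e-\epsilon)D_s^L(B_L^o)] \geq 1-\delta$, which is \eqref{ratio1}. I should also handle the trivial early-exit at Lines 2--3: if $|\textit{ON}| \leq k$ then $B_L = \textit{ON}$ blocks every out-neighbor of $S$, so $D_s^L(\textit{ON}) = D_s^L(B_L^o)$ (the optimum is at most all protectable single-node contributions, all realized), making the ratio exactly $1 \geq 1-1/e-\epsilon$ deterministically.

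The main obstacle I anticipate is the bookkeeping of exactly how the $(\beta,\delta/6)$ error in $\hat{I}_G(s)$ propagates: every occurrence of $\hat{I}_G(s)$ in $\theta_{\max}$, $\sigma^L$, and $\sigma^U$ must be shown to preserve the one-sided inequalities under event $E^I$, and the $\delta$-budget must be partitioned consistently across $E^I$, the $2(i_{\max}-1)$ per-iteration events, and the final-iteration event without any double counting or overspend — this is precisely where the somewhat unusual constants ($\delta/6$, $3i_{\max}$ in $a_1,a_2$, $12/\delta$ inside $\theta_{\max}$) are chosen, so the proof is really a verification that these constants were chosen correctly rather than a search for a new idea.
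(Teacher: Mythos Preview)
Your proposal is correct and follows essentially the same approach as the paper's own ``Putting together'' argument: split on whether LSBM terminates early (use Lemma~\ref{lemma:bounds} per iteration) or reaches $i_{\max}$ (use Lemma~\ref{lemma:imax}), then union-bound the failure probabilities to get the total $\delta$. Your treatment is in fact more careful than the paper's, which does not explicitly separate out the $E^I$ event or verify the doubling arithmetic; the concerns you flag about the factor-of-2 slack in $|\mathbb{C}_1^s|$ at iteration $i_{\max}$ and about whether the $\delta/6$ for $\hat{I}_G(s)$ is already folded into Lemmas~\ref{lemma:imax} and~\ref{lemma:bounds} are real loose ends in the paper's presentation as well (Lemma~\ref{lemma:imax}'s proof does absorb the $\delta/6$ into its $1-\delta/3$ conclusion, while Lemma~\ref{lemma:bounds} implicitly conditions on $E^I$ without budgeting for it separately).
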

Moreover, we have the following theorem to guarantee the expected time complexity of LSBM. 

\begin{theorem}\label{theorem:time complexity_LSBM}
    When $\frac{2\beta}{1+\beta}\leq \epsilon$ and $\delta \leq 1/2$, LSBM runs in $\mathcal{O}(\frac{(k\ln{(n-|S|)}+\ln{1/\delta})\mathbb{E}[I_G(s)](m\cdot \alpha(m,n)+ D_s^L(B_L^o))}{(\epsilon+\epsilon\beta-2\beta)^2 D_s^L(B_L^o)}+\frac{m\cdot\ln{1/\delta}}{\beta^2})$ expected time under the IC model.
\end{theorem}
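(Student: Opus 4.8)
The plan is to split the running time of LSBM into three components and bound each separately: (i) the one-shot $(\beta,\delta/6)$-estimation of $\mathbb{E}[I_G(s)]$ in Line~4; (ii) the generation of CP sequences by Local Sampling (Algorithm~\ref{alg:LS}); and (iii) the \textsc{Max-Coverage} calls together with the coverage evaluations used to form $\sigma^L(B_L)$ and $\sigma^U(B_L^o)$. Component~(i) is settled by quoting the guarantee of the generalized stopping-rule estimator of~\cite{DBLP:journals/tkde/ZhuTTWL23}: obtaining a $(\beta,\delta/6)$-estimate of $\mathbb{E}[I_G(s)]$ takes $\mathcal{O}(m\ln(1/\delta)/\beta^2)$ expected time, which is exactly the second summand of the claimed bound (the hypothesis $\delta\le 1/2$ keeps $\ln(1/\delta)$ bounded away from $0$ and keeps that analysis valid). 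The remaining $\mathcal{O}(m)$ preprocessing — building $\textit{ON}$ and $OPT^L$ from the direct-activation probabilities — is dominated by this term.

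The crux is to bound, in expectation, the number $\theta$ of CP sequences that LSBM ever draws. Since $\mathbb{C}_1^s$ and $\mathbb{C}_2^s$ double each round, the work of every round is dominated by the last, so it suffices to show the last executed round has $\theta=\mathcal{O}(\theta^\star)$ samples in expectation, where
\[
  \theta^\star \;=\; \mathcal{O}\!\left(\frac{\mathbb{E}[I_G(s)]\,\bigl(k\ln(n-|S|)+\ln(1/\delta)\bigr)}{(\epsilon+\epsilon\beta-2\beta)^2\,D_s^L(B_L^o)}\right).
\]
To get this I would combine the new martingale concentration inequalities of Lemma~\ref{new bound} with the greedy guarantee of \textsc{Max-Coverage}: once $|\mathbb{C}_1^s|,|\mathbb{C}_2^s|\gtrsim\theta^\star$, with probability at least $1-\mathcal{O}(\delta/i_{\max})$ one has $\sigma^L(B_L)\ge(1-\mathcal{O}(\beta))\,D_s^L(B_L)/\mathbb{E}[I_G(s)]$ and $\sigma^U(B_L^o)\le(1+\mathcal{O}(\beta))\,D_s^L(B_L^o)/\mathbb{E}[I_G(s)]$, while empirical-coverage optimality of $B_L$ forces $D_s^L(B_L)/D_s^L(B_L^o)\ge 1-1/e-\mathcal{O}(\beta)$; hence the stopping test $\sigma^L(B_L)/\sigma^U(B_L^o)\ge 1-1/e-\epsilon$ fires. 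Tracking how the multiplicative $\beta$-error of $\hat I_G(s)$ enters both $\sigma^L$ and $\sigma^U$ is precisely what turns the naive $\epsilon^2$ denominator into $(\epsilon+\epsilon\beta-2\beta)^2$, and the precondition $\tfrac{2\beta}{1+\beta}\le\epsilon$ is exactly what keeps this quantity positive (otherwise the stopping test is unsatisfiable and the bound vacuous). I would also use $\ln\binom{n-|S|}{k}=\mathcal{O}(k\ln(n-|S|))$ and $OPT^L\le D_s^L(B_L^o)$ to rewrite $\theta_{\max}$ in the stated form, and bound the low-probability event that the run reaches $i_{\max}$: on that branch $\theta=\theta_{\max}=\mathcal{O}(\theta^\star)$ and its probability is exponentially small in the relevant parameters, so its contribution to $\mathbb{E}[\theta]$ is $\mathcal{O}(\theta^\star)$.

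Given $\mathbb{E}[\theta]=\mathcal{O}(\theta^\star)$, the rest is bookkeeping. Each CP sequence is produced by Algorithm~\ref{alg:LS} in $\mathcal{O}(m\cdot\alpha(m,n))$ time (Lengauer--Tarjan plus one DFS over $\phi$), so generation costs $\mathcal{O}(\theta^\star\, m\,\alpha(m,n))$ in expectation, giving the first part of the first summand. For \textsc{Max-Coverage} and the evaluations of $Cov_{\mathbb{C}_2^s}(B_L)$ and $Cov^u_{\mathbb{C}_1^s}(B_L^o)$, the key facts are that a CP sequence can be kept as its dominator tree (so coverage of a set equals the number of nodes in the union of the chosen subtrees), that the tree-setup pass costs $\mathcal{O}(\sum_\phi |R_\phi(s)|)=\mathcal{O}(\theta^\star\mathbb{E}[I_G(s)])$ which is absorbed into the $m\,\alpha(m,n)$ term (as $|R_\phi(s)|=\mathcal{O}(m)$), and that, with a lazy heap-based greedy, the coverage-proportional work over the final round is $\mathcal{O}(\theta^\star D_s^L(B_L^o))$ because $\mathbb{E}[Cov_{\mathbb{C}^s}(B)/|\mathbb{C}^s|]=D_s^L(B)\le D_s^L(B_L^o)$ by Lemma~\ref{lem:lowerest} and $|B_L|\le k$; the $\mathcal{O}(k)$-per-sequence overhead is absorbed since $k=\mathcal{O}(m\,\alpha(m,n))$. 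Summing the three contributions and substituting $\theta^\star$ yields the claimed expected running time.

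I expect the main obstacle to be the sample-size bound of the second paragraph: deriving $\theta^\star$ from Lemma~\ref{new bound} while faithfully propagating the $(\beta,\delta/6)$-error of $\hat I_G(s)$ into \emph{both} $\sigma^L$ and $\sigma^U$ — this is where the factor $(\epsilon+\epsilon\beta-2\beta)$ and the hypothesis $\tfrac{2\beta}{1+\beta}\le\epsilon$ originate — and bounding the tail contribution of the event that the algorithm runs all the way to $i_{\max}$. Once that is in place, the per-operation accounting and the geometric-doubling summation are routine.
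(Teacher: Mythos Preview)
Your three-way decomposition, the $\mathcal{O}(m\ln(1/\delta)/\beta^2)$ term for Line~4, the per-CP-sequence cost $\mathcal{O}(m\,\alpha(m,n))$, and the target sample size $\theta^\star$ all match the paper's approach. The gap is in your tail argument for $\mathbb{E}[\theta]$.

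You write ``on that branch $\theta=\theta_{\max}=\mathcal{O}(\theta^\star)$,'' but $\theta_{\max}$ has $OPT^L$ in its \emph{denominator}; the inequality $OPT^L\le D_s^L(B_L^o)$ therefore runs the wrong way and gives only a \emph{lower} bound on $\theta_{\max}$, not an upper bound. Since $OPT^L$ can be much smaller than $D_s^L(B_L^o)$, $\theta_{\max}$ is not controlled by $\theta^\star$, and the product $\Pr[\text{reach }i_{\max}]\cdot\theta_{\max}$ cannot be bounded as you suggest. Even setting this aside, the per-round failure bound you invoke (``with probability at least $1-\mathcal{O}(\delta/i_{\max})$ the stopping test fires once $|\mathbb{C}_i^s|\ge\theta^\star$'') yields an expected overshoot of the form $\sum_{z\ge 0} 2^z\theta^\star\cdot\mathcal{O}(\delta)^z$, which fails to converge at the boundary $\delta=1/2$ allowed by the hypothesis.

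The paper never bounds $\theta_{\max}$. It proves the sharper statement that once $|\mathbb{C}_1^s|=|\mathbb{C}_2^s|=c\,\theta'$ for \emph{any} $c\ge 1$, the stopping test passes except with probability at most $\delta^c$: the exponent $c$ appears because at $c\,\theta'$ samples one may take $a_3=c\ln(3i_{\max}/\delta)$ in Lemma~\ref{new bound} and the resulting errors $\hat\epsilon_1,\epsilon_2,\tilde\epsilon_2$ still satisfy $\hat\epsilon_1+\epsilon_2+\tilde\epsilon_2+\tfrac{2\beta}{1+\beta}\le\epsilon_1=\epsilon$. Letting $j$ be the first round with at least $\theta'$ samples, the expected number of additional samples is then at most $2\sum_{z\ge 0}2^{j+z}\theta_0\cdot\delta^{2^z}\le 4\theta'\sum_{z\ge 0}2^{-z}=8\theta'$, using $\delta\le 1/2$ to obtain $2^z\delta^{2^z}\le 2^{-z}$. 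This super-geometric decay---failure probability $\delta^{2^z}$ rather than $\delta^z$---is the missing ingredient; with it in place, the rest of your bookkeeping goes through essentially as in the paper.
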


\subsection{\myblue{Upper Bounding Function Maximization}}\label{section:GSBM}
\label{boundsmax:upper}

\myblue{In what follows, we propose Global Sampling based Bounding Function Maximization (GSBM) for upper bounding function maximization}.
GSBM is similar to the framework of LSBM and the differences between them are that we use Global Sampling to estimate $D_s^U(\cdot)$ with a certain number of random LRR sets and we set:
\begin{align*}
    &\theta_{\max}=\frac{2|V_s^{\prime}|\left( (1-1/e)\sqrt{\ln\frac{6}{\delta}}+\sqrt{(1-1/e)(\ln\binom{|V_s^{\prime}|-|S|}{k}+\ln\frac{6}{\delta})} \right)^2}{\epsilon^2 \text{OPT}^L}\text{,}\\
    &\theta_0=\theta_{\max}\cdot\epsilon^2 \text{OPT}^L/|V_s^{\prime}|\text{,}\\
    &\sigma^L(B_U)=\left((\sqrt{Cov_{\mathbb{L}_2}(B_U)+\frac{2a_1}{9}}-\sqrt{\frac{a_1}{2}})^2-\frac{a_1}{18}\right)\cdot\frac{|V_s^{\prime}|}{|\mathbb{L}_2|}\text{,}\\   &\sigma^U(B_U^o)\leftarrow\left(\sqrt{Cov_{\mathbb{L}_1}^u(B_U^o)+\frac{a_2}{2}}+\sqrt{\frac{a_2}{2}}\right)^2\cdot\frac{|V_s^{\prime}|}{|\mathbb{L}_1|}.
\end{align*}
Similar to the proof of Theorem \ref{tg1}, we also show GSBM can return a solution with $(1-1/e-\epsilon)$-approximation.
\begin{theorem}\label{tg2}
    Given $0 \leq \epsilon,\delta \leq 1$, $B^o_U$ is the optimal solution of upper \myblue{bounding function} maximization, GSBM returns $B_U$ satisfies:
    \begin{align}
        \Pr[D_s^U(B_U) \geq (1-1/e-\epsilon)D_s^U(B^o_U)]\geq 1-\delta\label{ratio2}.
    \end{align}
\end{theorem}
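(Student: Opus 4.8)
The plan is to follow the proof of Theorem~\ref{tg1} essentially line for line, since upper bounding function maximization is an influence maximization instance on the graph of misinformation receivers ${\phi}^s$, with LRR sets playing the role of reverse reachable sets and $|V_s^{\prime}|$ the exactly computable size of the node universe. One structural simplification over LSBM is worth recording at the outset: by Lemma~\ref{lem:upper esti}, each LRR set returned by Global Sampling (Algorithm~\ref{alg:GS}) is an independent draw over the joint randomness of one realization and one uniformly chosen node of $V_s^{\prime}$, and it contributes a $\{0,1\}$ indicator $\min\{|B\cap L(v)|,1\}$ with mean $D_s^U(B)/|V_s^{\prime}|$; hence $Cov_{\mathbb{L}}(B)$ is a sum of i.i.d.\ Bernoulli variables, and the first step of the proof — establishing the analogues of Lemma~\ref{new bound} — reduces to the standard martingale/Chernoff bounds used in OPIM-C, merely rescaled by $|V_s^{\prime}|$, rather than the bespoke inequalities that LSBM needs because a CP sequence contributes a correlated, possibly large, sum. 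Because the normalizer $|V_s^{\prime}|$ is obtained by a single reachability query on $G$, no $\#$P-hard quantity is estimated and no $\beta$-error has to be carried, which is why the GSBM constants use $\ln\frac{6}{\delta}$ where LSBM uses $\ln\frac{12}{\delta}$.

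Second, I would derive $\theta_{\max}$ as the analogue of Lemma~\ref{lemma:imax}: combining the $(1-1/e)$ guarantee of Procedure Max-Coverage with the two-sided concentration bounds and a union bound over the at most $\binom{|V_s^{\prime}|-|S|}{k}$ size-$k$ blocker sets, one shows that if $|\mathbb{L}|\ge\theta_{\max}$ then Max-Coverage on $\mathbb{L}$ returns a $(1-1/e-\epsilon)$-approximate maximizer of $D_s^U$ with probability at least $1-\delta/3$. For the lower bound $\text{OPT}^L$ occurring in $\theta_{\max}$ it suffices to reuse the quantity already constructed for LSBM, because $D_s^U(B_U^o)\ge D_s^U(B_L^o)\ge D_s^L(B_L^o)\ge\text{OPT}^L$, so it remains a valid and efficiently computable lower bound on the optimum of the upper bounding function.

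Third, mirroring Lemma~\ref{lemma:bounds}, I would invert the concentration bounds on the two independent sample collections: $\sigma^L(B_U)\le D_s^U(B_U)$ fails with probability at most $\delta/(3i_{\max})$, and $\sigma^U(B_U^o)\ge D_s^U(B_U^o)$ fails with probability at most $\delta/(3i_{\max})$, where $\sigma^U(B_U^o)$ is built from the greedy upper estimate $Cov_{\mathbb{L}_1}^u(B_U^o)$, which dominates $Cov_{\mathbb{L}_1}(B_U^o)$ by submodularity of set coverage. Hence whenever the test $\sigma^L(B_U)/\sigma^U(B_U^o)\ge 1-1/e-\epsilon$ fires, we have $D_s^U(B_U)\ge(1-1/e-\epsilon)D_s^U(B_U^o)$ up to those two failure events. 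The assembly is then identical to LSBM: over the first $i_{\max}-1$ iterations a union bound caps the failure probability at $2\delta/3$, the forced termination at $i=i_{\max}$ with $|\mathbb{L}_1|\ge\theta_{\max}$ contributes at most $\delta/3$, and therefore \eqref{ratio2} holds.

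The only genuinely new work is Step~1 — restating the concentration inequalities for the estimator $|V_s^{\prime}|\cdot Cov_{\mathbb{L}}(\cdot)/|\mathbb{L}|$ with $|V_s^{\prime}|$ in place of $\mathbb{E}[I_G(s)]$ — and verifying that the chain $\text{OPT}^L\le D_s^U(B_U^o)$ threads correctly through the sample-size calculation so that $\theta_{\max}$ is a legitimate upper bound. I expect even this to be lighter than the corresponding step for Theorem~\ref{tg1}: because $|V_s^{\prime}|$ is exact, the $\beta$-dependent corrections that appear in the LSBM formulas for $\sigma^L$, $\sigma^U$ and $\theta_{\max}$ all disappear, and the probability budget no longer has to reserve any mass for estimating a $\#$P-hard expectation. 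Everything downstream of Step~1 is a faithful transcription of the LSBM argument.
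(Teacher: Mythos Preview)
Your proposal is correct and follows essentially the same route as the paper: the paper proves an analogue of Lemma~\ref{new bound} for LRR sets with $|V_s^{\prime}|$ in place of $\mathbb{E}[I_G(s)]$ (its Lemma~8.1 in the appendix, also via the martingale inequality), and then transcribes the proof of Theorem~\ref{tg1} with the $\beta$-free constants $\theta_{\max},\theta_0,\sigma^L,\sigma^U$ you record. Your additional remarks---that the $Cov_{L_i}(B)$ are in fact i.i.d.\ Bernoulli so the concentration step is even lighter, and the explicit chain $D_s^U(B_U^o)\ge D_s^L(B_L^o)\ge\text{OPT}^L$ justifying the reuse of $\text{OPT}^L$---are correct elaborations that the paper leaves implicit.
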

Besides, the time complexity of GSBM is shown in Theorem \ref{theorem:time complexity_GSBM}.
\begin{theorem}\label{theorem:time complexity_GSBM}
    When $\delta \leq 1/2$, the time complexity of GSBM under the IC model is $\mathcal{O}(\frac{(k\ln{(n-|S|)}+\ln({1/\delta}))(|V_s^{\prime}|+m)}{\epsilon^2})$.
    
\end{theorem}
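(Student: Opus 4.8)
The plan is to bound the total running time of GSBM by tracking two quantities: the number of LRR sets it generates before termination, and the per-sample cost of Global Sampling. Since GSBM mirrors LSBM, I would reuse the doubling structure: the algorithm runs at most $i_{\max}=\lceil \log_2(\theta_{\max}/\theta_0)\rceil$ rounds, and in round $i$ it holds $\theta_0\cdot 2^{i-1}$ LRR sets in each of $\mathbb{L}_1,\mathbb{L}_2$. First I would argue that GSBM never generates more than $\mathcal{O}(\theta_{\max})$ samples in total: the sample count at least doubles each round, so the sum over all rounds is at most a constant times the count in the final round, which is at most $2\theta_{\max}$ by the choice of $i_{\max}$. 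Plugging in the stated $\theta_{\max}=\frac{2|V_s^{\prime}|\bigl((1-1/e)\sqrt{\ln(6/\delta)}+\sqrt{(1-1/e)(\ln\binom{|V_s^{\prime}|-|S|}{k}+\ln(6/\delta))}\bigr)^2}{\epsilon^2\,\mathrm{OPT}^L}$ and using $\ln\binom{|V_s^{\prime}|-|S|}{k}=\mathcal{O}(k\ln(n-|S|))$ together with $|V_s^{\prime}|/\mathrm{OPT}^L=\mathcal{O}(1)$ (which follows because $\mathrm{OPT}^L$ is a lower bound on $D_s^U(B_U^o)\ge D_s^L(B_U^o)\ge \sum_{v\in B^*}\Pr[s\to v]$, a quantity that is a constant fraction of $|V_s^{\prime}|$ in the relevant regime — I'd state this as the key sizing lemma), the total sample count is $\mathcal{O}\bigl(\frac{k\ln(n-|S|)+\ln(1/\delta)}{\epsilon^2}\bigr)$.

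Next I would bound the cost of one invocation of Global Sampling (Algorithm~\ref{alg:GS}): generating a realization and running a DFS/BFS to find the reverse-reachable set of a single node $v$ in $\phi^s$ costs $\mathcal{O}(m)$ in the worst case (one traversal of the realization, contrasted in the Discussion with the $|R_\phi(s)\setminus S|$ traversals needed for an LRR \emph{sequence}). Hence the total sampling cost is $\mathcal{O}\bigl(\frac{(k\ln(n-|S|)+\ln(1/\delta))\,m}{\epsilon^2}\bigr)$. I would then add the cost of the Max-Coverage greedy procedure on $\mathbb{L}_1$: across all rounds this is bounded by $k$ times the total size of all LRR sets ever generated, and since each LRR set has size $\mathcal{O}(|V_s^{\prime}|)$, this contributes $\mathcal{O}\bigl(\frac{(k\ln(n-|S|)+\ln(1/\delta))\,|V_s^{\prime}|}{\epsilon^2}\bigr)$ — here I'd note that the standard amortized analysis of lazy/greedy max-coverage keeps the $k$ factor from multiplying in again, so the $k^2$ does not appear. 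Finally, the auxiliary steps — estimating $|V_s^{\prime}|$ by a single reachability traversal from $s$ in $G$, and computing $\mathrm{OPT}^L$ via the $\Pr[s\to v]$ values for $v\in\textit{ON}$ — cost $\mathcal{O}(m)$ and are absorbed. Summing the sampling term and the coverage term gives $\mathcal{O}\bigl(\frac{(k\ln(n-|S|)+\ln(1/\delta))(|V_s^{\prime}|+m)}{\epsilon^2}\bigr)$, as claimed; note there is no $\mathbb{E}[I_G(s)]$ factor here (unlike Theorem~\ref{theorem:time complexity_LSBM}) because the Global Sampling unbiased estimator is already normalized by $|V_s^{\prime}|$ rather than by $\mathbb{E}[I_G(s)]$, and $|V_s^{\prime}|$ is known exactly.

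The main obstacle I anticipate is the sizing argument $|V_s^{\prime}|/\mathrm{OPT}^L=\mathcal{O}(1)$ that collapses $\theta_{\max}$ to the clean form: I need $\mathrm{OPT}^L$ to be a constant fraction of $|V_s^{\prime}|$, which requires a lower bound on $\sum_{v\in B^*}\Pr[s\to v]$ relative to $|V_s^{\prime}|$ — plausible when $k\ge|\textit{ON}|$ is handled by the early return, but in the general case one must be careful that this ratio is not vanishingly small; I would handle this by invoking the same $\mathrm{OPT}^L$ construction and bound used in LSBM and showing it transfers to $D_s^U$ via $D_s^U(\cdot)\ge D_s^L(\cdot)$. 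A secondary technical point is confirming that the $\mathcal{O}(m)$ bound for a single reverse-reachable traversal in $\phi^s$ is genuinely $\mathcal{O}(m)$ and not $\mathcal{O}(m+n)$ in a way that matters — but since $\phi^s$ is restricted to nodes reachable from $s$, this is subsumed by $\mathcal{O}(|V_s^{\prime}|+m)$ already present in the bound. Everything else is the routine geometric-series bookkeeping inherited from the OPIM-C-style analysis.
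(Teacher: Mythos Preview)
There is a genuine gap, and it is exactly the step you flagged as the ``main obstacle'': the claim $|V_s'|/\mathrm{OPT}^L=\mathcal{O}(1)$ is false in general. By construction $\mathrm{OPT}^L=\sum_{v\in B^*}\Pr[s\to v]\le k$, so $|V_s'|/\mathrm{OPT}^L\ge |V_s'|/k$, which can be arbitrarily large. Your proposed fix (``transfer via $D_s^U(\cdot)\ge D_s^L(\cdot)$'') only yields $D_s^U(B_U^o)\ge \mathrm{OPT}^L$, i.e.\ $|V_s'|/D_s^U(B_U^o)\le |V_s'|/\mathrm{OPT}^L$, which goes the wrong way. Consequently your sample-count bound $\mathcal{O}(\theta_{\max})$ does \emph{not} collapse to $\mathcal{O}\bigl((k\ln(n-|S|)+\ln(1/\delta))/\epsilon^2\bigr)$, and combining it with the worst-case per-sample cost $\mathcal{O}(m)$ gives a bound that is too weak by a factor of $|V_s'|/\mathrm{OPT}^L$.

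The paper avoids this by a cancellation argument in the style of Borgs et al.\ / Tang et al.\ rather than by bounding the two factors separately. First, instead of the worst-case cap $\theta_{\max}$, it repeats the concentration-based early-stopping analysis (as in the LSBM proof) to show the \emph{expected} number of LRR sets is $\mathcal{O}\bigl(\frac{(k\ln(n-|S|)+\ln(1/\delta))\,|V_s'|}{\epsilon^2\,D_s^U(B_U^o)}\bigr)$, with $D_s^U(B_U^o)$ (not $\mathrm{OPT}^L$) in the denominator. Second, instead of bounding one Global Sampling call by $\mathcal{O}(m)$ in the worst case, it bounds the \emph{expected} work per LRR set: by a symmetry argument over the random node $v$ and the random realization, the expected number of edges traversed is $D_s^U(\{v^*\})\cdot m/|V_s'|$, where $v^*$ is chosen with probability proportional to in-degree. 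Multiplying via Wald's equation, the $D_s^U$ factors cancel (since $D_s^U(\{v^*\})\le D_s^U(B_U^o)$), giving $\mathcal{O}\bigl(\frac{(k\ln(n-|S|)+\ln(1/\delta))\,m}{\epsilon^2}\bigr)$ for sampling. The Max-Coverage cost is handled the same way, using $\mathbb{E}[|L|]$ (not the worst-case $|V_s'|$) and the same cancellation, yielding the $|V_s'|$ term. Your decomposition into ``number of samples'' times ``worst-case per-sample cost'' cannot recover this; the expected-times-expected cancellation is the missing idea.
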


\subsection{A Lightweight Heuristic for IMIN}\label{section:LHGA}
\label{boundsmax:original}
Here we consider the filling of the Sandwich, \ie the solution for the original problem IMIN.
To address this problem, we propose a Lightweight Heuristic algorithm that adopts the greedy framework (LHGA) combining the following proposed function to evaluate the quality score of each node:
\begin{align}
    s(v)=\Pr[s \rightarrow v]\cdot deg[v]\text{.}
\end{align}
Specifically, we iteratively select the node 
from $ON$ with the largest quality score $s(\cdot)$.
The motivations of our proposed function are that $i)$ the outgoing neighbors of the misinformation seeds are more likely to be blockers; 
$ii)$ the nodes with a relatively high probability of being activated by $s$ are more likely to be blockers; 
$iii)$ the nodes with large influence are more likely to be blockers.
If the budget is no less than the number of nodes in $ON$, we directly return $ON$ as the blocker set.
\myblue{Although LHGA does not make any theoretical contribution to SandIMIN as shown in Eq. (\ref{eq:sandwich}), it can enhance the effectiveness of our algorithm without sacrificing efficiency.
More details can be seen in Section \ref{sec:exp}.}

\myparagraph{Summary}
Based on the above results, \ie $B_L$ returned by LSBM, $B_U$ returned by GSBM and $B_R$ returned by LHGA, \sand returns the blocker set $B^* \in \{B_L,B_U,B_R\}$ with the smallest $(\gamma,\delta)$-estimated value of $\mathbb{E}[I_{G[V\backslash B^*]}(s)]$. According to the Theorem \ref{tg1} and \ref{tg2}, the constant $\alpha_1$ and $\alpha_2$ in Eq. (\ref{eq:sandwich}) are both set to $(1-1/e-\epsilon)$, by union bound, the blocker set $B$ produced by \sand has the following theoretical guarantees with at least $1-3\delta$ probability,
\begin{align}\label{data-ratio}
    \scalebox{1.0537}{$D_s(B) \geq \max \left\{\frac{D_s(B_U)}{D_s^U(B_U)},\frac{D_s^L(B_L^o)}{D_s(B^o)}\right\}(1-\frac{1}{e}-\epsilon)\frac{1-\gamma}{1+\gamma}D_s(B^o).$}
\end{align}


\section{Experiments}\label{sec:exp}
In this section, we conduct extensive experiments on 9 real-world datasets to evaluate the performance of our algorithms. 


\begin{table}[t]
  \begin{center}
    \caption{Statistics of datasets}\label{data}
    \label{expdatasets}
\setlength\tabcolsep{2pt}
    \begin{tabular}{|l|c|c|c|c|}
    \hline
      \textbf{Dataset} & \textbf{Type} & \textbf{$|V|$} & \textbf{$|E|$} & \textbf{Avg. deg} \\
        \hline
        \hline
	   EmailCore (EC) & directed & 1,005 & 25,571 & 49.6 \\
        \hline
	   Facebook (FB) & undirected & 4,039 & 88,234 & 43.7 \\
 	\hline
	   Wiki-Vote (WV) & directed & 7,115 & 103,689 & 29.1 \\
        \hline
	   EmailAll (EA) & directed & 265,214 & 420,045 & 3.2 \\
        \hline
          DBLP (DB) & undirected & 317,080 & 1,049,866 & 6.6 \\
	\hline
          Twitter (TT) & directed & 81,306 & 1,768,149 & 59.5 \\
        \hline
          Stanford (SF) & directed & 281,903 & 2,312,497 & 16.4 \\
        \hline
          Youtube (YT) & undirected & 1,134,890 & 2,987,624 & 5.3 \\
        \hline
          Pokec (PK) & directed & 1,632,803 & 30,622,564 & 37.5 \\
	\hline
    \end{tabular}
  \end{center}
\end{table}


\begin{figure}
	\centering
	\includegraphics[width=0.97\linewidth]{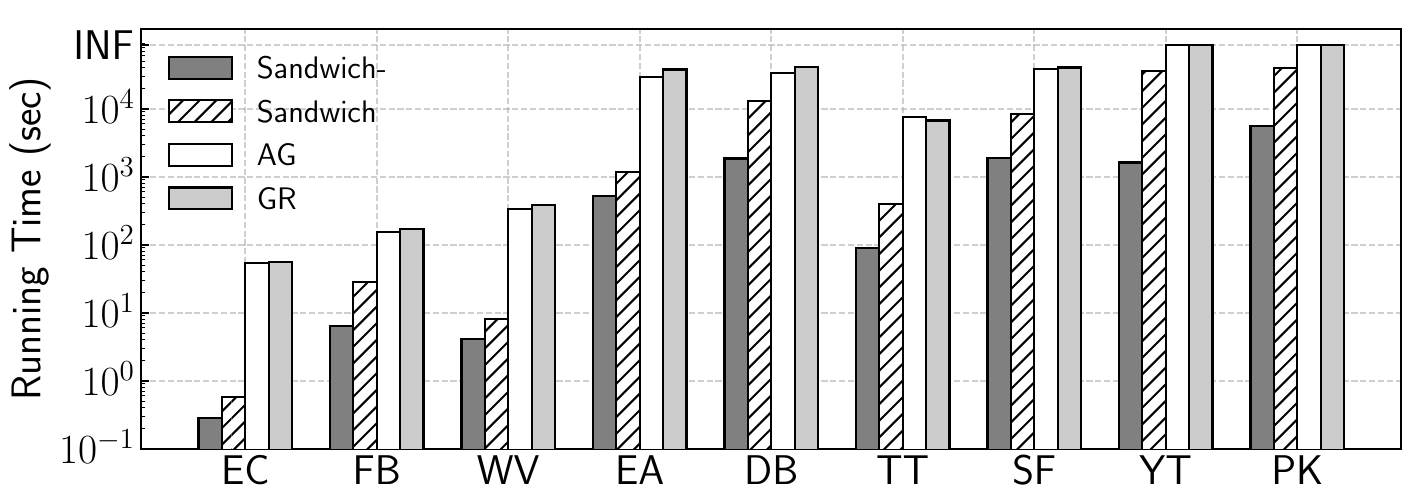}
	\caption{\myblue{Time cost on all the datasets}}
	\label{fig:time_line}
\end{figure}

\begin{figure*}[t]
\makebox[\textwidth][c]{
\includegraphics[scale=0.7]{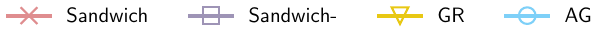}
}
\begin{minipage}{0.195\textwidth}
        \centering
        \includegraphics[width=\textwidth]{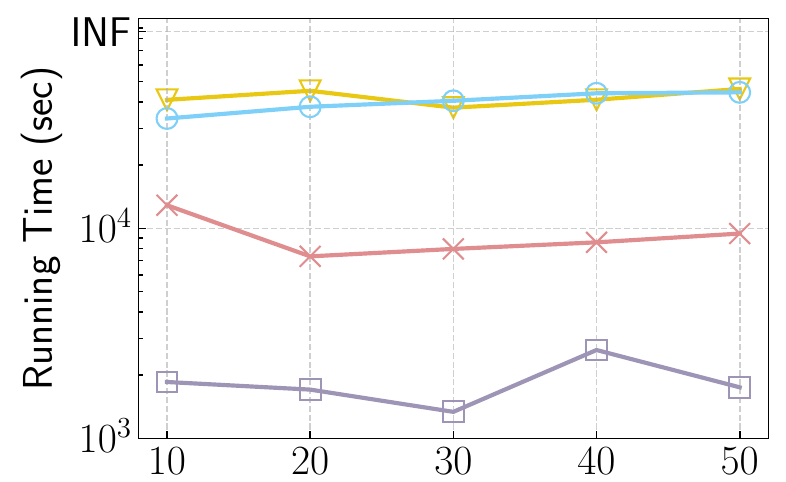}
        \subcaption{\myblue{DBLP (Vary $|S|$)}}
    \end{minipage} 
    \begin{minipage}{0.195\textwidth}
        \centering
        \includegraphics[width=\textwidth]{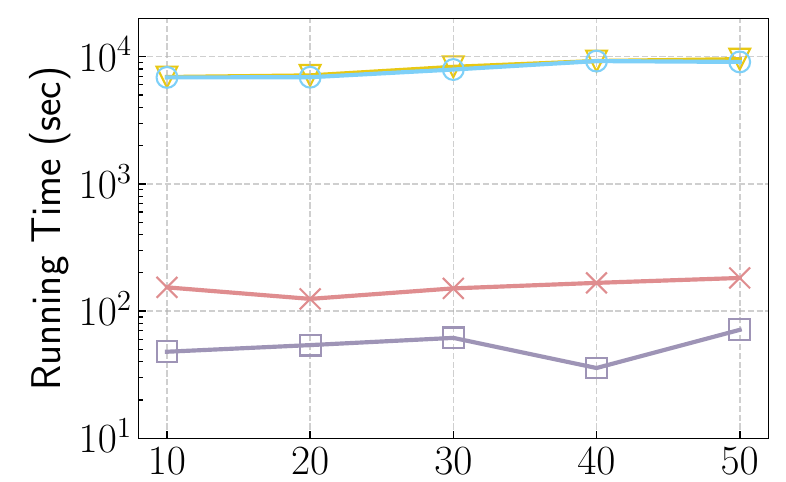}
        \subcaption{\myblue{Twitter (Vary $|S|$)}}
    \end{minipage} 
    \begin{minipage}{0.195\textwidth}
        \centering
        \begin{subfigure}{\textwidth}
            \includegraphics[width=\textwidth]{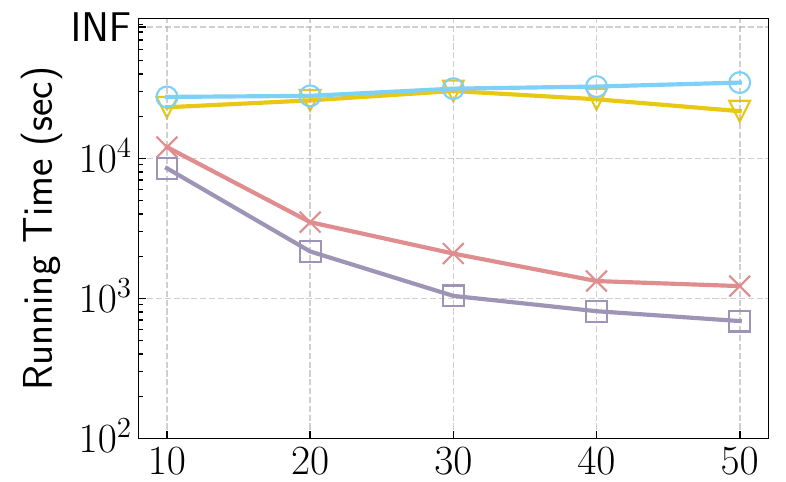}
            \subcaption{\myblue{Stanford (Vary $|S|$)}}
        \end{subfigure}
    \end{minipage}
    \begin{minipage}{0.195\textwidth}
        \centering
        \begin{subfigure}{\textwidth}
            \includegraphics[width=\textwidth]{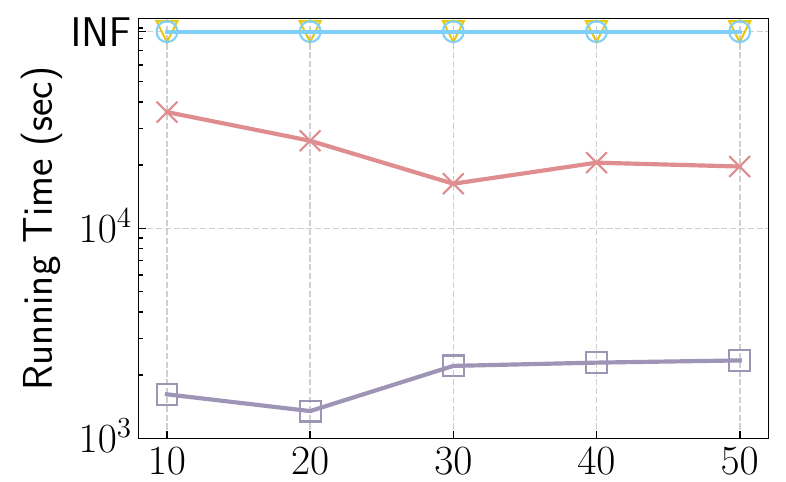}
            \subcaption{\myblue{Youtube (Vary $|S|$)}}
        \end{subfigure}
    \end{minipage}
    \begin{minipage}{0.195\textwidth}
        \centering
        \begin{subfigure}{\textwidth}
            \includegraphics[width=\textwidth]{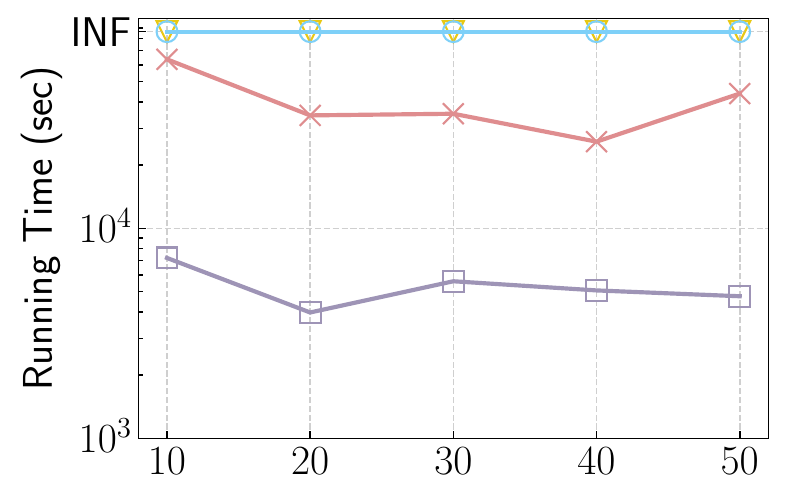}
            \subcaption{\myblue{Pokec (Vary $|S|$)}}
        \end{subfigure}
    \end{minipage}
    \begin{minipage}{0.195\textwidth}
        \centering
        \includegraphics[width=\textwidth]{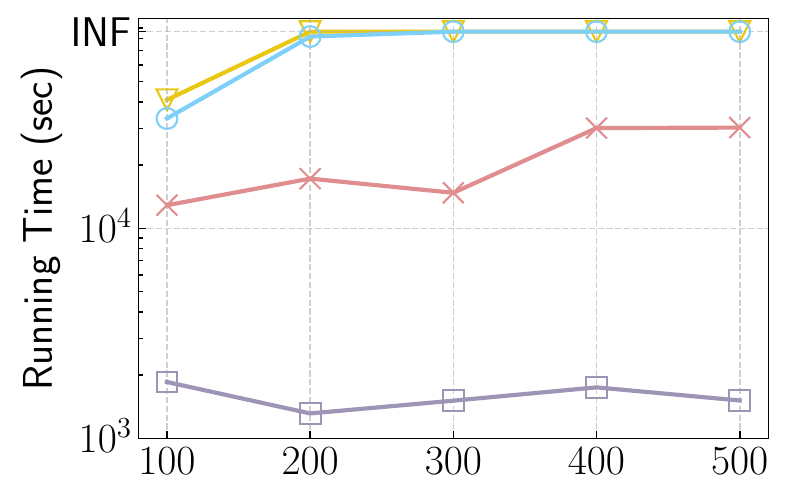}
        \subcaption{\myblue{DBLP (Vary $k$)}}
    \end{minipage} 
    \begin{minipage}{0.195\textwidth}
        \centering
        \includegraphics[width=\textwidth]{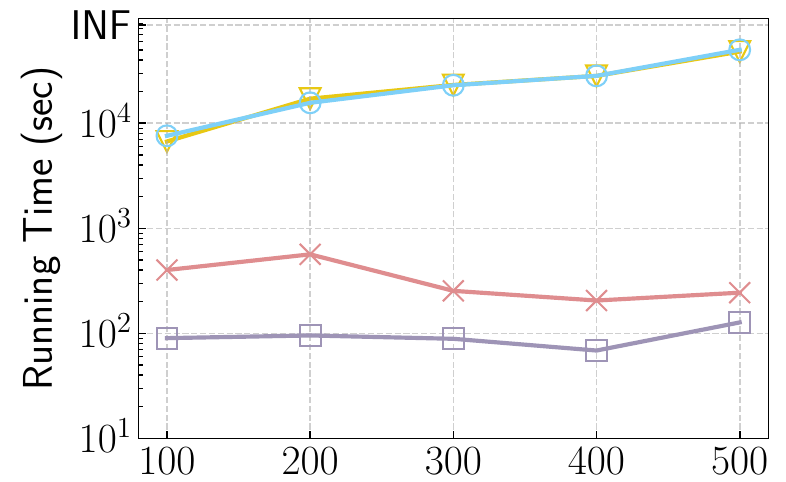}
        \subcaption{\myblue{Twitter (Vary $k$)}}
    \end{minipage} 
    \begin{minipage}{0.195\textwidth}
        \centering
        \begin{subfigure}{\textwidth}
            \includegraphics[width=\textwidth]{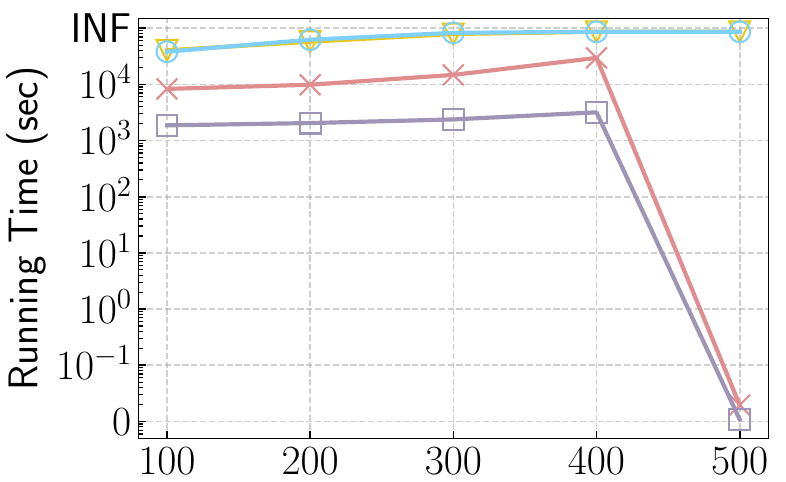}
            \subcaption{\myblue{Stanford (Vary $k$)}}
        \end{subfigure}
    \end{minipage}
    \begin{minipage}{0.195\textwidth}
        \centering
        \begin{subfigure}{\textwidth}
            \includegraphics[width=\textwidth]{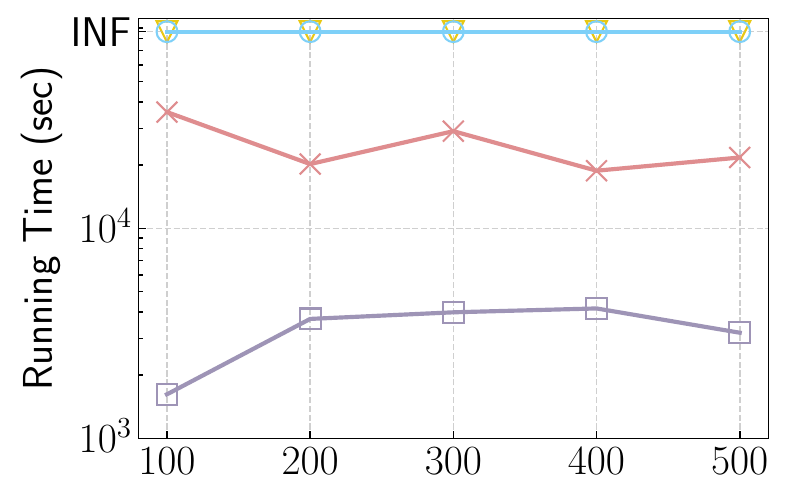}
            \subcaption{\myblue{Youtube (Vary $k$)}}
        \end{subfigure}
    \end{minipage}
    \begin{minipage}{0.195\textwidth}
        \centering
        \begin{subfigure}{\textwidth}
            \includegraphics[width=\textwidth]{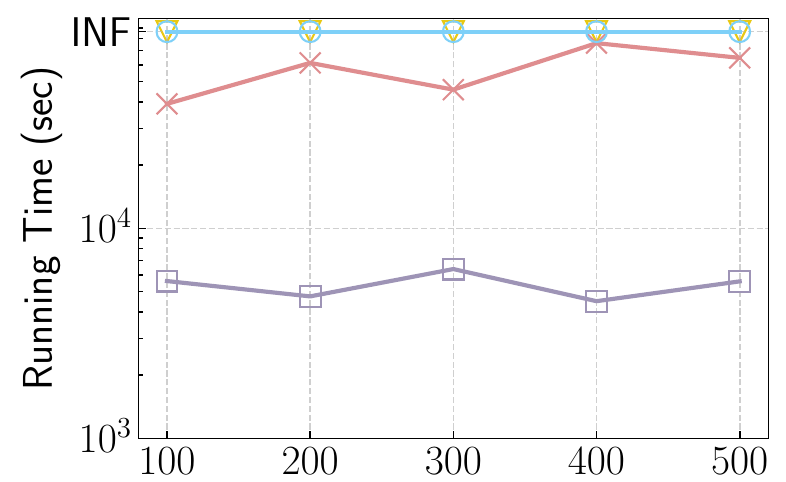}
            \subcaption{\myblue{Pokec (Vary $k$)}}
        \end{subfigure}
    \end{minipage}
	\caption{\myblue{Efficiency evaluation by varying $|S|$ and $k$}}
	\label{fig:time_vary_parameter}
\end{figure*}

\begin{figure}[t]
	\centering
	\includegraphics[width=0.97\linewidth]{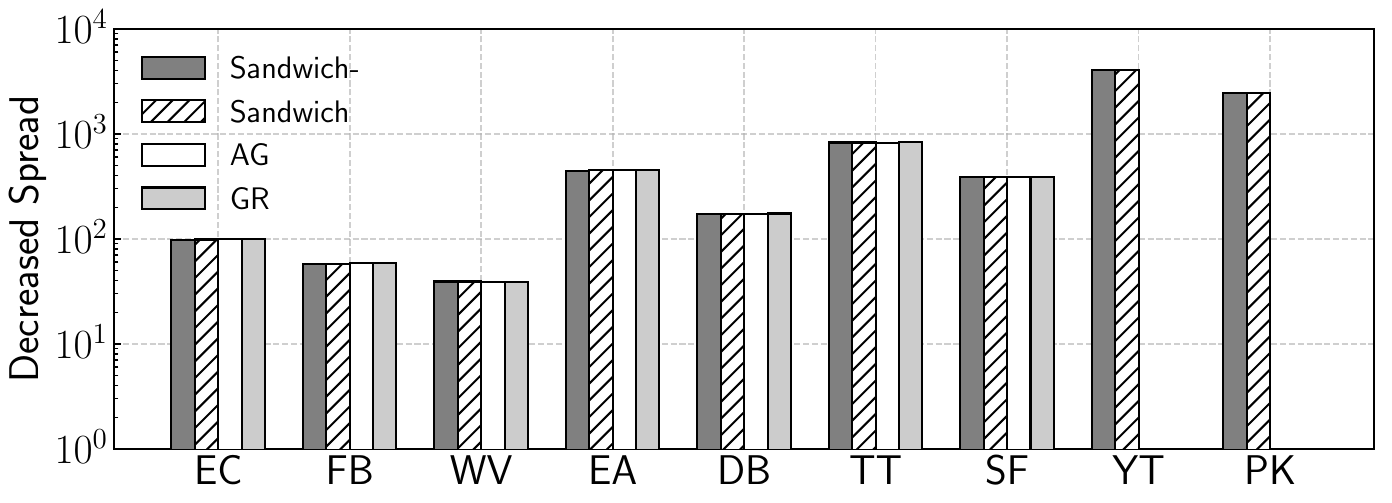}
	\caption{\myblue{Decreased spread on all the datasets}}
	\label{fig:inf_line}
\end{figure}

\myparagraph{Algorithms}
In the experiment, we implement and evaluate the following algorithms. 
$i)$ \myblue{\textbf{AG/GR}: the state-of-the-art algorithms proposed in \cite{DBLP:conf/icde/0002ZW0023}.}
$ii)$ \textbf{\sand}: the Sandwich framework based approach proposed in this paper with tight theoretical guarantee.
Note that, both AG and GR are heuristic solutions without theoretical guarantees about the final result. Therefore, in the experiment, we also implement $iii)$ \textbf{\sand-}, which relaxes the theoretical result of \sand by
setting $\alpha_1=1-1/e-\epsilon$, $\alpha_2=0$ in Algorithm \ref{alg:Sandwich}, i.e., without GSBM. It can provide better performance in efficiency and competitive quality of results.

\myparagraph{Datasets} We use 9 real datasets which are available on SNAP\footnote{http://snap.stanford.edu} in our experiments. The details of the datasets are reported in Table \ref{expdatasets} and we provide a short name for each dataset.

\myparagraph{Parameter settings} 
Following the convention~\cite{DBLP:conf/sigmod/TangXS14,DBLP:conf/sigmod/TangSX15,DBLP:conf/sigmod/NguyenTD16,DBLP:journals/pvldb/HuangWBXL17,DBLP:conf/sigmod/TangTXY18}, we set the propagation probability $p(u, v)$ of each edge $\langle u, v \rangle$ as the inverse of $v$'s in-degree in IC model. 
By default, we set $\epsilon=0.2$, $\beta=\gamma=0.1$ and $\delta=1/n$ for \sand and \sandl. 
For AG and GR, we set the number of generated realizations to $10^4$, which is recommended in \cite{DBLP:conf/icde/0002ZW0023}. 
$|S|$ and $k$ are 10 and 100 by default, respectively. 
In addition, the nodes in misinformation seed set are randomly selected from the top 200 most influential nodes. 
Finally, we estimate the expected spread of the seed set by taking the average of its spreads over $10^5$ Monte-Carlo simulations. For each parameter setting, we repeat each algorithm 10 times and report the average value.
For those experiments that cannot finish within 24 hours, we set them as \textbf{INF}.
All the programs are implemented in  C++ and performed on a PC with an Intel Xeon 2.10GHz CPU and 512GB memory.

\subsection{Efficiency Evaluation}

\myparagraph{Results on all the datasets} In Figure \ref{fig:time_line}, we first evaluate the time cost on all the datasets with the default settings. 
\myblue{As can be seen, \sand and \sandl cost less time than AG and GR on all the datasets and they can achieve up to two orders of magnitude speedup. 
In most cases, AG is more efficient than GR. 
Besides, AG and GR cannot complete on large datasets (\ie Youtube and Pokec) in a reasonable time.
The primary reasons are that $i)$ our methods transform the original non-submodular maximization problem into the submodular maximization scenario. This allows us to avoid generating new samples after selecting each blocker. Furthermore, our heuristic algorithm for IMIN incurs almost no computation overhead. $ii)$ Building upon the novel martingale-based concentration bounds, the sample size of our methods can be significantly reduced. 
As shown in Table \ref{tab:nocps}, on DBLP with $\epsilon=0.3$, the sample size generated by LSBM is only 5636, even smaller than the sample size produced by AG and GR in each round.}
In addition, \sandl is more efficient than \sand on all the datasets, which is not surprising given that \sandl does not execute GSBM.
\myblue{Besides, as the dataset becomes large, the gap between the time cost of our solutions and that of AG and GR becomes smaller.}
This is because, to provide theoretical guarantees for our solutions, the number of samples generated will increase as the dataset becomes large, \myblue{while AG and GR use a fixed number of realizations each round for graphs of any size and offer no approximation guarantee for the quality of the returned results.}

\begin{figure*}[t]
\small
\makebox[\textwidth][c]{
\includegraphics[scale=0.7]{figures/experiment/algo.pdf}
}
\begin{minipage}{0.195\textwidth}
        \centering
        \includegraphics[width=\textwidth]{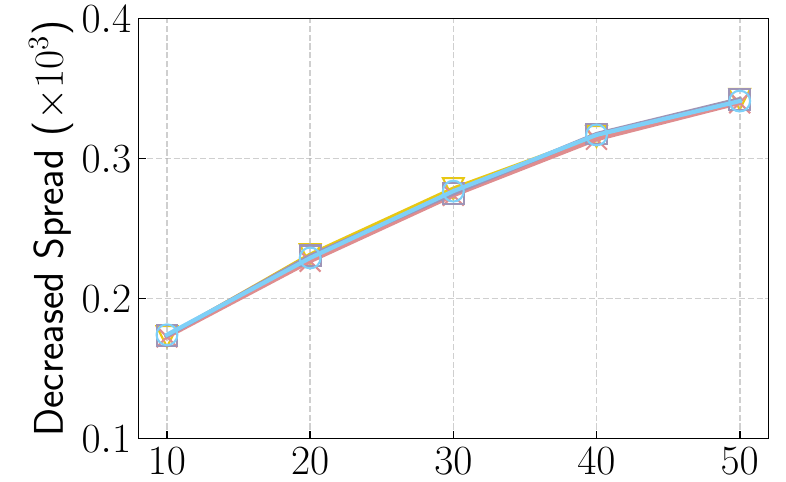}
        \subcaption{\myblue{DBLP (Vary $|S|$)}}
    \end{minipage} 
    \begin{minipage}{0.195\textwidth}
        \centering
        \includegraphics[width=\textwidth]{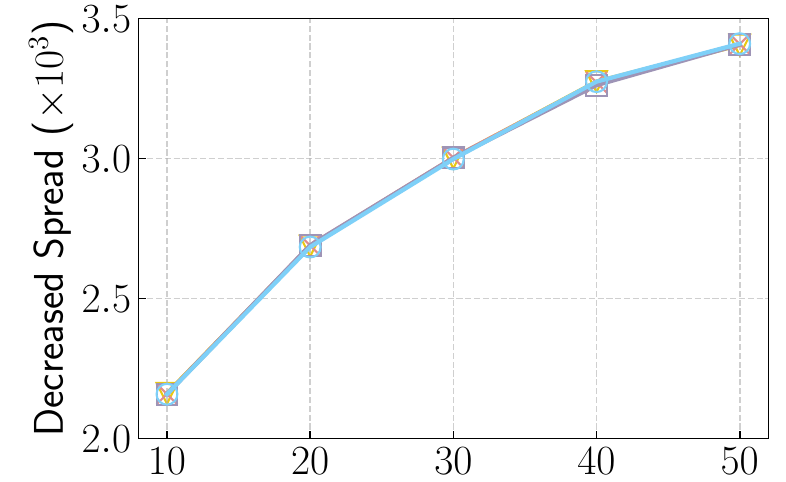}
        \subcaption{\myblue{Twitter (Vary $|S|$)}}
    \end{minipage} 
    \begin{minipage}{0.195\textwidth}
        \centering
        \begin{subfigure}{\textwidth}
            \includegraphics[width=\textwidth]{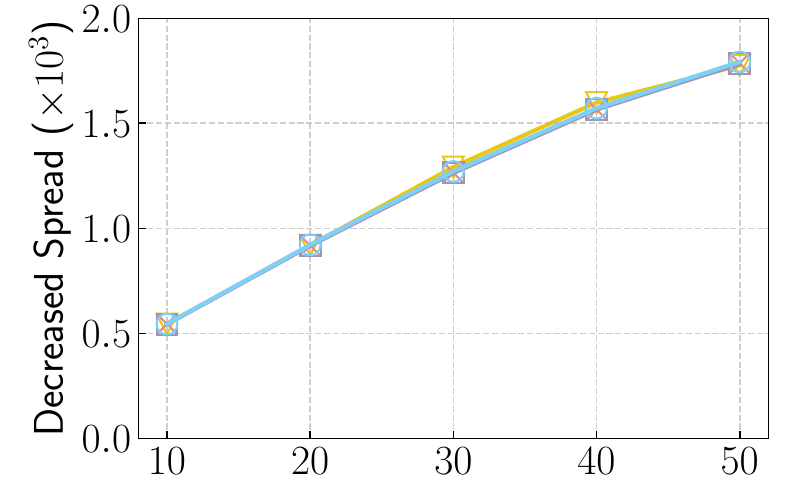}
            \subcaption{\myblue{Stanford (Vary $|S|$)}}
        \end{subfigure}
    \end{minipage}
    \begin{minipage}{0.195\textwidth}
        \centering
        \begin{subfigure}{\textwidth}
            \includegraphics[width=\textwidth]{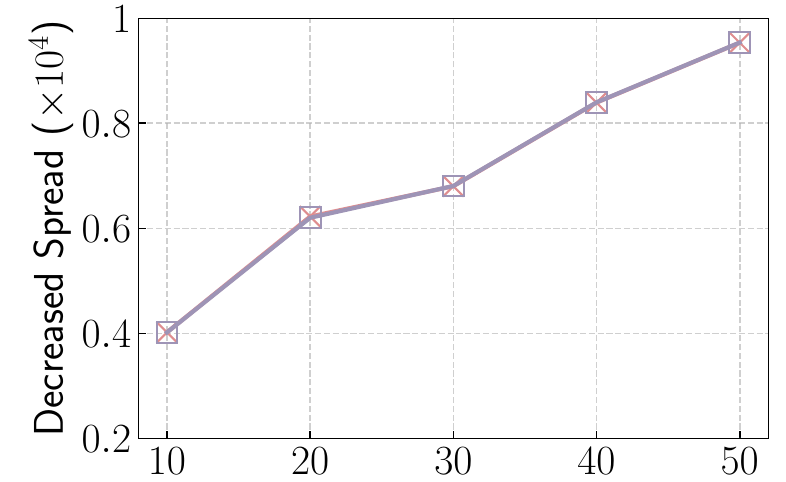}
            \subcaption{\myblue{Youtube (Vary $|S|$)}}
        \end{subfigure}
    \end{minipage}
    \begin{minipage}{0.195\textwidth}
        \centering
        \begin{subfigure}{\textwidth}
            \includegraphics[width=\textwidth]{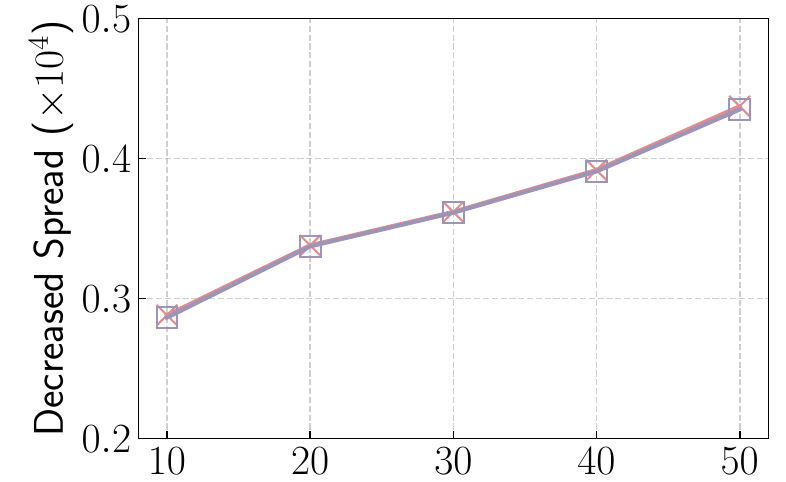}
            \subcaption{\myblue{Pokec (Vary $|S|$)}}
        \end{subfigure}
    \end{minipage}
    \begin{minipage}{0.195\textwidth}
        \centering
        \includegraphics[width=\textwidth]{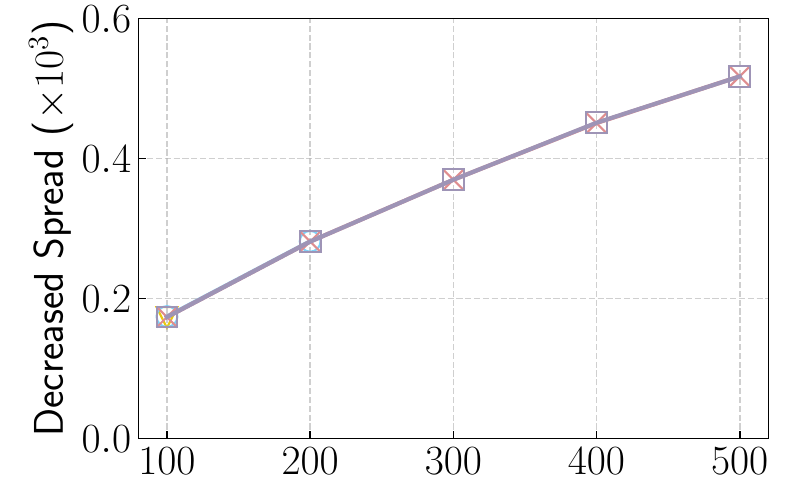}
        \subcaption{\myblue{DBLP (Vary $k$)}}
    \end{minipage} 
    \begin{minipage}{0.195\textwidth}
        \centering
        \includegraphics[width=\textwidth]{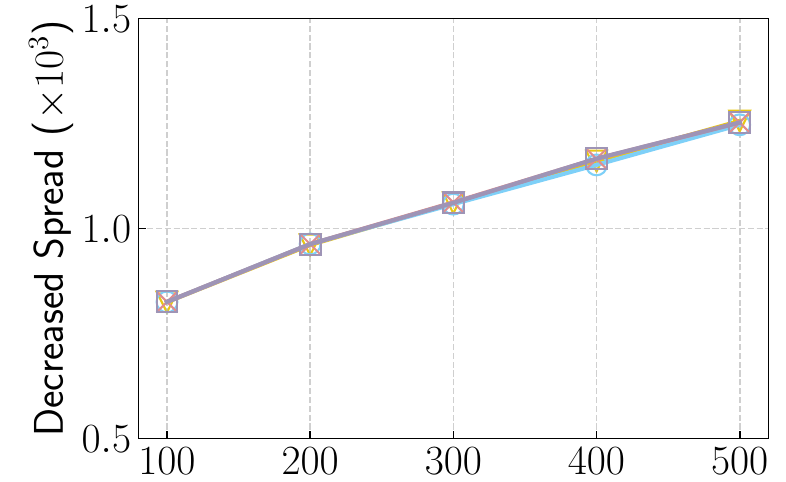}
        \subcaption{\myblue{Twitter (Vary $k$)}}
    \end{minipage} 
    \begin{minipage}{0.195\textwidth}
        \centering
        \begin{subfigure}{\textwidth}
            \includegraphics[width=\textwidth]{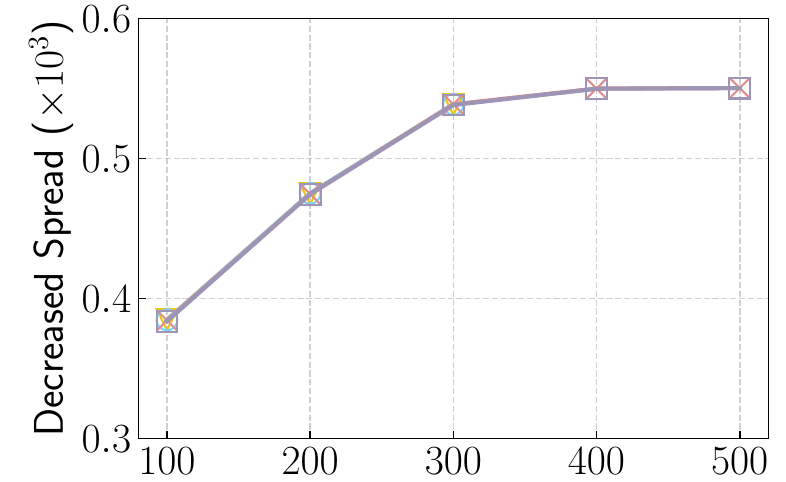}
            \subcaption{\myblue{Stanford (Vary $k$)}}
        \end{subfigure}
    \end{minipage}
    \begin{minipage}{0.195\textwidth}
        \centering
        \begin{subfigure}{\textwidth}
            \includegraphics[width=\textwidth]{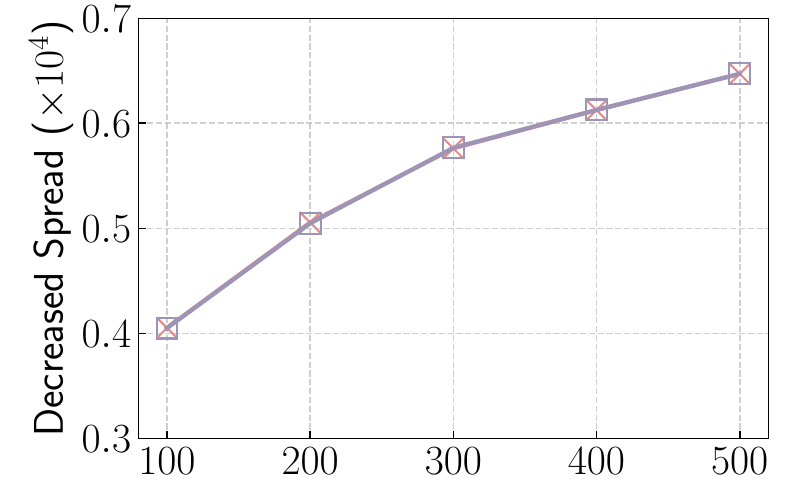}
            \subcaption{\myblue{Youtube (Vary $k$)}}
        \end{subfigure}
    \end{minipage}
    \begin{minipage}{0.195\textwidth}
        \centering
        \begin{subfigure}{\textwidth}
            \includegraphics[width=\textwidth]{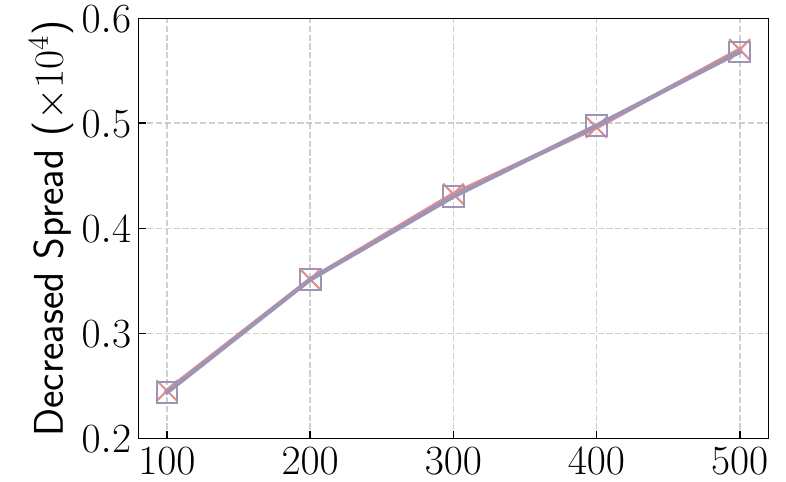}
            \subcaption{\myblue{Pokec (Vary $k$)}}
        \end{subfigure}
    \end{minipage}
	\caption{\myblue{Effectiveness evaluation by varying $|S|$ and $k$}}
	\label{fig:inf_vary_parameter}
\end{figure*}

\fixme{
\begin{table*}[h] 
\caption{\myblue{Decreased spread of GSBM, LSBM and LHGA by varying $k$}}
\label{tab:each algo}
\resizebox{1\linewidth}{!}{ 
\begin{tabular}{c|lll|lll|lll|lll|lll|lll}
\hline
 &
  \multicolumn{3}{c|}{\text{EmailCore}} &
  \multicolumn{3}{c|}{\text{EmailAll}} &
  \multicolumn{3}{c|}{\text{DBLP}} &
    \multicolumn{3}{c|}{\text{Stanford}} &
  \multicolumn{3}{c|}{\text{Youtube}} &
  \multicolumn{3}{c}{\text{Pokec}} \\ 
\text{$k$} &
  \multicolumn{1}{c}{\texttt{GSBM}} &
  \multicolumn{1}{c}{\texttt{LSBM}} &
  \multicolumn{1}{c|}{\texttt{LHGA}} &
    \multicolumn{1}{c}{\texttt{GSBM}} &
  \multicolumn{1}{c}{\texttt{LSBM}} &
  \multicolumn{1}{c|}{\texttt{LHGA}} &
  \multicolumn{1}{c}{\texttt{GSBM}} &
  \multicolumn{1}{c}{\texttt{LSBM}} &
  \multicolumn{1}{c|}{\texttt{LHGA}} &
  \multicolumn{1}{c}{\texttt{GSBM}} &
  \multicolumn{1}{c}{\texttt{LSBM}} &
  \multicolumn{1}{c|}{\texttt{LHGA}} &
  \multicolumn{1}{c}{\texttt{GSBM}} &
  \multicolumn{1}{c}{\texttt{LSBM}} &
  \multicolumn{1}{c|}{\texttt{LHGA}} &
  \multicolumn{1}{c}{\texttt{GSBM}} &
  \multicolumn{1}{c}{\texttt{LSBM}} &
  \multicolumn{1}{c}{\texttt{LHGA}} \\ \hline
\text{$k=10$} &
  \multicolumn{1}{c}{26.726} &
  \multicolumn{1}{c}{31.292} &
  \multicolumn{1}{c|}{\textbf{33.140}} &
  \multicolumn{1}{c}{\textbf{313.66}} &
  \multicolumn{1}{c}{307.643} &
  \multicolumn{1}{c|}{153.96} &
    \multicolumn{1}{c}{49.160} &
    \multicolumn{1}{c}{57.260} &
    \multicolumn{1}{c|}{\textbf{59.350
}} &
  \multicolumn{1}{c}{6921.6} &
  \multicolumn{1}{c}{\textbf{7123.6
}} &
  \multicolumn{1}{c|}{217.35} &
  \multicolumn{1}{c}{\textbf{1150.5}} &
  \multicolumn{1}{c}{705.20} &
  \multicolumn{1}{c|}{980.40} &
  \multicolumn{1}{c}{785.90} &
  \multicolumn{1}{c}{656.50} &
  \multicolumn{1}{c}{\textbf{1000.2}} \\ \hline
\text{$k=20$} &
  \multicolumn{1}{c}{\textbf{55.870}} &
  \multicolumn{1}{c}{52.544} &
  \multicolumn{1}{c|}{49.614} &
    \multicolumn{1}{c}{405.77} &
    \multicolumn{1}{c}{\textbf{455.15}} &
    \multicolumn{1}{c|}{240.40} &
  \multicolumn{1}{c}{65.790} &
  \multicolumn{1}{c}{67.150} &
  \multicolumn{1}{c|}{\textbf{74.980}} &
  \multicolumn{1}{c}{7631.3} &
  \multicolumn{1}{c}{\textbf{8108.0}} &
  \multicolumn{1}{c|}{689.32} &
  \multicolumn{1}{c}{\textbf{1962.6}} &
  \multicolumn{1}{c}{1795.9} &
  \multicolumn{1}{c|}{1295.1} &
  \multicolumn{1}{c}{\textbf{1506.9}} &
  \multicolumn{1}{c}{1302.4} &
  \multicolumn{1}{c}{1111.6} \\ \hline
\text{$k=30$} &
  \multicolumn{1}{c}{66.408} &
  \multicolumn{1}{c}{70.629} &
  \multicolumn{1}{c|}{\textbf{72.460}} &
  \multicolumn{1}{c}{538.61} &
  \multicolumn{1}{c}{\textbf{582.23}} &
  \multicolumn{1}{c|}{332.03} &
  \multicolumn{1}{c}{93.000} &
  \multicolumn{1}{c}{\textbf{98.620}} &
  \multicolumn{1}{c|}{80.850} &
  \multicolumn{1}{c}{8319.7} &
  \multicolumn{1}{c}{\textbf{8991.2}} &
  \multicolumn{1}{c|}{940.73} &
  \multicolumn{1}{c}{2246.9} &
  \multicolumn{1}{c}{2284.5} &
  \multicolumn{1}{c|}{\textbf{2310.5}} &
  \multicolumn{1}{c}{\textbf{1922.9}} &
  \multicolumn{1}{c}{1591.2} &
  \multicolumn{1}{c}{1545.3} \\ \hline
  \text{$k=40$} &
  \multicolumn{1}{c}{81.816} &
  \multicolumn{1}{c}{\textbf{83.650}} &
  \multicolumn{1}{c|}{80.379} &
  \multicolumn{1}{c}{588.76} &
  \multicolumn{1}{c}{\textbf{643.87}} &
  \multicolumn{1}{c|}{455.00} &
  \multicolumn{1}{c}{114.45} &
  \multicolumn{1}{c}{\textbf{116.67}} &
  \multicolumn{1}{c|}{92.290} &
  \multicolumn{1}{c}{8699.7} &
  \multicolumn{1}{c}{\textbf{9688.3}} &
  \multicolumn{1}{c|}{1302.6} &
  \multicolumn{1}{c}{\textbf{2743.6}} &
  \multicolumn{1}{c}{2694.1} &
  \multicolumn{1}{c|}{2406.6} &
  \multicolumn{1}{c}{2071.9} &
  \multicolumn{1}{c}{\textbf{2193.1}} &
  \multicolumn{1}{c}{1580.2 } \\ \hline
  \text{$k=50$} &
  \multicolumn{1}{c}{90.907} &
  \multicolumn{1}{c}{\textbf{102.50}} &
  \multicolumn{1}{c|}{97.018} &
  \multicolumn{1}{c}{664.88} &
  \multicolumn{1}{c}{\textbf{714.80}} &
  \multicolumn{1}{c|}{505.38} &
  \multicolumn{1}{c}{128.21} &
  \multicolumn{1}{c}{136.95} &
  \multicolumn{1}{c|}{\textbf{143.41}} &
  \multicolumn{1}{c}{9169.3} &
  \multicolumn{1}{c}{\textbf{10168}} &
  \multicolumn{1}{c|}{1991.2} &
  \multicolumn{1}{c}{3046.4} &
  \multicolumn{1}{c}{\textbf{3271.5 }} &
  \multicolumn{1}{c|}{2551.7} &
  \multicolumn{1}{c}{\textbf{2734.1}} &
  \multicolumn{1}{c}{2590.6} &
  \multicolumn{1}{c}{2179} \\ \hline
\end{tabular}
}
\end{table*}
}

\myblue{\myparagraph{Varying $|S|$ and $k$} 
Figures \ref{fig:time_vary_parameter}(a)-\ref{fig:time_vary_parameter}(e) report the response time by varying $|S|$ on the largest five datasets.
As shown, \sand and \sandl always run faster than AG and GR on all five datasets.} \sandl is faster than \sand due to the relaxed requirements.
\myblue{Generally, \sandl can achieve at least an order of magnitude speedup compared with AG and GR on all five datasets under all settings except Stanford.
In particular, AG and GR cannot complete within a day on Youtube and Pokec}, while \sandl takes only a few thousand seconds to complete. 
\myblue{In addition, with $|S|=40$ on Twitter, the response time of \sandl (resp. \sand) is 35.7 (resp. 166.7) seconds while AG (resp. GR) needs 9228.6 (resp. 9240.6) seconds to complete.
This is because AG and GR need to regenerate a large number of realizations in each iteration.}
On Twitter and Stanford, 
the gap of \sand and \sandl is smaller than that on the other three datasets. The reason is that the large number of nodes in these three datasets may result in a relatively small value for $\mathbb{E}[I_G(s)]/|V'_s|$. Under such circumstances, the Global Sampling is more prone to generating empty LRR sets, which do not contribute to the blocker set construction, and increase computation overhead.
Figures \ref{fig:time_vary_parameter}(f)-\ref{fig:time_vary_parameter}(j) present the response time by varying $k$, where similar trends can be observed.  
\myblue{In addition, AG and GR cannot finish within one day when $k$ becomes larger on all five datasets except Twitter and it is seen that \sand and \sandl are orders of magnitude faster than AG and GR on Twitter dataset.} 
Note that, the response time for our algorithms may either increase or decrease by increasing $|S|$ and $k$.
This is because the time cost of LSBM and GSBM mainly depends on when the stopping condition is reached, as we stated in Section \ref{sec:boundmaxi}. 
The larger $|S|$ ($k$) may make it easier or harder to reach the stopping condition.
Moreover, observe that on Stanford with $k=500$, \sand and \sandl only take very short time to finish. The reason lies in that our proposed algorithms can return the outgoing neighbors of $S$ as the blocker set directly when $k$ is large enough.


\subsection{Effectiveness Evaluation}

\myparagraph{Results on all the datasets} In Figure \ref{fig:inf_line}, we demonstrate the effectiveness of the proposed techniques on all the datasets with the default settings.  
\myblue{The results indicate that our solutions exhibit similar performance to AG and GR in terms of decreased spread on all the datasets.
Note that, on large datasets, \ie Youtube and Pokec, AG and GR cannot finish in a reasonable time. Therefore, the corresponding value is not shown in the figure.}

\myblue{\myparagraph{Varying $|S|$ and $k$}} 
Figure \ref{fig:inf_vary_parameter} shows the decreased spread by varying the parameters $|S|$ and $k$ on the largest five datasets. 
\myblue{It can be observed that \sand achieves the similar decreased spread to AG and GR under different parameter settings}, which reflects the effectiveness of our proposed method. In addition, the performance of \sand and \sandl is also very close. This validates that our proposed lower bound is very tight and close to the objective function, and 
the relaxation in theoretical parameters does not affect much on the real performance.
For all the algorithms, the decreased spread grows with the increase of $|S|$, since more nodes could be protected. Similarly, the decreased spread increases when $k$ becomes larger, since more blockers are selected.

\myblue{\myparagraph{Effectiveness evaluation of LSBM, GSBM and LHGA}
In Table \ref{tab:each algo}, we report the decreased spread of LSBM, GSBM, LHGA by varying $k$ on six datasets with different scales.
Recall that, \sand returns the best solution regarding the effectiveness among the results obtained by its three components.
As can be seen, within \sand, each sub-algorithm possesses the potential to outperform others in achieving the largest decrease in spread.
This demonstrates the effectiveness of the three algorithms.
In most cases, it can be seen that LSBM performs the best in terms of effectiveness, indicating that the scenario where multiple blockers jointly protect a node is not very common.
Therefore, the proposed lower bound is relatively tight w.r.t. the objective function.
Additionally, our trivial heuristic method, LHGA, exhibits the best decrease in spread in a few cases.
That is, if we remove LHGA from \sand, its empirical accuracy would decrease in such cases.
It is worth noting that LHGA almost incurs no time overhead. 
Therefore, LHGA can enhance the effectiveness of our algorithms without sacrificing efficiency.

\subsection{Sensitivity Evaluation by Varying $\epsilon$}

Figures \ref{fig:vary_ep}(a)-\ref{fig:vary_ep}(c) show the response time by varying $\epsilon$ on DBLP, Youtube and Pokec.
As observed, the time cost of \sand and \sandl is reduced with the increase of $\epsilon$, since a larger $\epsilon$ leads to smaller sample size. Figures \ref{fig:vary_ep}(d)-\ref{fig:vary_ep}(e) present the corresponding decreased spread by varying $\epsilon$.
It can be seen that with the increase of $\epsilon$, the decreased spread of \sand and \sandl slightly drops due to the smaller sample size.
For example, on Pokec, the decreased spread of \sand (resp. \sandl) is 2553.3 (resp. 2553.3) with $\epsilon=0.1$, and the reduction in spread of \sand (resp. \sandl) is 2513.6 (resp. 2511.2) with $\epsilon=0.5$.

\subsection{Approximation Quality Evaluation}
\label{sec:appro}
The approximation guarantee of SandIMIN can be seen in Eq. (\ref{data-ratio}). Obviously, the exact approximation ratio is intractable to compute, as $B^o$ and $B_L^o$ are unknown. 
According to \cite{DBLP:journals/tkde/WangYPCC17}, $\frac{(1-\gamma)^2}{(1+\gamma)^2}\cdot(1-1/e-\epsilon)\cdot\frac{\hat{D}_s(B_U)}{\hat{D}_s^U(B_U)}$ is a computable lower bound of the approximation ratio for SandIMIN.
Note that, no computable approximation ratio is provided for SandIMIN-, since it does not return $B_U$.
The average lower bound of the approximation ratio (\ie empirical approximation ratio) of SandIMIN on all the datasets (averaged over $k=10,50$ and $100$) is shown in Figure \ref{fig:ratio}. We report the results under two different settings of $\epsilon$ and $\gamma$.
In particular, on all the datasets, the empirical approximation ratio exceeds $20\%$ with $\epsilon=0.2,\gamma=0.1$, and exceeds $30\%$ with $\epsilon=0.1,\gamma=0.05$.
As a data-dependent approximation guaranteed algorithm, SandIMIN performs well in terms of approximation, as the ratios closely approximate the value of $1-1/e-\epsilon$ ($\approx$ 53.2$\%$ when $\epsilon=0.1$).}

\begin{figure}[t]
\centering
    \begin{minipage}{0.156\textwidth}
        \centering
        \includegraphics[width=\textwidth]{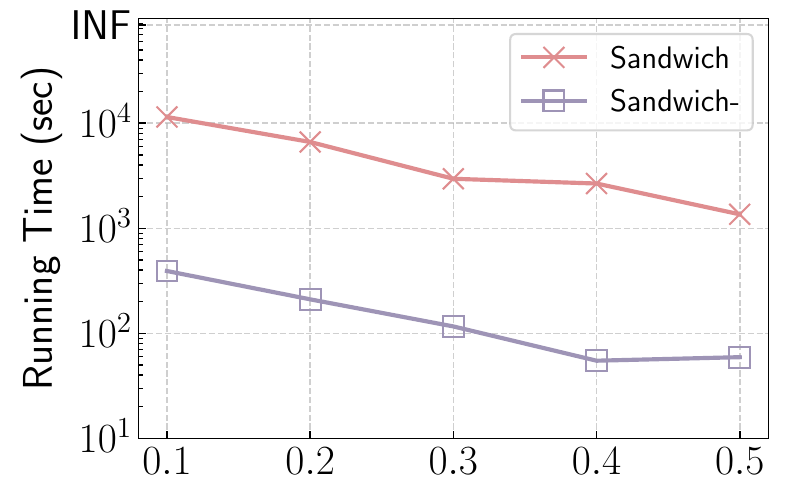}
        \subcaption{\small \myblue{DBLP (Vary $\epsilon$)}}
    \end{minipage} 
    \begin{minipage}{0.156\textwidth}
        \centering
        \includegraphics[width=\textwidth]{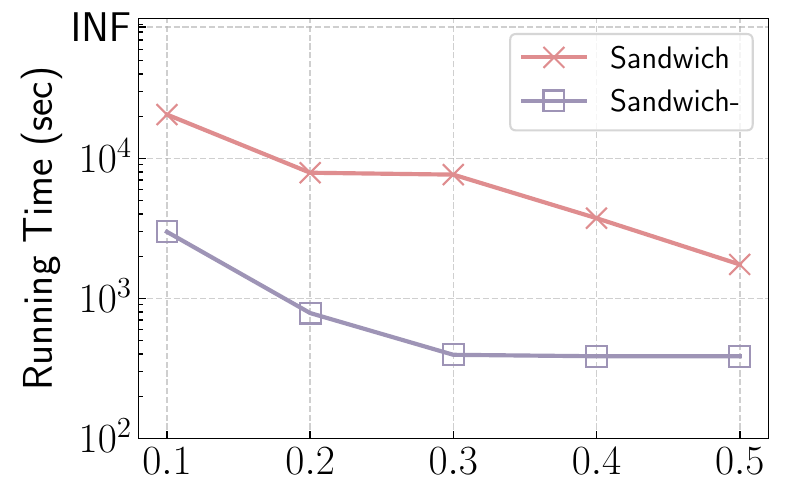}
        \subcaption{\myblue{Youtube (Vary $\epsilon$)}}
    \end{minipage} 
    \begin{minipage}{0.156\textwidth}
        \centering
        \begin{subfigure}{\textwidth}
            \includegraphics[width=\textwidth]{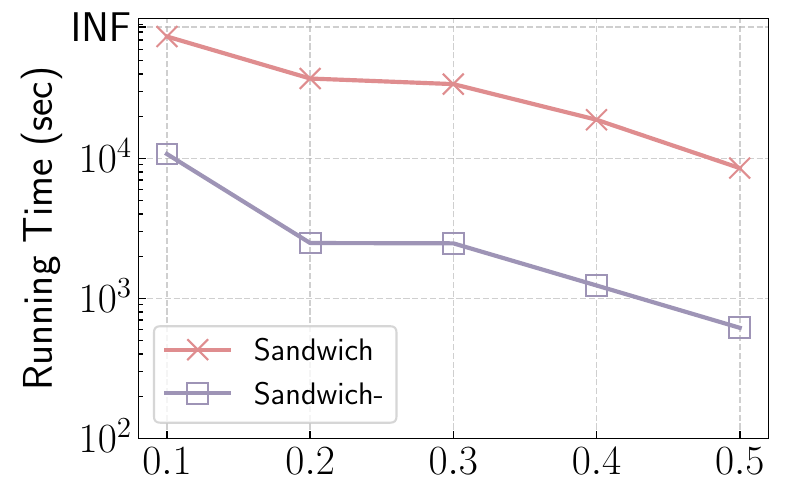}
            \subcaption{\myblue{Pokec (Vary $\epsilon$)}}
        \end{subfigure}
    \end{minipage}
    \begin{minipage}{0.156\textwidth}
        \centering
        \includegraphics[width=\textwidth]{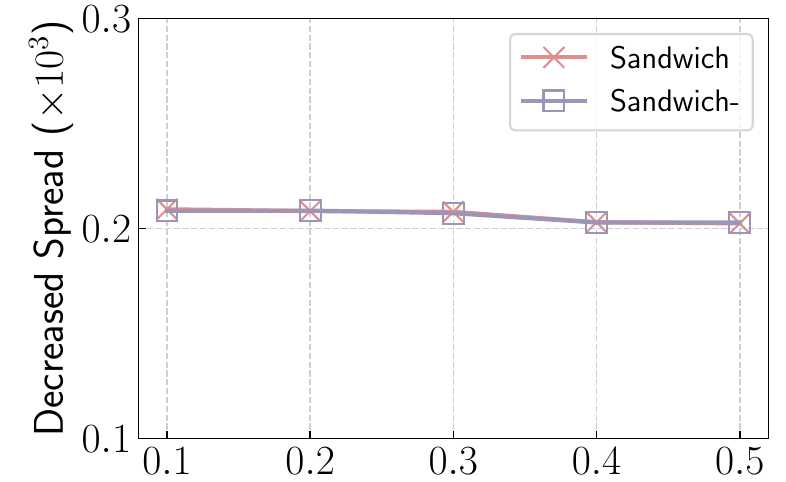}
        \subcaption{\small \myblue{DBLP (Vary $\epsilon$)}}
    \end{minipage} 
    \begin{minipage}{0.156\textwidth}
        \centering
        \includegraphics[width=\textwidth]{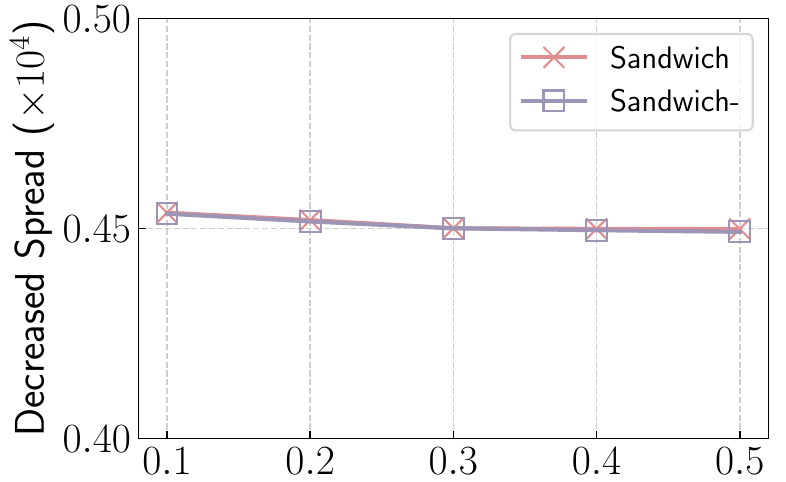}
        \subcaption{\myblue{Youtube (Vary $\epsilon$)}}
    \end{minipage} 
    \begin{minipage}{0.156\textwidth}
        \centering
        \begin{subfigure}{\textwidth}
            \includegraphics[width=\textwidth]{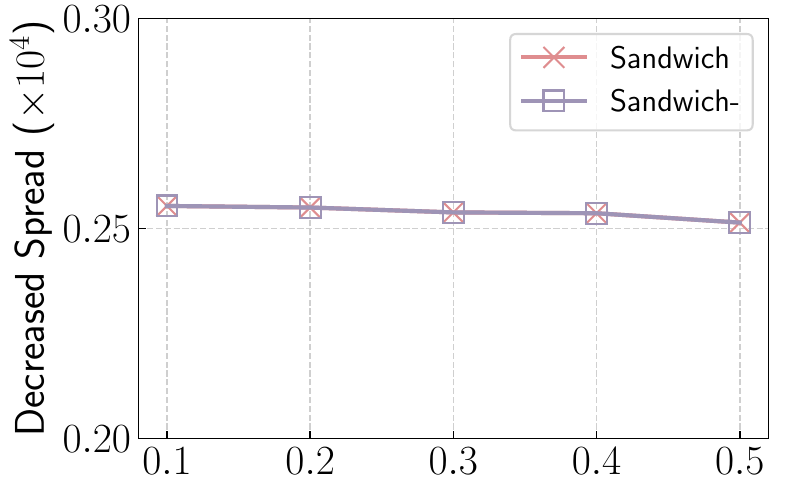}
            \subcaption{\myblue{Pokec (Vary $\epsilon$)}}
        \end{subfigure}
    \end{minipage}
	\caption{\myblue{Efficiency and effectiveness evaluation by varying $\epsilon$}}
	\label{fig:vary_ep}
    %
\end{figure}

\section{Related Work}\label{sec:related}


\myparagraph{Influence Maximization} 
The Influence Maximization (IM) problem, which aims to find a set of users with the largest expected spread, is a fundamental problem in graph analysis.
Kempe \etal~\cite{DBLP:conf/kdd/KempeKT03} first formulate IM problem and propose independent cascade (IC) as well as linear threshold (LT) models. 
In addition, they utilize a greedy algorithm that returns $(1-1/e-\epsilon)$-approximate solution. 
Afterwards, a large number of work~\cite{DBLP:conf/kdd/ChenWW10,DBLP:conf/kdd/ChenWY09,DBLP:conf/icdm/ChenYZ10,DBLP:conf/icdm/GoyalLL11,DBLP:conf/icdm/JungHC12,DBLP:conf/aaai/OhsakaAYK14,DBLP:conf/sigir/ChengSHCC14,DBLP:conf/cikm/ChengSHZC13,DBLP:conf/asunam/TangTY17,tang2018efficient,DBLP:journals/tkde/WangZZLC17} focuses on the development of heuristic algorithms to reduce the computation overhead. 
However, such solutions return the results without theoretical guarantees. 
To address this issue, Brogs \etal~\cite{DBLP:conf/soda/BorgsBCL14} propose the \textit{Reverse Influence Sampling} (RIS) technique, which reduces the time complexity to almost linear to the graph size. 
Subsequently, many RIS-based algorithms~\cite{DBLP:conf/sigmod/TangXS14,DBLP:conf/sigmod/TangSX15,DBLP:conf/sigmod/NguyenTD16,DBLP:journals/pvldb/HuangWBXL17,DBLP:conf/sigmod/TangTXY18} that ensure $(1-1/e-\epsilon)$-approximations with reduced computation overhead are proposed. 
In addition, the variants of IM have also been extensively studied, such as considering  time aspect~\cite{DBLP:conf/cikm/Khan16,DBLP:conf/icdm/LiuCXZ12} and location aspect~\cite{DBLP:conf/sigmod/LiCFTL14,DBLP:journals/tkde/WangZZL17,DBLP:conf/icde/WangZZL16}.
In~\cite{DBLP:journals/pvldb/LuCL15}, Lu \etal propose a Sandwich approximation strategy to solve non-submodular competitive and complementary IM. 
Then Wang \etal~\cite{DBLP:journals/tkde/WangYPCC17} extend Sandwich to the case where the
objective function is intractable. 
Huang \etal~\cite{DBLP:journals/pvldb/HuangLBS22} study influence maximization in closed social networks, which is non-submodular.
They resort to the influence lower bounds, which are computed with the Restricted Maximum Probability Path (RMPP) model~\cite{DBLP:conf/www/ChaojiRRB12}, to preserve submodularity.
Recently, Hu \etal~\cite{hu2023triangular} study the triangular stability maximization problem, which is also non-submodular.
They propose the Joint Baking Algorithmic Framework with theoretical guarantees to solve this problem. 

\begin{figure}
	\centering
	\includegraphics[width=0.97\linewidth]{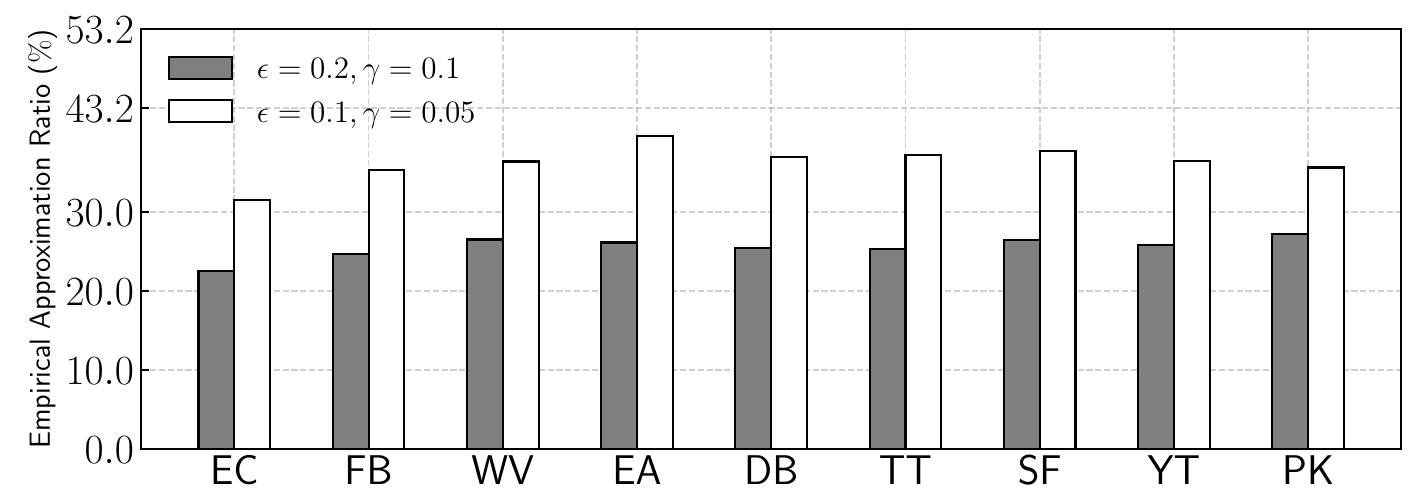}
	\caption{\myblue{Empirical approximation ratio on all the datasets}}
	\label{fig:ratio}
\end{figure}

\myparagraph{Influence Minimization} 
As an important variant of IM, Influence minimization (IMIN) has attracted great attention due to its wide applications~\cite{morozov2009swine,allcott2017social}.
Generally, the existing solutions for IMIN can be divided into three categories: positive information spreading, edge blocking and node blocking. 
Budak \etal~\cite{DBLP:conf/www/BudakAA11} first propose to spread positive information to achieve IMIN under the IC model. 
Under this strategy, the objective is shown to be monotonic and submodular. 
Based on these properties, Tong \etal~\cite{DBLP:conf/infocom/TongWGLLLD17,DBLP:conf/infocom/TongD19} later design the sampling based methods and present the algorithms that provide $(1-1/e-\epsilon)$-approximations. 
Simpson \etal~\cite{DBLP:journals/pvldb/SimpsonLH22} study a time-sensitive variant of IMIN via spreading positive information.
In \cite{DBLP:conf/pricai/KimuraSM08} and \cite{DBLP:conf/kdd/KhalilDS14}, IMIN via edge blocking is investigated under the LT and IC model, respectively.
Wang \etal~\cite{DBLP:conf/aaai/WangZCLZX13} first study IMIN through node blocking under the IC model. 
Recently, Xie \etal~\cite{DBLP:conf/icde/0002ZW0023} propose a novel approach based on dominator trees that can effectively estimate the decrease in influence of misinformation after blocking a specific node. 
However, the above solutions are all heuristic. 
Prior to our work, there is still no approximation algorithm with theoretical guarantees for IMIN via node blocking.

\section{Conclusion}\label{sec:conclusion}

In this paper, we study the influence minimization problem via node blocking. 
To our best knowledge, we are the first to propose algorithms with approximation guarantees for the problem.
Based on the Sandwich framework, we develop submodular and monotonic lower and upper bounds of the objective function and propose two sampling based methods to estimate the value of bounds.
Besides, we design novel martingale-based concentration bounds and devise two non-trivial algorithms that provide $(1-1/e-\epsilon)$-approximate solutions to maximize our proposed \myblue{bounding functions}.
We also present a lightweight heuristic for IMIN.
Finally, our algorithm, \sand, returns the best blocker set among these three solutions and yields a data-dependent approximation guarantee to the IMIN objective.
Extensive experiments over 9 real-world datasets demonstrate the effectiveness and efficiency of our proposed approaches.


\bibliographystyle{ACM-Reference-Format}
\bibliography{sample}
\clearpage
\section*{Appendix}
\label{appendix}

\begin{proof}[\textbf{Proof of Lemma \ref{lemma:lower_submodular}}]
Let ${p}(s,v)$ denote the probability that $v$ is activated by $s$. 
Suppose the blocker set $B=\emptyset$. 
For any $v \in (V \backslash S)$, $D_s^L(B \cup \{v\})-D_s^L(B) \geq p(s,v) \geq 0$, thus $D_s^L(\cdot)$ is monotone nondecreasing.
Let $f(B)= |\cup_{v \in B}N_{\phi}(v)|$. Since a nonnegative linear combination of submodular functions is also submodular, we only need to prove $f(\cdot)$ is submodular for any $\phi$. 
Due to the lower bound ignoring the combination effect, we have $f(S \cup \{x\})-f(S)=|N_{\phi}(x)\backslash N_{\phi}(S)|$ and $f(T \cup \{x\})-f(T)=|N_{\phi}(x)\backslash N_{\phi}(T)|$.
Since $S \subseteq T$, we have $N_{\phi}(S)\subseteq N_{\phi}(T)$. Thus $N_{\phi}(x)\backslash N_{\phi}(S)\supseteq N_{\phi}(x)\backslash N_{\phi}(T)$. 
Accordingly, $f(S \cup \{x\})-f(S)\geq f(T \cup \{x\})-f(T)$, which shows that $D_s^L(\cdot)$ is submodular. The lemma holds.
\end{proof}

\begin{proof}[\textbf{Proof of Lemma \ref{lem:lowerest}}]
    \begin{align*}
        D^L_s(B)&=\sum_{v \in (V\backslash S)}\sum_{\phi \in \Phi(v,B)}p(\phi)\\
        &=\sum_{v \in (V\backslash S)}\sum_{\phi \in \Phi_s(v)}p(\phi)\cdot \min\{|B \cap C_{\phi}(s,v)|,1\}\\
        &=\sum_{\phi \in \Omega}p(\phi) \sum_{v \in (R_{\phi}(s)\backslash S)} \min\{|B \cap C_{\phi}(s,v)|,1\}\\
        &=\sum_{\phi \in \Omega}p(\phi)\cdot Cov_{C_{\phi}^s}(B)\\
        &=\mathbb{E}_{\Phi \sim \Omega}[Cov_{C_{\Phi}^s}(B)]=\mathbb{E}[\frac{Cov_{\mathbb{C}^s}(B)}{|\mathbb{C}^s|}]\text{,}
    \end{align*}
    where $\Phi(v,B)=\{\phi \in \Omega:\exists~u \in B~\text{s.t.}~u \in \bigcap_{i=1}^j P_i(s,v)\}$, 
    $j$ is the number of paths from $s$ to $v$ and $\Phi_s(v)=\{\phi \in \Omega:  v \in R_{\phi}(s) \}$.
\end{proof}

\begin{proof}[\textbf{Proof of Lemma \ref{lem:upper esti}}]
    \begin{align*}
       D_s^U(B) &= \sum_{v \in V_s^{\prime}}\Pr_{\Phi \sim \Omega}[B \cap L_{\Phi}(v) \neq \emptyset]\\
       &= |V_s^{\prime}|\cdot\sum_{v \in V_s^{\prime}}\Pr_{\Phi \sim \Omega}[B \cap L_{\Phi}(v) \neq \emptyset]\cdot \frac{1}{|V_s^{\prime}|}\\
       &=|V_s^{\prime}|\cdot \Pr_{\Phi \sim \Omega,v\sim V^{\prime}_s}[B \cap L_{\Phi}(v) \neq \emptyset]\\
       &=|V_s^{\prime}|\cdot \mathbb{E}[\frac{Cov_{\mathbb{L}}(B)}{|\mathbb{L}|}].
    \end{align*}
    Thus, the lemma holds.
\end{proof}

\begin{proof}[\textbf{Proof of Lemma \ref{new bound}}]
        Martingale is a sequence of random variables $Y_1,Y_2,\ldots,Y_i$ if and only if $E[Y_i\mid Y_1,Y_2,\ldots,Y_{i-1}]=Y_{i-1}$ and $E[|Y_i|]<+\infty$ for any $i$. The following lemma shows a concentration result for martingales.
    
        \begin{lemma} [\cite{DBLP:journals/im/ChungL06}]\label{martingale bound}
        Let $Y_1,Y_2,\ldots,Y_i$ be a martingale, such that $Y_1\leq a,|Y_j-Y_{j-1}|\leq a$ for any $j\in [2,i]$, and 
        \begin{align*}
            \text{Var}[Y_1]+\sum\limits_{j=2}^i\text{Var}[Y_j\mid Y_1,Y_2,\ldots,Y_{j-1}]\leq b \text{,}
        \end{align*}
        where Var$[\cdot]$ is the variance of a random variable. For any $\lambda >0$,
        \begin{align*}
            \Pr[Y_i-\mathbb{E}[Y_i]\geq \lambda] \leq \exp({-\frac{\lambda^2}{\frac{2}{3}a\lambda+2b}}).
        \end{align*}
    \end{lemma}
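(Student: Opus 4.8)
The plan is to prove this as a Bernstein/Freedman-type tail bound via the exponential (Chernoff) method for martingales. Write $Z = Y_i - \mathbb{E}[Y_i]$; since $(Y_j)$ is a martingale, $\mathbb{E}[Y_i]=\mathbb{E}[Y_1]$, so $Z = X_1 + \sum_{j=2}^i X_j$ with $X_1 = Y_1-\mathbb{E}[Y_1]$ and $X_j = Y_j - Y_{j-1}$ for $j\ge 2$. Each increment is conditionally centered: $\mathbb{E}[X_j \mid \mathcal{F}_{j-1}]=0$, where $\mathcal{F}_{j-1}=\sigma(Y_1,\ldots,Y_{j-1})$. By Markov's inequality, for every $t>0$ we have $\Pr[Z\ge\lambda]\le e^{-t\lambda}\,\mathbb{E}[e^{tZ}]$, so the task reduces to controlling the moment generating function $\mathbb{E}[e^{tZ}]$.

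I would bound the MGF one step at a time. The key single-variable estimate is a Bennett-type bound: for a conditionally centered $X$ with $X\le a$ almost surely and $\mathrm{Var}[X\mid\mathcal{F}]=v$, one has $\mathbb{E}[e^{tX}\mid\mathcal{F}]\le \exp\!\big(\tfrac{e^{ta}-1-ta}{a^2}\,v\big)$ for $t>0$. This follows from the pointwise inequality $e^{tx}\le 1+tx+\tfrac{e^{ta}-1-ta}{a^2}x^2$ valid for $x\le a$ (because $x\mapsto (e^{tx}-1-tx)/x^2$ is nondecreasing), upon taking conditional expectation, using $\mathbb{E}[X\mid\mathcal{F}]=0$, and applying $1+u\le e^u$. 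The hypotheses $Y_1\le a$ and $|Y_j-Y_{j-1}|\le a$ supply the required one-sided increment bounds $X_j\le a$ needed for $t>0$. Peeling factors from $\mathbb{E}[e^{tZ}]=\mathbb{E}\big[e^{t\sum_{j<i}X_j}\,\mathbb{E}[e^{tX_i}\mid\mathcal{F}_{i-1}]\big]$ and iterating the tower property yields $\mathbb{E}[e^{tZ}]\le \exp\!\big(\tfrac{e^{ta}-1-ta}{a^2}\big(\mathrm{Var}[Y_1]+\sum_{j=2}^i\mathrm{Var}[Y_j\mid\mathcal{F}_{j-1}]\big)\big)\le \exp\!\big(\tfrac{e^{ta}-1-ta}{a^2}\,b\big)$, where the final step invokes the variance-sum hypothesis bounded by $b$.

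Combining with Markov gives $\Pr[Z\ge\lambda]\le \exp\!\big(-t\lambda+\tfrac{b}{a^2}(e^{ta}-1-ta)\big)$ for all $t>0$. I would then optimize over $t$: setting the derivative to zero gives $t=\tfrac1a\ln(1+\tfrac{a\lambda}{b})$ and recovers Bennett's inequality $\Pr[Z\ge\lambda]\le\exp\!\big(-\tfrac{b}{a^2}h(\tfrac{a\lambda}{b})\big)$ with $h(u)=(1+u)\ln(1+u)-u$. The main calculational obstacle is the final elementary analytic step, namely the inequality $h(u)\ge\tfrac{u^2}{2(1+u/3)}$ for $u\ge 0$; substituting $u=a\lambda/b$ collapses the exponent to exactly $-\tfrac{\lambda^2}{2b+\frac23 a\lambda}$, the claimed bound. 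Thus the delicate parts are the monotonicity argument underpinning the single-step Bennett estimate and verifying the constant $\tfrac13$ (equivalently the $\tfrac23 a\lambda$ term) in this concluding inequality, while everything else is routine bookkeeping with the tower property and the variance-sum hypothesis.
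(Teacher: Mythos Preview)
The paper does not prove this lemma at all; it merely quotes it as a known result from Chung and Lu and then applies it to the specific martingale $M_i=\sum_{j\le i}(x_j-p)$. So there is no ``paper's own proof'' to compare against.

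Your argument is the standard and correct route to this Freedman/Bernstein-type bound: Chernoff's method, the one-step Bennett MGF estimate $\mathbb{E}[e^{tX}\mid\mathcal F]\le\exp\!\big(\tfrac{e^{ta}-1-ta}{a^2}\,v\big)$ for conditionally centered $X\le a$, iteration via the tower property, optimization in $t$, and the elementary inequality $h(u)\ge u^2/(2+\tfrac{2u}{3})$. All of the ingredients you list are right, including the monotonicity of $x\mapsto (e^{tx}-1-tx)/x^2$ that underlies the single-step bound.

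One small gap worth noting: for the first increment you set $X_1=Y_1-\mathbb{E}[Y_1]$ and say ``$Y_1\le a$ supplies $X_1\le a$.'' That deduction needs $\mathbb{E}[Y_1]\ge 0$, which the lemma as stated does not assume. In the paper's application this is harmless because $\mathbb{E}[M_1]=0$, and in Chung--Lu's original formulation the martingale is implicitly centered; but if you want the statement exactly as written here, you should either observe that one may assume $\mathbb{E}[Y_1]=0$ without loss of generality (replace $Y_j$ by $Y_j-\mathbb{E}[Y_1]$, which preserves all hypotheses and the conclusion), or flag the need for $\mathbb{E}[Y_1]\ge 0$. Apart from this wrinkle the proposal is complete.
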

    
    Let $C_1^s,C_2^s,\ldots,C_{\theta}^s$ denote $\theta$ random CP sequences in $\mathbb{C}^s$, $\frac{Cov_{C_i^s}(B)}{\mathbb{E}[I_G(s)]}=x_i$ and $\frac{D^L_s(B)}{\mathbb{E}[I_G(s)]}=p$. Since each $C_{i}^s$ is generated from a random realization, for any $i \in [1,\theta]$, we have
    \begin{align*}
        \mathbb{E}[x_i\mid x_1,x_2,\ldots,x_{i-1}]=\mathbb{E}[x_i]=p.
    \end{align*}
    
    Let $M_i=\sum_{j=1}^i(x_j-p)$, thus $\mathbb{E}[M_i]=\sum_{j=1}^i\mathbb{E}[(x_j-p)]=0$, and
    \begin{align*}
        \mathbb{E}[M_i\mid M_1,M_2,\ldots,M_{i-1}]&=\mathbb{E}[M_{i-1}+(x_i-p)\mid M_1,M_2,\ldots,M_{i-1}]\\
        &=M_{i-1}+\mathbb{E}[x_i]-p=M_{i-1}.
    \end{align*}
    Therefore, $M_1,M_2,\ldots,M_{\theta}$ is a martingale.
    Since $x_i,p \in [0,1]$, we have $M_1\leq 1$ and $M_k-M_{k-1} \leq 1$ for any $k \in [2,\theta]$. In addition,  
    \begin{align*}
        &\text{Var}[M_1]+\sum_{k=2}^{\theta}\text{Var}[M_k\mid M_1,M_2,\ldots,M_{k-1}]=\sum_{k=1}^{\theta}\text{Var}[x_k].
    \end{align*}
    We know that $\text{Var}[x_i]=\mathbb{E}[x_i^2]-(\mathbb{E}[x_i])^2$ and $x_i \in [0,1]$, thus we have $\text{Var}[x_i]\leq \mathbb{E}[x_i]-(\mathbb{E}[x_i])^2=p(1-p)$. Therefore,
    \begin{align*}
        &\text{Var}[M_1]+\sum_{k=2}^{\theta}\text{Var}[M_k\mid M_1,M_2,\ldots,M_{k-1}]\leq \theta p(1-p).
    \end{align*}
    Then by Lemma \ref{martingale bound} and $M_{\theta}=\sum_{i=1}^{\theta}(x_i-p)$, we can get that Eq. (\ref{new bound1}) holds. In addition, by applying Lemma \ref{martingale bound} on the martingale $-M_1,-M_2,\ldots,-M_{\theta}$, we can deduce that Eq. (\ref{new bound2}) holds.
\end{proof}

\begin{proof}[\textbf{Proof of Lemma \ref{lemma:imax}}]
    Let $\theta_1=\frac{2\mathbb{E}[I_G(s)]\ln(12/\delta)}{\epsilon_1^2\cdot D_s^L(B^o_L)}$ and when $\theta \geq \theta_1$, we have:
    \begin{align}
        \notag &\Pr[Cov_{\mathbb{C}_1^s}(B_L^o)/\theta \leq (1-\epsilon_1)D_s^L(B_L^o)]\\
        \notag =&\Pr[\frac{Cov_{\mathbb{C}_1^s}(B_L^o)}{\mathbb{E}[I_G(s)]}-\frac{D_s^L(B_L^o)\cdot \theta}{\mathbb{E}[I_G(s)]}\leq -\frac{\epsilon_1\cdot\theta\cdot D_s^L(B_L^o)}{\mathbb{E}[I_G(s)]}]\\
        \leq &\exp{[-\frac{\epsilon_1^2}{2}\cdot \theta \cdot \frac{D_s^L(B_L^o)}{\mathbb{E}[I_G(s)]}]} \leq  \frac{\delta}{12}\text{,}
    \end{align}
    where $\epsilon_1 < \epsilon$ and the first inequality is due to Eq. (\ref{new bound2}). Since $D_s^L(\cdot)$ is submodular, we can also deduce that $Cov_{\mathbb{C}_1^s}(\cdot)$ is submodular, thus $Cov_{\mathbb{C}_1^s}(B_L) \geq (1-1/e)Cov_{\mathbb{C}_1^s}(B_L^o)$ and we have:
    \begin{align}
        \Pr[\frac{Cov_{\mathbb{C}_1^s}(B_L)}{\theta} \geq (1-1/e)(1-\epsilon_1)D_s^L(B_L^o)]\geq 1-\frac{\delta}{12}\label{eq15}.
    \end{align}
    Let $\theta_2=\frac{(2-2/e)\mathbb{E}[I_G(s)]\ln(\binom{n-|S|}{k}\cdot12/\delta)}{D_s^L(B_L^o)(\epsilon-(1-1/e)\epsilon_1)^2}$ and $\epsilon_2=\epsilon-(1-1/e)\cdot \epsilon_1$, when $\theta \geq \theta_2$, we assume $D_s^L(B_L) < (1-1/e-\epsilon)D_s^L(B^o_L)$, thus:
    \begin{align}
        \notag &\Pr[\frac{Cov_{\mathbb{C}_1^s}(B_L)}{\theta}-D_s^L(B_L)\geq \epsilon_2 D_s^L(B_L^o)]\\
        \notag =&\Pr[\frac{Cov_{\mathbb{C}_1^s}(B_L)}{\mathbb{E}[I_G(s)]}-\frac{\theta \cdot D_s^L(B_L)}{\mathbb{E}[I_G(s)]} \geq \frac{\theta \cdot \epsilon_2 \cdot D_s^L(B_L^o)}{\mathbb{E}[I_G(s)]}]\\
        \notag \leq &\exp(-\frac{\epsilon_2^2\cdot D_s^L(B_L^o)^2\cdot \theta}{(2D_s^L(B_L)+\frac{2}{3}D_s^L(B_L^o)\cdot \epsilon_2)\cdot \mathbb{E}[I_G(s)]})\\
         \leq &\exp(-\frac{\epsilon_2^2\cdot D_s^L(B_L^o) \cdot \theta}{(2(1-1/e-\epsilon)+\frac{2}{3}\epsilon_2)\cdot \mathbb{E}[I_G(s)]})
        \leq  \frac{\delta}{12\cdot \binom{n-|S|}{k}}\label{eq16}\text{,}
    \end{align}
    where the first inequality is due to Eq. (\ref{new bound1}). According to Eq. (\ref{eq15}), Eq. (\ref{eq16}) and there exists at most $\binom{n-|S|}{k}$ blocker sets, when $\theta \geq \max\{\theta_1,\theta_2\}$, we can get:
    \begin{align}
        \Pr[D_s^L(B_L) \geq (1-1/e-\epsilon)D_s^L(B^o_L)]\geq 1-\frac{\delta}{6}\text{,}
    \end{align}
    which contradicts the assumption. Based on Line 4 of Algorithm \ref{alg:LSBM}, we have $\Pr\left[(1-\beta)\mathbb{E}[I_G(s)]\leq\hat{I}_G(s)\leq(1+\beta)\mathbb{E}[I_G(s)]\right]\geq 1-\delta/6$. In addition, $\text{OPT}^L$ is the lower bound of $D_s^L(B_L^o)$, thus we have $\theta_{\max} \geq \max\{\theta_1,\theta_2\}$ holds at least $1-\delta/6$ probability. Accordingly, by the union bound, the proof of this lemma is done.
\end{proof}

\begin{proof}[\textbf{Proof of Lemma \ref{lemma:bounds}}]
    \begin{align*}
        &\Pr\left[\frac{D_s^L(B_L)}{\mathbb{E}[I_G(s)]}<\left((\sqrt{\frac{Cov_{\mathbb{C}^s_2}(B_L)}{\mathbb{E}[I_G(s)]}+\frac{2a_1}{9}}-\sqrt{\frac{a_1}{2}})^2-\frac{a_1}{18}\right)\cdot\frac{1}{|\mathbb{C}_2^s|}\right]\\
        =&\Pr\left[\frac{Cov_{\mathbb{C}^s_2}(B_L)}{\mathbb{E}[I_G(s)]}-\frac{D_s^L(B_L)\cdot|\mathbb{C}_2^s|}{\mathbb{E}[I_G(s)]}>\sqrt{2a_1\frac{D_s^L(B_L)\cdot|\mathbb{C}_2^s|}{\mathbb{E}[I_G(s)]}+\frac{a_1^2}{9}}+\frac{a_1}{3}\right]\\
        \leq & \exp{(-a_1)}=\frac{\delta}{3i_{\max}}.
    \end{align*}
     Let $x=Cov_{\mathbb{C}^s_2}(B_L)/\mathbb{E}[I_G(s)]$ and $f(x)=(\sqrt{x+\frac{2a_1}{9}}-\sqrt{\frac{a_1}{2}})^2-\frac{a_1}{18})\cdot\frac{1}{|\mathbb{C}_2^s|}$. When $x \geq 5a_1/18$, $f(x)$ monotonically increasing, otherwise $f(x)$ decreases monotonically. Thus, when $x\geq Cov_{\mathbb{C}^s_2}(B_L)\cdot(1-\beta)/\hat{I}_G(s)\geq 5a_1/18$, $f(x)$ monotonically increasing and we have:
     \begin{align}
        \notag \frac{D_s^L(B_L)}{\mathbb{E}[I_G(s)]}\geq f(x) \geq f(\frac{Cov_{\mathbb{C}^s_2}(B_L)\cdot(1-\beta)}{\hat{I}_G(s)}).
     \end{align}
     Similarly, when $x \leq Cov_{\mathbb{C}^s_2}(B_L)\cdot(1+\beta)/\hat{I}_G(s)\leq 5a_1/18$, $f(x)$ decreases monotonically and we have:
     \begin{align*}
          \frac{D_s^L(B_L)}{\mathbb{E}[I_G(s)]}\geq f(x) \geq f(\frac{Cov_{\mathbb{C}^s_2}(B_L)\cdot(1+\beta)}{\hat{I}_G(s)}).
     \end{align*}
     Therefore, $\sigma^L(B_L)$ is the lower bound of $\frac{D_s^L(B_L)}{\mathbb{E}[I_G(s)]}$ with at least $1-\frac{\delta}{3i_{\max}}$ probability. Similar to the proof of Lemma 5.2 in \cite{DBLP:conf/sigmod/TangTXY18}, we can get $Cov_{\mathbb{C}^s_1}^u(B_L^o)$ is the upper bound of $Cov_{\mathbb{C}^s_1}(B_L^o)$, where $B_i$ be a set containing the $i$ nodes that are selected in the first $i$ iterations of the Procedure Max-Coverage and $maxMC(B_i,k)$ denotes the set of $k$ nodes with the $k$ largest marginal coverage in $\mathbb{C}^s_1$ with respect to $B_i$. Thus, 
     \begin{align*}
         &\Pr\left[ \frac{D_s^L(B_L^o)}{\mathbb{E}[I_G(s)]}> \left(\sqrt{\frac{Cov_{\mathbb{C}^s_1}^u(B_L^o)}{\mathbb{E}[I_G(s)]}+\frac{a_2}{2}}+\sqrt{\frac{a_2}{2}}\right)^2\cdot\frac{1}{|\mathbb{C}_1^s|}\right]\\
         \leq&\Pr\left[ \frac{D_s^L(B_L^o)}{\mathbb{E}[I_G(s)]}> \left(\sqrt{\frac{Cov_{\mathbb{C}^s_1}(B_L^o)}{\mathbb{E}[I_G(s)]}+\frac{a_2}{2}}+\sqrt{\frac{a_2}{2}}\right)^2\cdot\frac{1}{|\mathbb{C}_1^s|}\right]\\
         \leq&\Pr\left[ \frac{Cov_{\mathbb{C}^s_1}(B_L^o)}{\mathbb{E}[I_G(s)]}-\frac{|\mathbb{C}_1^s|\cdot D_s^L(B_L^o)}{\mathbb{E}[I_G(s)]}<-\sqrt{2a_2\cdot\frac{|\mathbb{C}_1^s|\cdot D_s^L(B_L^o)}{\mathbb{E}[I_G(s)]}}\right]\\
         \leq&\exp{(-a_2)}=\frac{\delta}{3i_{\max}}.
     \end{align*}
    Let $y=\frac{Cov_{\mathbb{C}^s_1}^u(B_L^o)}{\mathbb{E}[I_G(s)]}$ and $g(y)=\left(\sqrt{y+\frac{a_2}{2}}+\sqrt{\frac{a_2}{2}}\right)^2\cdot\frac{1}{|\mathbb{C}_1^s|}$. $g(y)$ always increases monotonically and we have:
    \begin{align*}
        \frac{D_s^L(B_L^o)}{\mathbb{E}[I_G(s)]} \leq g(y) \leq g(\frac{Cov_{\mathbb{C}^s_1}^u(B_L^o)\cdot(1+\beta)}{\hat{I}_G(s)}).
    \end{align*}
    Therefore, $\sigma^U(B^o_L)$ is the upper bound of $\frac{D_s^L(B_L^o)}{\mathbb{E}[I_G(s)]}$ with at least $1-\frac{\delta}{3i_{\max}}$ probability.
\end{proof}

\begin{proof}[\textbf{Proof of Theorem \ref{theorem:time complexity_LSBM}}]
    The primary time overhead of LSBM lies in (i) estimating the influence of seeds of misinformation; (ii) generating the CP sequences; (iii) executing the Max-Coverage algorithm and computing $\sigma^L(B_L)$ and $\sigma^U(B_L^o)$ in all iterations.

    As shown in \cite{DBLP:journals/tkde/ZhuTTWL23}, the time complexity of influence estimation is $\mathcal{O}(\frac{m\cdot\ln{1/\delta}}{\beta^2})$. Then. we analyze the number of CP sequences generated by LSBM.

    Let $\epsilon_1=\epsilon$, $\tilde{\epsilon}_1=\epsilon/e$, 
    $\hat{\epsilon}_1=\sqrt{\frac{2a_3\mathbb{E}[I_G(s)]}{D_s^L(B_L^o)\theta_1}}$, $\epsilon_2=\sqrt{\frac{2a_3\mathbb{E}[I_G(s)]}{D_s^L(B_L)\theta_2}}$, 
    $\tilde{\epsilon}_2=(\sqrt{\frac{2a_3D_s^L(B_L)\theta_2}{\mathbb{E}[I_G(s)]}+\frac{a_3^2}{9}}+\frac{a_3}{3})\cdot\frac{\mathbb{E}[I_G(s)]}{D_s^L(B_L)\theta_2}$,
    $a_3=c\ln{(\frac{3i_{\max}}{\delta})}$ for any $c \geq 1$.
    In addition, let 
    $\theta_a=\frac{2\mathbb{E}[I_G(s)]\ln{\frac{6}{\delta}}}{(1-1/e-\epsilon)\epsilon_1^2D_s^L(B_L^o)}$, $\theta_b=\frac{(2+2\tilde{\epsilon}_1/3)\mathbb{E}[I_G(s)]\ln{\frac{6\binom{n-|S|}{k}}{\delta}}}{\tilde{\epsilon}_1^2 D_s^L(B_L^o)}$, 
    $\theta_c=\frac{27\mathbb{E}[I_G(s)] \ln{\frac{3i_{\max}}{\delta}}(1+\beta)^2}{(1-1/e-\epsilon)(\epsilon_1+\epsilon_1\beta-2\beta)^2 D_s^L(B_L^o)}$, $\theta_d=\frac{5\ln{\frac{3i_{\max}}{\delta}} \mathbb{E}[I_G(s)]}{18(1-\epsilon_1)(1-1/e-\epsilon)D_s^L(B_L^o)}$ and $\theta^{\prime}=\max\{\theta_a,\theta_b,\theta_c,\theta_d\}$. 
    It is easy to verify that
    \begin{align}
        \theta^{\prime}=\mathcal{O}\left(\frac{(k\ln{(n-|S|)}+\ln{1/\delta})\mathbb{E}[I_G(s)]}{(\epsilon+\epsilon\beta-2\beta)^2 D_s^L(B_L^o)}\right).
    \end{align}
    When $\theta_1=\theta_2=c\theta^{\prime}$, based on Eq. (\ref{new bound1}) and Eq. (\ref{new bound2}), we have:
    \begin{align}
        &\Pr[\frac{Cov_{\mathbb{C}_1^s}(B_L^o)}{\theta_1}<(1-\epsilon_1)\cdot D_s^L(B_L^o)]\leq \left(\frac{\delta}{6}\right)^c\text{,}\label{time_eq_1}\\
        &\Pr[\frac{Cov_{\mathbb{C}_2^s}(B_L)}{\theta_2}<(1-\epsilon_1)\cdot D_s^L(B_L)]\leq \left(\frac{\delta}{6}\right)^c\text{,}\label{time_eq_6}\\
        &\Pr[\frac{Cov_{\mathbb{C}_1^s}(B_L)}{\theta_1}>D_s^L(B_L)+\tilde{\epsilon}_1\cdot D_s^L(B_L^o)]\leq \left(\frac{\delta}{6\binom{n-|S|}{k}}\right)^c\text{,}\label{time_eq_2}\\
        &\Pr[\frac{Cov_{\mathbb{C}_1^s}(B_L^o)}{\theta_1}<(1-\hat{\epsilon}_1)\cdot D_s^L(B_L^o)] \leq \left(\frac{\delta}{3i_{\max}}\right)^c\text{,}\label{time_eq_3}\\
        &\Pr[\frac{Cov_{\mathbb{C}_2^s}(B_L)}{\theta_2}<(1-\epsilon_2)\cdot D_s^L(B_L)]\leq \left(\frac{\delta}{3i_{\max}}\right)^c\text{,}\label{time_eq_4}\\
        &\Pr[\frac{Cov_{\mathbb{C}_2^s}(B_L)}{\theta_2}>(1+
        \tilde{\epsilon}_2)\cdot D_s^L(B_L)]\leq \left(\frac{\delta}{3i_{\max}}\right)^c\label{time_eq_5}.
    \end{align}
    Specially, when $\theta_1\geq c\theta_a$, Eq. (\ref{time_eq_1}) holds; when $\theta_1 \geq c\theta_b$, Eq. (\ref{time_eq_2}) holds. Eq. (\ref{time_eq_3})-Eq. (\ref{time_eq_5}) are obtained based on the definition of $\hat{\epsilon}_1,\epsilon_2$ and $\tilde{\epsilon}_2$.
    When the event in Eq. (\ref{time_eq_1}) and Eq. (\ref{time_eq_2}) not happen, we get:
        \begin{align}\label{eq25}
            D_s^L(B_L)\geq (1-1/e-\epsilon)D_s^L(B_L^o).
        \end{align}
    Based on Eq. (\ref{eq25}), when when $\theta_1\geq c\theta_a$, Eq. (\ref{time_eq_6}) holds.
    Since $B_L$ is not independent of $\mathbb{C}_1^s$ and there are at most $\binom{n-|S|}{k}$ blocker sets, based on the union bound, the probability that none of the events in Eq. (\ref{time_eq_1})-Eq. (\ref{time_eq_5}) happens is at least:
    \begin{align*}
        1-\left((\frac{\delta}{6})^c\cdot2+(\frac{\delta}{6\binom{n-|S|}{k}})^c\cdot\binom{n-|S|}{k}+i_{\max}\cdot(\frac{\delta}{3i_{\max}})^c\right)\geq 1-\delta^c.
    \end{align*}
    
    And we have:
    \begin{align*}
        &\hat{\epsilon}_1 \leq \sqrt{\frac{2(1-1/e-\epsilon)(\epsilon_1+\epsilon_1\beta-2\beta)^2}{27(1+\beta)^2}}\leq \frac{\epsilon_1+\epsilon_1\beta-2\beta}{3(1+\beta)}\text{,}\\
        &\epsilon_2 \leq \sqrt{\frac{2(1-1/e-\epsilon)D_s^L(B_L^o)(\epsilon_1+\epsilon_1\beta-2\beta)^2}{27(1+\beta)^2 D_s^L(B_L)}}\leq \frac{\epsilon_1+\epsilon_1\beta-2\beta}{3(1+\beta)}\text{,}\\
        &\tilde{\epsilon}_2\leq \sqrt{\frac{(\epsilon_1+\epsilon_1\beta-2\beta)^2(2+2\tilde{\epsilon}_2/3)}{27(1+\beta)^2}}
        \leq \frac{\epsilon_1+\epsilon_1\beta-2\beta}{3(1+\beta)}.
    \end{align*}
    In addition, when the event in Eq. (\ref{time_eq_3}) not happen, we have:
    \begin{align*}
        \left(\sqrt{\frac{Cov_{\mathbb{C}^s_1}(B_L^o)\cdot(1+\beta)}{\hat{I}_G(s)}+\frac{a_3}{2}}+\sqrt{\frac{a_3}{2}}\right)^2\cdot\frac{1}{\theta_1}\geq \frac{D_s^L(B_L^o)}{\mathbb{E}[I_G(s)]}.
    \end{align*}
    Thus, it holds that:
    \begin{align*}
        1-\hat{\epsilon}_1&=1-\sqrt{\frac{2a_3\mathbb{E}[I_G(s)]}{D_s^L(B_L^o)\theta_1}}\\
        &\leq1-\frac{\sqrt{2a_3}}{\sqrt{Cov_{\mathbb{C}^s_1}(B_L^o)\cdot(1+\beta)/\hat{I}_G(s)+\frac{a_3}{2}}+\sqrt{\frac{a_3}{2}}}\\
        &\leq \frac{Cov^u_{\mathbb{C}^s_1}(B_L^o)\cdot(1+\beta)/\hat{I}_G(s)}{(\sqrt{Cov^u_{\mathbb{C}^s_1}(B_L^o)\cdot(1+\beta)/\hat{I}_G(s)+\frac{a_3}{2}}+\sqrt{\frac{a_3}{2}})^2}.
    \end{align*}
    Since $a_2=\ln{(\frac{3i_{\max}}{\delta})} \leq a_3$, based on Line 19 of Algorithm \ref{alg:LSBM}, thus
    \begin{align}
        \notag\sigma^U(B_L^o) &\leq \left(\sqrt{\frac{Cov_{\mathbb{C}^s_1}(B_L^o)\cdot(1+\beta)}{\hat{I}_G(s)}+\frac{a_3}{2}}+\sqrt{\frac{a_3}{2}}\right)^2\cdot\frac{1}{\theta_1}\\
        &\leq \frac{Cov^u_{\mathbb{C}^s_1}(B_L^o)\cdot(1+\beta)/\hat{I}_G(s)}{1-\hat{\epsilon}_1}\cdot\frac{1}{\theta_1}.\label{eq:time_up}
    \end{align}
    When $\theta_2 \geq \theta_d$ and according to Eq. (\ref{time_eq_6}), we have:
    \begin{align*}
        \frac{Cov_{\mathbb{C}^s_2}(B_L)}{\mathbb{E}[I_G(s)]}\geq \frac{\theta_2 \cdot(1-\epsilon_1)D_s^L(B_L)}{\mathbb{E}[I_G(s)]}\geq \frac{5a_1}{18}.
    \end{align*}
    Thus, $f(x)=(\sqrt{x+\frac{2a_1}{9}}-\sqrt{\frac{a_1}{2}})^2-\frac{a_1}{18})$ monotonically increasing. In addition, when the event in Eq. (\ref{time_eq_5}) does not happen, we have:
    \begin{align*}
        \left((\sqrt{\frac{Cov_{\mathbb{C}^s_2}(B_L)\cdot(1-\beta)}{\hat{I}_G(s)}+\frac{2a_3}{9}}-\sqrt{\frac{a_3}{2}})^2-\frac{a_3}{18}\right)\cdot\frac{1}{\theta_2}\leq \frac{D_s^L(B_L)}{\mathbb{E}[I_G(s)]}.
    \end{align*}
    Thus, it holds that:
    \begin{align*}
       &\frac{Cov_{\mathbb{C}^s_2}(B_L)\cdot(1-\beta)}{\hat{I}_G(s)}-\frac{\tilde{\epsilon}_2 D_s^L(B_L)\cdot\theta_2}{\mathbb{E}[I_G(s)]}\\
       =&\frac{Cov_{\mathbb{C}^s_2}(B_L)\cdot(1-\beta)}{\hat{I}_G(s)}-(\sqrt{2a_3\frac{D_s^L(B_L)\cdot \theta_2}{\mathbb{E}[I_G(s)]}+\frac{a_3^2}{9}}+\frac{a_3}{3})\\
       \leq &\frac{Cov_{\mathbb{C}^s_2}(B_L)\cdot(1-\beta)}{\hat{I}_G(s)}-(\sqrt{\frac{2Cov_{\mathbb{C}^s_2}(B_L)(1-\beta)a_3}{\hat{I}_G(s)}+\frac{4a_3^2}{9}}-\frac{2a_3}{3})\\
       =&\left(\sqrt{\frac{Cov_{\mathbb{C}^s_2}(B_L)\cdot(1-\beta)}{\hat{I}_G(s)}+\frac{2a_3}{9}}-\sqrt{\frac{a_3}{2}}\right)^2-\frac{a_3}{18}.
    \end{align*}
    Since $a_1 \leq a_3$, based on the Line 14 of Algorithm \ref{alg:LSBM}, thus
    \begin{align}
        \sigma^L(B_L) \geq \frac{Cov_{\mathbb{C}^s_2}(B_L)\cdot(1-\beta)}{\hat{I}_G(s)\cdot\theta_2}-\frac{\tilde{\epsilon}_2 D_s^L(B_L)}{\mathbb{E}[I_G(s)]}.\label{eq:time_low}
    \end{align}
    Putting Eq. (\ref{eq:time_up}) and Eq. (\ref{eq:time_low}) together, when none of the events in Eq. (\ref{time_eq_1})-Eq. (\ref{time_eq_5}) happens, we have:
    \begin{align*}
        \frac{\sigma^L(B_L)}{\sigma^U(B_L^o)}&\geq \frac{\frac{Cov_{\mathbb{C}^s_2}(B_L)\cdot(1-\beta)}{\hat{I}_G(s)\cdot\theta_2}-\frac{\tilde{\epsilon}_2 D_s^L(B_L)}{\mathbb{E}[I_G(s)]}}{ \frac{Cov^u_{\mathbb{C}^s_1}(B_L^o)\cdot(1+\beta)/\hat{I}_G(s)}{1-\hat{\epsilon}_1}\cdot\frac{1}{\theta_1}}\\
        &\geq \frac{\theta_1\left(\frac{Cov_{\mathbb{C}^s_2}(B_L)\cdot(1-\beta)}{(1+\beta)\cdot\theta_2}-\tilde{\epsilon}_2\cdot D_s^L(B_L)\right)(1-\hat{\epsilon}_1)}{Cov^u_{\mathbb{C}^s_1}(B_L^o)}\\
        &\geq \frac{\theta_1\left(\frac{1-\beta}{1+\beta}\cdot(1-\epsilon_2)-\tilde{\epsilon}_2\right)\cdot D_s^L(B_L) (1-\hat{\epsilon}_1)}{Cov^u_{\mathbb{C}^s_1}(B_L^o)}\\
        &\geq \frac{\theta_1\left(\frac{1-\beta}{1+\beta}\cdot(1-\epsilon_2)-\tilde{\epsilon}_2\right)\cdot D_s^L(B_L) (1-\hat{\epsilon}_1)}{Cov_{\mathbb{C}^s_1}(B_L)}(1-1/e)\\
        &\geq \frac{\theta_1\left(1-\epsilon_2-\frac{2\beta}{1+\beta}-\tilde{\epsilon}_2-\hat{\epsilon}_1\right)\cdot D_s^L(B_L)}{Cov_{\mathbb{C}^s_1}(B_L)}(1-1/e)\\
        &\geq \frac{\theta_1\left(1-\epsilon_1\right)\cdot D_s^L(B_L)}{Cov_{\mathbb{C}^s_1}(B_L)}(1-1/e)\\
        &\geq \frac{\theta_1\left(1-\epsilon_1\right)\cdot \left(Cov_{\mathbb{C}_1^s}(B_L)/\theta_1-\tilde{\epsilon}_1\cdot D_s^L(B_L^o)\right)}{Cov_{\mathbb{C}^s_1}(B_L)}(1-1/e)\\
        &\geq \frac{\left(1-\epsilon_1\right)\cdot \left(Cov_{\mathbb{C}_1^s}(B_L)-\tilde{\epsilon}_1\cdot \frac{Cov_{\mathbb{C}^s_1}(B_L^o)}{(1-\epsilon_1)}\right)}{Cov_{\mathbb{C}^s_1}(B_L)}(1-1/e)\\
        &\geq (1-\epsilon_1)\left(1-\frac{\tilde{\epsilon}_1}{(1-\epsilon_1)(1-1/e)}\right)(1-1/e)\\
        &=1-1/e-\epsilon.
    \end{align*}
    Therefore, when $\theta_1=\theta_2=c\theta^{\prime}$ CP sequences are generated, LSBM does not stop only if at least one of the events in Eq. (\ref{time_eq_1})-Eq. (\ref{time_eq_5}) happens. The probability is at most $\delta^c$.

    Let $j$ be the first iteration in which the number of CP sequences generated by LSBM reaches $\theta^{\prime}$. From this iteration onward, the expected number of CP sequences further generated is at most
    \begin{align*}
        2\cdot \sum_{z \geq j}\theta_0\cdot 2^z \cdot \delta^{2^{z-j}}&=2\cdot2^j\cdot \theta_0 \sum_{z=0}2^z\cdot \delta^{2^z}\\
        &\leq 4 \theta^{\prime}\sum_{z=0}2^{-2^z+z}\\
        &\leq 4 \theta^{\prime}\sum_{z=0} 2^{-z}\leq 8 \theta^{\prime}.
    \end{align*}
    If the algorithm stops before this iteration, there are at most $2\theta^{\prime}$ CP sequences generated. Therefore, the expected number of CP sequences generated is less than $10\theta^{\prime}$, which is 
    \begin{align}
        \mathcal{O}(\frac{(k\ln{(n-|S|)}+\ln{1/\delta})\mathbb{E}[I_G(s)]}{(\epsilon+\epsilon\beta-2\beta)^2 D_s^L(B_L^o)}).
    \end{align}
    We have shown that the expected time required to generate a CP sequence is $\mathcal{O}(m\cdot \alpha(m,n))$. Based on Wald's equation, LSBM requires $\mathcal{O}(\frac{(k\ln{(n-|S|)}+\ln{1/\delta})\mathbb{E}[I_G(s)]m\cdot \alpha(m,n)}{(\epsilon+\epsilon\beta-2\beta)^2 D_s^L(B_L^o)})$ in CP sequences generation. In addition, the total expected time used for executing the Max-Coverage and computing $\sigma^L(B_L)$ and $\sigma^U(B_L^o)$ in all the iterations is
    \begin{align*}
        &\mathcal{O}(k(n-|S|)\cdot i_{\max}+2\mathbb{E}[|\mathbb{C}_1^s\cup\mathbb{C}_2^s|]\cdot\mathbb{E}[|C^s|])\\
        =&\mathcal{O}(\frac{(k\ln{(n-|S|)}+\ln{1/\delta})\mathbb{E}[I_G(s)]}{(\epsilon+\epsilon\beta-2\beta)^2}).
    \end{align*}
    Thus, the theorem holds.
\end{proof}

\begin{lemma}
    Given a blocker set $B$, a seed node $s$ and a fixed number of $\theta$ random LRR sets $\mathbb{L}$. For any $\lambda >0$,
    \begin{align}
        &\Pr[Cov_{\mathbb{L}}(B)-\frac{D^U_s(B)\cdot \theta}{|V_s'|} \geq \lambda] \leq \exp({-\frac{\lambda^2}{\frac{2D_s^U(B)}{|V_s'|}\cdot \theta+\frac{2}{3}\lambda}}),\label{29}\\
        &\Pr[Cov_{\mathbb{L}}(B)-\frac{D^U_s(B)\cdot \theta}{|V_s'|} \leq -\lambda] \leq \exp({-\frac{\lambda^2}{\frac{2D_s^U(B)}{|V_s'|}\cdot \theta}})\label{30}.
    \end{align}
\end{lemma}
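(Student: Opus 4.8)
The plan is to mirror the martingale argument behind Lemma~\ref{new bound}, exploiting a structural simplification specific to Global Sampling: every random LRR set $L_\Phi(v)$ is generated from a \emph{single} node $v$ drawn uniformly from $V_s^{\prime}$, so $\min\{|B\cap L_\Phi(v)|,1\}$ is a $\{0,1\}$-valued random variable. Writing $L_1(v),\dots,L_\theta(v)$ for the $\theta$ independent LRR sets in $\mathbb{L}$ and $x_i=\min\{|B\cap L_i(v)|,1\}$, we thus have $Cov_{\mathbb{L}}(B)=\sum_{i=1}^{\theta}x_i$ with $x_i$ i.i.d.\ and, by Lemma~\ref{lem:upper esti}, common mean $p:=\mathbb{E}[x_i]=\Pr_{\Phi\sim\Omega,v\sim V_s^{\prime}}[B\cap L_\Phi(v)\neq\emptyset]=\frac{D_s^U(B)}{|V_s^{\prime}|}$.

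First I would set $M_i=\sum_{j=1}^{i}(x_j-p)$ and verify, exactly as in the proof of Lemma~\ref{new bound}, that $M_1,\dots,M_\theta$ is a martingale with $\mathbb{E}[M_i]=0$. Since $x_j\in[0,1]$ we get $M_1\le 1$ and $|M_k-M_{k-1}|\le 1$, and since $\mathrm{Var}[x_j]=\mathbb{E}[x_j^2]-p^2=p(1-p)$,
\begin{align*}
\mathrm{Var}[M_1]+\sum_{k=2}^{\theta}\mathrm{Var}[M_k\mid M_1,\dots,M_{k-1}]=\theta p(1-p)\le \theta p=\frac{D_s^U(B)\cdot\theta}{|V_s^{\prime}|}.
\end{align*}
Applying the Chung--Lu martingale inequality (Lemma~\ref{martingale bound}) with $a=1$ and $b=\theta p(1-p)$ to $M_\theta=Cov_{\mathbb{L}}(B)-\frac{D_s^U(B)\theta}{|V_s^{\prime}|}$, and then using $2b\le \frac{2D_s^U(B)\theta}{|V_s^{\prime}|}$, yields Eq.~(\ref{29}).

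For Eq.~(\ref{30}) I would apply the corresponding lower-tail bound to the martingale $-M_1,\dots,-M_\theta$: for increments bounded below with bounded conditional variance, the sharper one-sided form of the Bernstein/Chung--Lu inequality gives $\Pr[M_\theta\le -\lambda]\le \exp(-\lambda^2/(2b))$ without the linear $\tfrac{2}{3}\lambda$ term, and substituting $b\le \frac{D_s^U(B)\theta}{|V_s^{\prime}|}$ gives the claim. This last step is the only part that is not a verbatim translation of the CP-sequence proof, and it is where I expect the subtlety to lie: obtaining the improved denominator in Eq.~(\ref{30}) needs the asymmetric (lower-tail) form of the concentration inequality---the same refinement that distinguishes Eq.~(\ref{new bound2}) from Eq.~(\ref{new bound1})---rather than a direct application of Lemma~\ref{martingale bound} as stated; alternatively one can give a short direct moment-generating-function estimate using $1-t\le e^{-t}$ and $e^{-t}\le 1-t+\tfrac{t^2}{2}$ for $t\ge 0$ together with $\mathbb{E}[x_i^2]=p$. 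All remaining manipulations carry over from Lemma~\ref{new bound} with $\mathbb{E}[I_G(s)]$ replaced by $|V_s^{\prime}|$ and $D_s^L$ by $D_s^U$.
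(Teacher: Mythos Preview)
Your proposal is correct and follows essentially the same route as the paper: set $x_i=Cov_{L_i}(B)$, form the martingale $M_i=\sum_{j\le i}(x_j-p)$ with $p=D_s^U(B)/|V_s'|$, check the increment and variance bounds, and apply the Chung--Lu inequality (Lemma~\ref{martingale bound}) for the upper tail and its one-sided form on $-M_1,\dots,-M_\theta$ for the lower tail. Your explicit remark that Eq.~(\ref{30}) requires the sharper asymmetric lower-tail variant (which the paper invokes only implicitly) and your observation that here the $x_i$ are genuinely $\{0,1\}$-valued so $\mathrm{Var}[x_i]=p(1-p)$ exactly are both accurate and slightly more careful than the paper's write-up.
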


\begin{proof}
    Let $L_1,L_2,\ldots,L_{\theta}$ denote $\theta$ random LRR sets in $\mathbb{L}$, $Cov_{L_i}(B)=x_i$ and $\frac{D^U_s(B)}{|V_s'|}=p$. Since each $L_{i}$ is generated from a random realization, for any $i \in [1,\theta]$, we have
    \begin{align*}
        \mathbb{E}[x_i\mid x_1,x_2,\ldots,x_{i-1}]=\mathbb{E}[x_i]=p.
    \end{align*}
    
    Let $M_i=\sum_{j=1}^i(x_j-p)$, thus $\mathbb{E}[M_i]=\sum_{j=1}^i\mathbb{E}[(x_j-p)]=0$, and
    \begin{align*}
        \mathbb{E}[M_i\mid M_1,M_2,\ldots,M_{i-1}]&=\mathbb{E}[M_{i-1}+(x_i-p)\mid M_1,M_2,\ldots,M_{i-1}]\\
        &=M_{i-1}+\mathbb{E}[x_i]-p=M_{i-1}.
    \end{align*}
    Therefore, $M_1,M_2,\ldots,M_{\theta}$ is a martingale.
    Since $x_i,p \in [0,1]$, we have $M_1\leq 1$ and $M_k-M_{k-1} \leq 1$ for any $k \in [2,\theta]$. In addition,  
    \begin{align*}
        &\text{Var}[M_1]+\sum_{k=2}^{\theta}\text{Var}[M_k\mid M_1,M_2,\ldots,M_{k-1}]=\sum_{k=1}^{\theta}\text{Var}[x_k].
    \end{align*}
    We know that $\text{Var}[x_i]=\mathbb{E}[x_i^2]-(\mathbb{E}[x_i])^2$ and $x_i \in [0,1]$, thus we have $\text{Var}[x_i]\leq \mathbb{E}[x_i]-(\mathbb{E}[x_i])^2=p(1-p)$. Therefore,
    \begin{align*}
        &\text{Var}[M_1]+\sum_{k=2}^{\theta}\text{Var}[M_k\mid M_1,M_2,\ldots,M_{k-1}]\leq \theta p(1-p).
    \end{align*}
    Then by Lemma \ref{martingale bound} and $M_{\theta}=\sum_{i=1}^{\theta}(x_i-p)$, we can get that Eq. (\ref{29}) holds. In addition, by applying Lemma \ref{martingale bound} on the martingale $-M_1,-M_2,\ldots,-M_{\theta}$, we can deduce that Eq. (\ref{30}) holds.
\end{proof}

\begin{proof}[\textbf{Proof of Theorem 5.7}]
The proof of Theorem 5.7 is similar to that of Theorem 5.5 and the differences between them are that we use Lemma 8.1 as the concentration bounds and we set:

\begin{align*}
    &\theta_1=\frac{2|V_s'|\ln(6/\delta)}{\epsilon_1^2\cdot D_s^U(B^o_U)},\\
    &\theta_2=\frac{(2-2/e)|V_s'|\ln(\binom{n-|S|}{k}\cdot6/\delta)}{D_s^U(B_U^o)(\epsilon-(1-1/e)\epsilon_1)^2},\epsilon_2=\epsilon-(1-1/e)\cdot \epsilon_1,\\
    &\theta_{\max}=\frac{2|V_s^{\prime}|\left( (1-1/e)\sqrt{\ln\frac{6}{\delta}}+\sqrt{(1-1/e)(\ln\binom{|V_s^{\prime}|-|S|}{k}+\ln\frac{6}{\delta})} \right)^2}{\epsilon^2 \text{OPT}^L}\text{,}\\
    &\theta_0=\theta_{\max}\cdot\epsilon^2 \text{OPT}^L/|V_s^{\prime}|\text{,}\\
    &\sigma^L(B_U)=\left((\sqrt{Cov_{\mathbb{L}_2}(B_U)+\frac{2a_1}{9}}-\sqrt{\frac{a_1}{2}})^2-\frac{a_1}{18}\right)\cdot\frac{|V_s^{\prime}|}{|\mathbb{L}_2|}\text{,}\\   &\sigma^U(B_U^o)\leftarrow\left(\sqrt{Cov_{\mathbb{L}_1}^u(B_U^o)+\frac{a_2}{2}}+\sqrt{\frac{a_2}{2}}\right)^2\cdot\frac{|V_s^{\prime}|}{|\mathbb{L}_1|}.
\end{align*}
Note that, $\sigma^L(B_U)$ is the lower bound of $D_s^U(B_U)$ with at least $1-\frac{\delta}{3i_{\max}}$ probability, and $\sigma^U(B^o_U)$ is the upper bound of $D_s^U(B_U^o)$ with at least $1-\frac{\delta}{3i_{\max}}$ probability.
\end{proof}

\begin{proof}[\textbf{Proof of Theorem 5.8}]
The proof of Theorem 5.8 is similar to that of Theorem 5.6 and the differences between them are that we use Lemma 8.1 as the concentration bounds and we set:

    $\epsilon_1=\epsilon$, $\tilde{\epsilon}_1=\epsilon/e$, 
    $\hat{\epsilon}_1=\sqrt{\frac{2a_3|V_s'|}{D_s^U(B_U^o)\theta_1}}$, $\epsilon_2=\sqrt{\frac{2a_3|V_s'|}{D_s^U(B_U)\theta_2}}$, 
    $\tilde{\epsilon}_2=(\sqrt{\frac{2a_3D_s^U(B_U)\theta_2}{|V_s'|}+\frac{a_3^2}{9}}+\frac{a_3}{3})\cdot\frac{|V_s'|}{D_s^U(B_U)\theta_2}$,
    $a_3=c\ln{(\frac{3i_{\max}}{\delta})}$ for any $c \geq 1$.
    In addition, let 
    $\theta_a=\frac{2|V_s'|\ln{\frac{6}{\delta}}}{\epsilon_1^2D_s^U(B_U^o)}$, $\theta_b=\frac{(2+2\tilde{\epsilon}_1/3)|V_s'|\ln{\frac{6\binom{n-|S|}{k}}{\delta}}}{\tilde{\epsilon}_1^2 D_s^U(B_U^o)}$, 
    $\theta_c=\frac{27|V_s'| \ln{\frac{3i_{\max}}{\delta}}}{(1-1/e-\epsilon)\epsilon_1^2 D_s^U(B_U^o)}$, and $\theta^{\prime}=\max\{\theta_a,\theta_b,\theta_c\}$.
    
    Based on the above setting, the expected number of LRR sets generated by GSBM is proved to be $\mathcal{O}(\frac{(k\ln{(|V_s^{\prime}|-|S|)}+\ln{(1/\delta)) |V_s^{\prime}|}}{\epsilon^2\cdot D_s^U(B_U^o)})$.
    Besides, let $\mathcal{L}$ be the expected number of edges being traversed required to generate one LRR set for a randomly selected node in $|V_s^{\prime}|$.
    Then, $\mathcal{L}=\mathbb{E}[p_{L}\cdot m]$, where the expectation is taken over the random choices of $L$.
    $v^{*}$ is chosen from $|V_s^{\prime}|$ according to a probability that is proportional to its in-degree, and we define the random variable:

    \begin{equation*}
    X(v^{*},L)=\left\{
    	\begin{aligned}
    	1 &\quad\text{if $v^{*} \in L$} \\
    	0 &\quad\text{otherwise} \\
    	\end{aligned}
    	\right
     .
    \end{equation*}
    Then, for any fixed $L$, we have:
    \begin{align*}
        p_{L}=\sum_{v^{*}}\left(Pr[v^{*}]\cdot X(v^{*},L)\right).
    \end{align*}
    Consider the fixed $v^{*}$ and vary $L$, we have:
    \begin{align*}
        p_{v^{*}}=\sum_{L}\left(Pr[L]\cdot X(v^{*},L)\right).
    \end{align*}
    In addition, by Lemma \ref{lem:upper esti}, we have $\mathbb{E}[p_{v^{*}}]=D_s^U(v^{*})/|V_s^{\prime}|$. Therefore, 
    \begin{align*}
        \mathcal{L}/m&=\mathbb{E}[p_{L}]=\sum_{L}(Pr[L]\cdot p_{L})\\
        &=\sum_{L}\left(Pr[L]\cdot \sum_{v^{*}}\left(Pr[v^{*}]\cdot X(v^{*},L)\right)\right)\\
        &=\sum_{v^{*}}\left(Pr[v^{*}]\cdot \sum_{L}\left(Pr[L]\cdot X(v^{*},L)\right)\right)\\
        &=\sum_{v^{*}}\left(Pr[v^{*}]\cdot p_{v^{*}}\right)=\mathbb{E}[p_{v^{*}}]=D_s^U(v^{*})/|V_s^{\prime}|.
    \end{align*}
    Thus, we have $\mathcal{L}=D_s^U(v^{*})\cdot\frac{m}{|V_s^{\prime}|}$. Based on Wald's equation, we can show that GSBM requires $\mathcal{O}(\frac{(k\ln{(|V_s^{\prime}|-|S|)}+\ln{(1/\delta)) m}}{\epsilon^2})$ expected time in LRR set generation.
    Furthermore, the total expected time used for executing the Max-Coverage and computing $\sigma^L(B_U)$ and $\sigma^U(B_U^o)$ in all the iterations is
    \begin{align*}
        &\mathcal{O}(k(n-|S|)\cdot i_{\max}+2\mathbb{E}[|\mathbb{L}_1\cup\mathbb{L}_2|]\cdot\mathbb{E}[|L|])\\
        =&\mathcal{O}(\frac{(k\ln{(n-|S|)}+\ln({1/\delta}))|V_s^{\prime}|}{\epsilon^2}).
    \end{align*}
    Thus, the theorem holds.
    \end{proof}

\end{document}